
\documentclass{article}

\usepackage{balance}


\usepackage{amsmath,amsfonts,bm}

















\def\1{\bm{1}}










\DeclareMathAlphabet{\mathsfit}{\encodingdefault}{\sfdefault}{m}{sl}
\SetMathAlphabet{\mathsfit}{bold}{\encodingdefault}{\sfdefault}{bx}{n}













\usepackage{hyperref}
\usepackage{url}

\usepackage{amsfonts}       
\usepackage{nicefrac}       
\usepackage{microtype}      

\usepackage{amsthm}
\newtheorem{theorem}{Theorem}
\newtheorem*{theorem*}{Theorem}

\newtheorem{definition}{Definition}
\newtheorem{lemma}{Lemma}
\usepackage{graphicx}
\usepackage{subfigure}
\usepackage{booktabs} 
\usepackage{xcolor}

\usepackage{pifont}
\newcommand{\cmark}{\textcolor{green}{\ding{51}}}%
\newcommand{\xmark}{\textcolor{red}{\ding{55}}}
\newcommand*\rot{\rotatebox{90}}

\usepackage{wrapfig}
\usepackage{enumerate,enumitem}
\usepackage{footnotehyper}

\usepackage{microtype}
\usepackage{graphicx}
\usepackage{subfigure}
\usepackage{booktabs} 

\usepackage{hyperref}



\usepackage[accepted]{icml2021}

\icmltitlerunning{Systematic Analysis of Cluster Similarity Indices: How to Validate Validation Measures}

\lhead{\thepage}
\begin{document}

\twocolumn[
\icmltitle{Systematic Analysis of Cluster Similarity Indices: \\ How to Validate Validation Measures}



\icmlsetsymbol{equal}{*}

\begin{icmlauthorlist}
\icmlauthor{Martijn G\"{o}sgens}{tue}
\icmlauthor{Alexey Tikhonov}{yandex_de}
\icmlauthor{Liudmila Prokhorenkova}{yandex_ru,mipt,hse}
\end{icmlauthorlist}

\icmlaffiliation{tue}{Eindhoven University of Technology, Eindhoven, The Netherlands}
\icmlaffiliation{yandex_ru}{Yandex, Moscow, Russia}
\icmlaffiliation{yandex_de}{Yandex, Berlin, Germany}
\icmlaffiliation{mipt}{Moscow Institute of Physics and Technology, Moscow, Russia}
\icmlaffiliation{hse}{HSE University, Moscow, Russia}

\icmlcorrespondingauthor{Martijn G\"{o}sgens}{research@martijngosgens.nl}

\icmlkeywords{Machine Learning, ICML}

\vskip 0.3in
]



\printAffiliationsAndNotice{} 

\begin{abstract}
Many cluster similarity indices are used to evaluate clustering algorithms, and choosing the best one for a particular task remains an open problem. We demonstrate that this problem is crucial: there are many disagreements among the indices, these disagreements do affect which algorithms are preferred in applications, and this can lead to degraded performance in real-world systems. We propose a theoretical framework to tackle this problem: we develop a list of desirable properties and conduct an extensive theoretical analysis to verify which indices satisfy them. This allows for making an informed choice: given a particular application, one can first select properties that are desirable for the task and then identify indices satisfying these. Our work unifies and considerably extends existing attempts at analyzing cluster similarity indices: we introduce new properties, formalize existing ones, and mathematically prove or disprove each property for an extensive list of validation indices. This broader and more rigorous approach leads to recommendations that considerably differ from how validation indices are currently being chosen by practitioners. Some of the most popular indices are even shown to be dominated by previously overlooked~ones.
\end{abstract}

\section{Introduction}

\textit{Clustering} is an unsupervised machine learning problem, where the task is to group objects that are similar to each other. In network analysis, a related problem is called \textit{community detection}, where groupings are based on relations between items (links), and the obtained clusters are expected to be densely interconnected. 
Clustering is used across various applications, including text mining, online advertisement, anomaly detection, and many others~\citep{xu2015comprehensive,allahyari2017brief}.

To measure the quality of a clustering algorithm, one can use either internal or external measures. \emph{Internal measures} evaluate the consistency of the clustering result with the data being clustered, e.g., Silhouette, Hubert-Gamma, Dunn indices or modularity in network analysis~\citep{newman2004finding}.
Unfortunately, it is often unclear whether optimizing any of these measures would translate into improved quality in practical applications. \emph{External} (cluster similarity) \emph{measures} compare the candidate partition with a reference one (obtained, e.g., by human assessors). A comparison with such a gold standard partition, when it is available, is more reliable. 
There are many tasks where external evaluation is applicable: text clustering~\citep{amigo2009comparison}, topic modeling~\citep{virtanen2019precision}, Web categorization~\citep{wibowo2002strategies}, face clustering~\citep{wang2019linkage}, news aggregation (see Section~\ref{sec:experiment}), and others. Often, when there is no reference partition available, it is possible to let a group of experts annotate a subset of items and compare the algorithms on this subset.

Dozens of cluster similarity measures exist and which one should be used is a subject of debate~\citep{Lei2017}.
In this paper, we systematically analyze the problem of choosing the best cluster similarity index. We start with a series of experiments demonstrating the importance of the problem (Section~\ref{sec:experiment}). First, we construct simple examples showing the inconsistency of all pairs of different similarity indices. Then, we demonstrate that such disagreements often occur in practice when well-known clustering algorithms are applied to real datasets. Finally, we illustrate how an improper choice of a similarity index can affect the performance of production systems.

So, the question is: how to compare cluster similarity indices and decide which one is best for a particular application?
Ideally, we would want to choose an index for which good similarity scores translate to good real-world performance. However, opportunities to experimentally perform such a \emph{validation of validation indices} are rare, typically expensive, and do not generalize to other applications.
In contrast, we suggest a theoretical approach: we formally define properties that are desirable across various applications, discuss their importance, and formally analyze which similarity indices satisfy them (Section~\ref{sec:general_properties}). 
This theoretical framework allows practitioners to choose the best index based on relevant properties for their applications. In Section~\ref{sec:discussion}, we show how this choice can be made and discuss indices that are expected to be suitable across various applications.

Among the considered properties, \textit{constant baseline} is arguably the most important and non-trivial one. Informally, a sensible index should not prefer one candidate partition over another just because it has too large or too small clusters. Constant baseline is a particular focus of the current research. We develop a rigorous theoretical framework for analyzing this property. In this respect, our work improves over the previous (mostly empirical) research on constant baseline of particular indices~\citep{strehl2002relationship,Albatineh2006,Vinh2009a,Vinh2010a,Lei2017}. 

While the ideas discussed in the paper can be applied to all similarity indices, we provide an additional theoretical characterization of pair-counting ones (e.g., Rand and Jaccard), which gives an analytical background for further studies of pair-counting indices.
We formally prove that among dozens of known indices, only two have all the properties except for being a distance: Correlation Coefficient and Sokal \& Sneath's first index~\citep{Lei2017}. Surprisingly, both indices are rarely used for cluster evaluation. Correlation Coefficient has the additional advantage of being easily convertible to a distance measure via the arccosine function. The obtained index has all the properties except \textit{constant baseline}, which is still satisfied asymptotically.

To sum up, our main contributions are the following:
\vspace{-4pt}
\begin{itemize}[noitemsep, nolistsep]
\item We formally define properties that are desirable across various applications. We analyze an extensive list of cluster similarity indices and mathematically prove or disprove all properties for each of them (Tables~\ref{tab:GeneralRequirements},~\ref{tab:PairCountingRequirements}).
\item We provide a methodology for choosing a suitable validation index for a particular application. In particular, we identify previously overlooked indices that dominate the most popular ones (Section~\ref{sec:discussion}).
\item We formalize the notion of constant baseline and provide a framework for its analysis; for pair-counting indices, we introduce the notion of \emph{asymptotic constant baseline} (Section~\ref{sec:constant_baseline}). We also provide a definition for monotonicity that unifies and extends previous attempts; for pair-counting indices, we introduce a strengthening of monotonicity (Section~\ref{sec:monotonicity}).
\end{itemize}

We believe that our unified and extensive analysis provides a useful tool for researchers and practitioners because research outcomes and application performances are highly dependent on the validation index that is chosen.

\paragraph{Comparison with prior work}
While there are previous attempts to analyze cluster similarity indices, our work unifies and significantly extends them. In particular,~\citet{Lei2017} only consider biases of pair-counting indices,~\citet{Meila2007} analyzes properties of Variation of Information, and \citet{Vinh2010a} analyze information-theoretic indices.

\citet{amigo2009comparison} consider properties desirable for text clustering and mostly focus on monotonicity. Most importantly,~\citet{amigo2009comparison} do not consider constant baseline (the absence of preference towards specific cluster sizes), which we found to be extremely important. In contrast, the problem of indices favoring clusterings with smaller or larger clusters has been identified by, e.g.,~\citet{Albatineh2006,Lei2017,Vinh2009a,Vinh2010a}. This problem is typically addressed by modifying a particular index (or family of indices) such that the obtained measure does not suffer from this problem. However, as we show in this paper, these modifications often lead to other important properties not being satisfied. 
We refer to Appendix~\ref{apx:related_work} for a more detailed comparison to related research.

In the current paper, we introduce new properties, formalize existing ones, and mathematically prove or disprove each property for an extensive list of validation indices. This broader and more rigorous approach leads to conclusions that considerably differ from how validation indices are currently being chosen. 

\section{Cluster Similarity Indices}\label{sec:Background}
We consider clustering $n$ elements numbered from $1$ to $n$, so that a clustering can be represented by a partition of $\{1,\dots,n\}$ into disjoint subsets.
Capital letters $A,B,C$ will be used to name the clusterings, and we will represent them as $A=\{A_1,\dots,A_{k_A}\}$, where $A_i$ is the set of elements belonging to $i$-th cluster.
If a pair of elements $v,w\in V$ lie in the same cluster in $A$, we refer to them as an \textit{intra-cluster pair} of $A$, while \textit{inter-cluster pair} will be used otherwise.
The total number of pairs is denoted by $N=\binom{n}{2}$.
The value that an index $V$ assigns to the similarity between partitions $A$ and $B$ will be denoted by $V(A,B)$.
We now define some of the indices used throughout the paper. A more comprehensive list, together with formal definitions, is given in Appendices~\ref{apx:GeneralIndicesDef},~\ref{apx:PairCounting}.

\textbf{Pair-counting indices \,\,} consider clusterings to be similar if they agree on many pairs. Formally, let $\vec{A}$ be the $N$-dimensional vector indexed by the set of element-pairs, where the entry corresponding to $(v,w)$ equals $1$ if $(v,w)$ is an intra-cluster pair and 0 otherwise. Let $M_{AB}$
be the $N\times2$ matrix  that results from concatenating the two (column-) vectors $\vec{A}$ and $\vec{B}$.
Each row of $M_{AB}$ is either $11,10,01$, or $00$. Let the pair-counts $N_{11},N_{10},N_{01},N_{00}$ denote the number of occurrences for each of these rows in $M_{AB}$. 
\begin{definition}\label{def:pair_counting}
A \emph{pair-counting index} is a similarity index that can be expressed as a function of the pair-counts $N_{11},N_{10},N_{01},N_{00}$.
\end{definition}

Some popular pair-counting indices are \emph{Rand} and \emph{Jaccard}:
\[\mathrm{R} = \frac{N_{11}+N_{00}}{N_{11}+N_{10}+N_{01}+N_{00}}, \,\,\,
\mathrm{J} = \frac{N_{11}}{N_{11}+N_{10}+N_{01}}.\]
\emph{Adjusted Rand} ($\mathrm{AR}$) is an adaptation of Rand ensuring that when $B$ is random, we have $\mathrm{AR}(A,B) = 0$ in expectation. A less widely used index is the Pearson \emph{Correlation Coefficient} ($\mathrm{CC}$) between the binary incidence vectors $\vec{A}$ and $\vec{B}$.\footnote{Spearman and Pearson correlation are equal when comparing binary vectors. Kendall rank correlation for binary vectors coincides with the Hubert index that is linearly equivalent to Rand.} Another index, which we discuss further in more details, is the \emph{Correlation Distance} $\mathrm{CD}(A,B):=\frac{1}{\pi}\arccos{\mathrm{CC}(A,B)}$.
In Appendix~\ref{apx:PairCounting}, we formally define 27 known pair-counting indices and only mention those of particular interest throughout the main text.

\textbf{Information-theoretic indices \,\,} consider clusterings similar if they share a lot of information, i.e., if little information is needed to transform one clustering into the other. Formally, let $H(A):=H(|A_1|/n,\dots,|A_{k_A}|/n)$ be the Shannon entropy of the cluster-label distribution of $A$. Similarly, the joint entropy $H(A,B)$ is defined as the entropy of the distribution with probabilities $(p_{ij})_{i\in[k_A],j\in[k_B]}$, where $p_{ij}=|A_i\cap B_j|/n$. Then, the mutual information of two clusterings can be defined as $M(A,B)=H(A)+H(B)-H(A,B)$. There are multiple ways of normalizing the mutual information:
\begin{align*}
\mathrm{NMI}(A,B)&=\frac{M(A,B)}{(H(A)+H(B))/2},\\ \mathrm{NMI}_{\max}(A,B)&=\frac{M(A,B)}{\max\{H(A),H(B)\}}.
\end{align*}
NMI is known to be biased towards smaller clusters, and several modifications try to mitigate this bias: 
\emph{Adjusted Mutual Information} ($\mathrm{AMI}$) and \emph{Standardized Mutual Information} ($\mathrm{SMI}$) subtract the expected mutual information from $M(A,B)$ and normalize the obtained value~\citep{Vinh2009a}, while \emph{Fair NMI} (FNMI) multiplies NMI by a penalty factor $e^{-|k_A-k_B|/k_A}$~\citep{Amelio2015}.

\section{Motivating Experiments}\label{sec:experiment}

\begin{figure}
    \vskip 0.1in
    \centering
    \subfigure[FNMI, R, AR, J, D, W, FMeasure, BCubed]{\includegraphics[width = 0.2\textwidth]{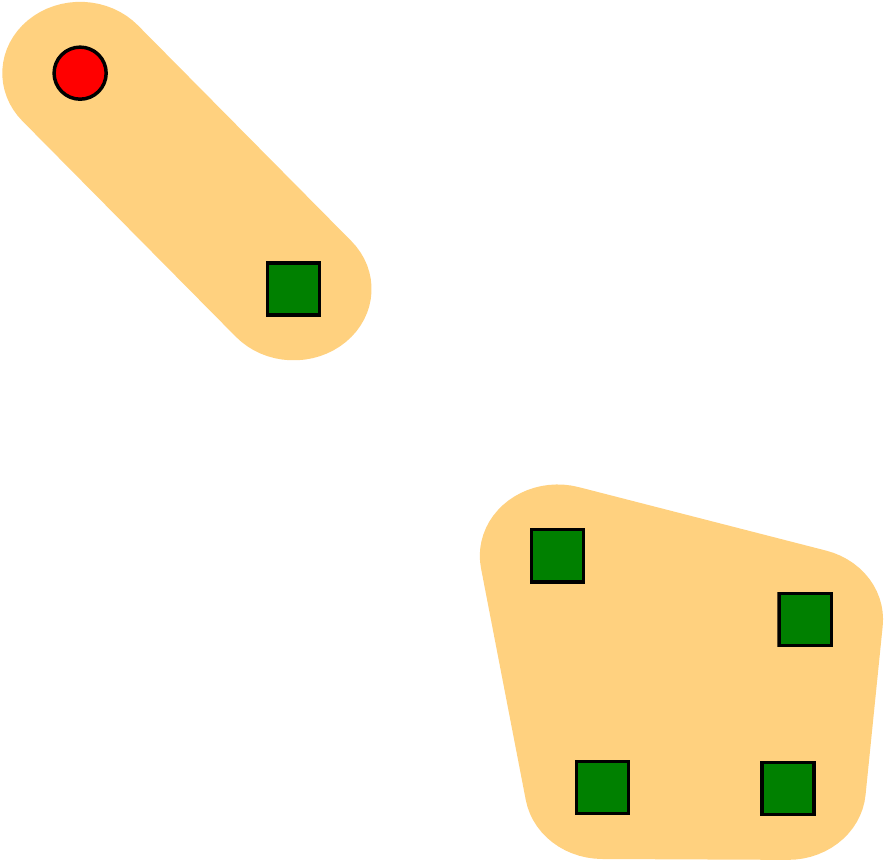}
    }
    \hspace{20pt}
    \subfigure[NMI, NMI$_\text{max}$, VI, AMI, S\&S, CC, CD]{\includegraphics[width = 0.2\textwidth]{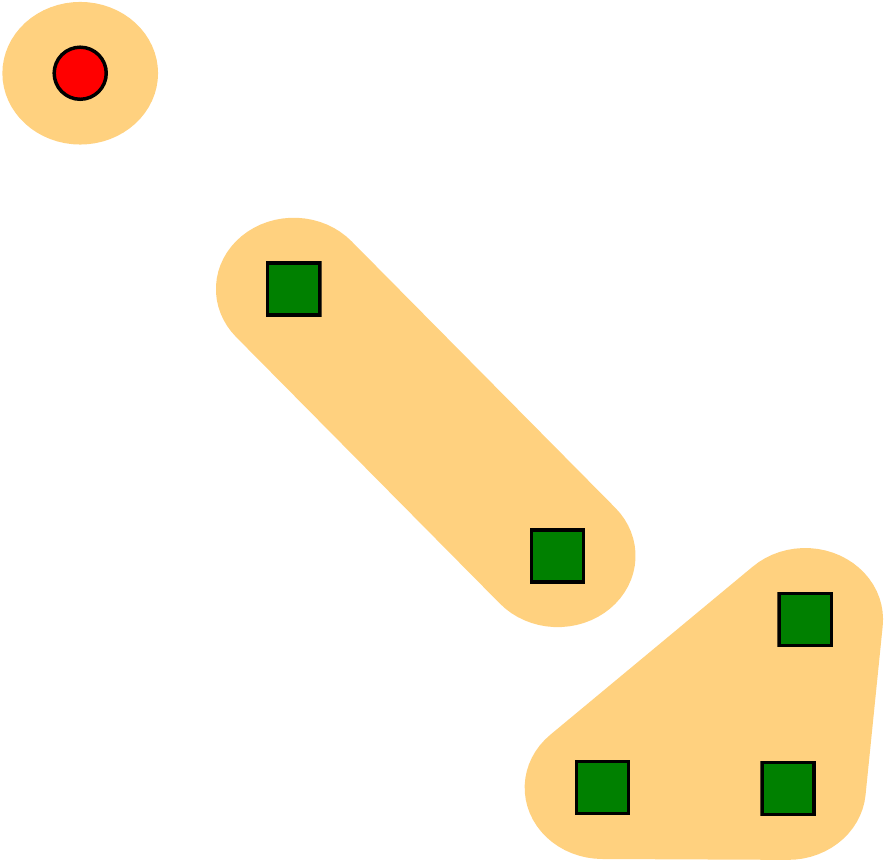}
    }
    \caption{Inconsistency of indices: shapes denote the reference partition, captions indicate indices favoring the candidate.}
    \label{fig:example}
    \vskip -0.2in
\end{figure}

Evidently, many different cluster similarity indices are used by researchers and practitioners. A natural question is: \emph{how to choose the best one?} Before trying to answer this question, it is important to understand whether the problem is relevant. Indeed, if the indices are very similar to each other and agree in most practical applications, then one can safely take any index. In this section, we demonstrate that this is not the case, and that the choice matters.

First, we illustrate the inconsistency of all indices. 
We say that two indices $V_1$ and $V_2$ are inconsistent for a triplet of partitions $(A,B_1,B_2)$ if $V_1(A,B_1) > V_1(A,B_2)$ but $V_2(A,B_1) < V_2(A,B_2)$.
We took 15 popular cluster similarity measures and constructed just four triplets such that each pair of indices is inconsistent for at least one triplet. One such triplet is shown in Figure~\ref{fig:example}: for this simple example, about half of the indices prefer the left candidate, while the others prefer the right one. Other examples can be found in Appendix~\ref{apx:experimentSynth}. 

\begin{table}
\centering
\caption{Inconsistency on real-world clustering datasets, \%}
\label{tab:statistics}
\vskip 0.15in
    \begin{tabular}{lrrrrrr}
    &\textbf{NMI}&\textbf{VI}&\textbf{AR}&\textbf{S\&S}&\textbf{CC}\\
    \midrule
    \textbf{NMI}&--&$40.3$&$15.7$&$20.1$&$18.5$\\
    \textbf{VI}&&--&$37.6$&$36.0$&$37.2$\\
    \textbf{AR}&&&--&$11.7$&$8.3$\\
    \textbf{S\&S}&&&&--&$3.6$\\
    \textbf{CC}&&&&&--\\
    \end{tabular}
    \vskip -0.1in
\end{table}

Thus, we see that the indices differ. But can this affect conclusions obtained in experiments on real data? To check that, we ran 8 well-known clustering algorithms~\citep{clustering_algorithms} on 16 real-world datasets from the UCI machine learning repository~\citep{Dua2019}. 
Each dataset, together with a pair of algorithms, gives a triplet of partitions $(A,B_1,B_2)$, where $A$ is a reference partition and $B_1,B_2$ are provided by two algorithms. For a given pair of indices and all such triplets, we look at whether the indices are consistent. Table~\ref{tab:statistics} shows the relative inconsistency for several popular indices.\footnote{The extended table together with a detailed description of the experimental setup and more analysis is given in 
Appendix~\ref{apx:experimentReal}.}
The inconsistency rate is significant: e.g., popular measures Adjusted Rand and Variation of Information disagree in almost 40\% of the cases. Importantly, the best agreeing indices are S\&S and CC, which satisfy most of our properties, as shown in the next section. 

To demonstrate that the choice of similarity index may affect the final performance in a real production scenario, we conducted an experiment within a major news aggregator system. The system groups news articles to \textit{events} and shows the list of most important events to users. For grouping, a clustering algorithm is used, and the quality of this algorithm affects the user experience: 
merging different clusters may lead to not showing an important event, while too much splitting may cause duplicate events. When comparing several candidate clustering algorithms, it is important to determine which one is the best for the system. Online experiments are expensive and can be used only for the best candidates. Thus, we need a tool for an offline comparison. For this purpose, we manually created a reference partition on a small fraction of news articles to evaluate the candidates. We performed such an offline comparison for two candidate algorithms $A_1$ and $A_2$ and observed that different indices preferred different algorithms (see Table~\ref{tab:experiment_main}). In particular, well-known FNMI, AMI, and Rand prefer $A_1$ that disagrees with most of the indices. Then, we launched an online user experiment and verified that the candidate $A_2$ is better for the system according to user preferences. This shows the importance of choosing the right index for offline comparisons. See Appendix~\ref{apx:experimentProduction} for a more detailed description of this experiment.

\begin{table}
    \caption{Comparing algorithms according to different indices}
    \vskip 0.15in
    \label{tab:experiment_main}
    \centering
    \begin{tabular}{lrrr}

 & $A_1$ & $A_2$ \\	
 \midrule
\textbf{NMI} & 0.9479 & \textbf{0.9482} \\
\textbf{FNMI} & \textbf{0.9304} & 0.8722 \\
\textbf{AMI} & \textbf{0.7815} & 0.7533
 \\
\textbf{VI} & 0.5662 & \textbf{0.5503} \\
\textbf{R} & \textbf{0.9915} & 0.9901 \\
\textbf{AR} & 0.5999 & \textbf{0.6213} \\
\textbf{J} & 0.4329 & \textbf{0.4556} \\
\textbf{S\&S} & 0.8004 & \textbf{0.8262} \\
\textbf{CC} & 0.6004 &	\textbf{0.6371} \\
\bottomrule
    \end{tabular}
\end{table}

\section{Analysis of Cluster Similarity Indices}\label{sec:general_properties}

In this section, we motivate and formally define properties that are desirable for cluster similarity indices.
We start with simple and intuitive ones that can be useful in some applications but not always necessary.
Then, we discuss more complicated properties, ending with \emph{constant baseline}, which is extremely important but least trivial.
In Tables~\ref{tab:GeneralRequirements} and~\ref{tab:PairCountingRequirements},  indices of particular interest are listed along with the properties satisfied.
In Appendix~\ref{apx:propertiesproofs}, we give the proofs for all entries of these tables.
For pair-counting indices we perform a more detailed analysis and define additional properties. For such indices, we interchangeably use the notation $V(A,B)$ and $V(N_{11},N_{10},N_{01},N_{00})$. 

Some of the indices have slight variants that are essentially the same.
For example, the Hubert index~\citep{Hubert1977} is a linear transformation of the Rand index: $H=2R-1$.
All the properties defined in this paper are invariant under linear transformations and interchanging $A$ and $B$. 
Hence, we define the following linear equivalence relation on similarity indices and check the properties for at most one representative of each equivalence class.
\begin{definition}\label{def:equivalence}
Similarity indices $V_1$ and $V_2$ are \emph{linearly equivalent} if there exists a nonconstant linear function $f$ such that either $V_1(A,B)=f(V_2(A,B))$ or $V_1(A,B)=f(V_2(B,A))$.
\end{definition}
This allows us to conveniently restrict to indices for which higher numerical values indicate higher similarity of partitions.
Appendix Table~\ref{tab:EquivalentIndices} in
lists equivalences among indices.

\subsection{Property 1: Maximal Agreement}\label{sec:maxAgreement}

The numerical value that an index assigns to a similarity must be easily interpretable.
In particular, it should be easy to see whether the candidate clustering is maximally similar to (i.e., coincides with) the reference clustering.
Formally, we require that $V(A,A)=c_\text{max}$ is constant and is a strict upper bound for $V(A,B)$ for all $A\neq B$.
The equivalence from Definition~\ref{def:equivalence} allows us to assume that $V(A,A)$ is a maximum w.l.o.g.
This property is easy to check, and it is satisfied by almost all indices, except for SMI and Wallace.

\paragraph{Property 1$'$: Minimal Agreement}
The maximal agreement property makes the upper range of the index interpretable.
Similarly, a numerical value for low agreement would make the lower range interpretable. A minimal agreement is not well defined for general partitions: it is unclear which partition is most dissimilar to a given one.
However, by Lemma~\ref{thm:pairSymmetry} in Appendix~\ref{sec:pair_symmetric}, 
pair-counting indices form a subclass of graph similarity indices.
For a graph with edge-set $E$, it is clear that the most dissimilar graph is its complement (i.e., with edge-set $E^C$).
Comparing a graph to its complement results in pair-counts $N_{11}=N_{00}=0$ and $N_{10}+N_{01}=N$.
This motivates the following definition:
\begin{definition}
A pair-counting index $V$ has the \emph{minimal agreement property} if there exists a constant $c_{\min}$ so that $V(N_{11},N_{10},N_{01},N_{00})\geq c_{\min}$ with equality if and only if $N_{11}=N_{00}=0$.
\end{definition}
This property is satisfied by Rand, Correlation Coefficient, and Sokal\&Sneath, while it is violated by Jaccard, Wallace, and Dice.
Adjusted Rand does not have this property since substituting $N_{11}=N_{00}=0$ gives the non-constant
$
\mathrm{AR}(0,N_{10},N_{01},0)=-\frac{N_{10}N_{01}}{\frac{1}{2}N^2-N_{10}N_{01}}.
$

\subsection{Property 2: Symmetry}

Similarity is intuitively understood as a symmetric concept. Therefore, a good similarity index is expected to be symmetric, i.e., $V(A,B)=V(B,A)$ for all partitions $A,B$.\footnote{In some applications, $A$ and $B$ may have different roles (e.g., reference and candidate partitions), and an asymmetric index may be suitable if there are different consequences of making false positives or false negatives.}
Tables~\ref{tab:GeneralRequirements} and~\ref{tab:PairCountingRequirements} show that most indices are symmetric. The asymmetric ones are precision and recall (Wallace) and FNMI~\citep{Amelio2015}, which is a product of NMI and an asymmetric penalty factor.

\subsection{Property 3: Linear Complexity}

For clustering tasks on large datasets, running time is crucial, and algorithms with superlinear time can be infeasible. In these cases, a validation index with superlinear running time would be a significant bottleneck.
Furthermore, computationally heavy indices also tend to be complicated and hard to interpret intuitively. We say that an index has \emph{linear complexity} when its worst-case running time is $O(n)$. In Appendix~\ref{apx:complexity}, we prove that any pair-counting index has $O(n)$ complexity. Many general indices have this property as well, except for SMI and AMI. 

\subsection{Property 4. Distance}

For some applications, a distance-interpretation of dissimilarity may be desirable: whenever $A$ is similar to $B$ and $B$ is similar to $C$, then $A$ should also be somewhat similar to $C$.
For example, 
assume that the reference clustering (e.g., labeled by experts) is an approximation of the ground truth. In such situations, it may be reasonable to argue that the reference clustering is at most a distance $\varepsilon$ from the true one, so that the triangle inequality bounds the dissimilarity of the candidate clustering to the unknown true clustering.

A function $d$ is a distance metric if it satisfies three distance axioms: 1) symmetry ($d(A,B)=d(B,A)$); 2) positive-definiteness ($d(A,B)\geq0$ with equality iff $A=B$); 3) the triangle inequality ($d(A,C)\leq d(A,B)+d(B,C)$). 
We say that $V$ is linearly transformable to a distance metric if there exists a linearly equivalent index that satisfies these three distance axioms.
Note that all three axioms are invariant under rescaling of $d$.
We have already imposed symmetry as a separate property, and positive-definiteness is equivalent to the maximal agreement property.
Therefore, whenever $V$ has these two properties, it satisfies the distance property iff $d(A,B)=c_{\max}-V(A,B)$ satisfies the triangle inequality, for $c_\text{max}$ as defined in Section~\ref{sec:maxAgreement}.

Examples of popular indices having this property are Variation of Information and the Mirkin metric.
In \citet{Vinh2010a}, it is proved that when Mutual Information is normalized by the maximum of entropies, the resulting NMI is equivalent to a distance metric.
A proof that the Jaccard index is equivalent to a distance is given in \citet{Kosub2019}. 
See Appendix~\ref{apx:Distance} for all the proofs.

\addtocounter{footnote}{-1}

\newcommand{\incbias}{\textcolor{red}{$\nearrow$}}
\newcommand{\decbias}{\textcolor{red}{$\searrow$}}
\newcommand{\doublebias}{\textcolor{red}{$\mathrel{\ooalign{\hss$\nearrow$\hss\cr$\searrow$}}$}}

\begin{table*}[t]
\centering
\begin{minipage}{.4\linewidth}
    \centering
    \caption{Requirements for general similarity indices
        }
        \label{tab:GeneralRequirements}
        \vspace{4pt}
        \begin{small}
        \vskip 0.15in
    \begin{tabular}{rlllllll}
                                       & \rot{\textbf{Max. agreement}} & \rot{\textbf{Symmetry}} & \rot{\textbf{Distance}} & \rot{\textbf{Lin. complexity}} & \rot{\textbf{Monotonicity}} & \rot{\textbf{Const. baseline}}\\
    \textbf{NMI}     &\cmark&\cmark&\xmark&\cmark&\cmark&\xmark\\
    \textbf{NMI$_{\max}$}     &\cmark&\cmark&\cmark&\cmark&\xmark&\xmark\\
    \textbf{FNMI}    &\cmark&\xmark&\xmark&\cmark&\xmark&\xmark\\
    \textbf{VI}      &\cmark&\cmark&\cmark&\cmark&\cmark&\xmark\\
    \textbf{SMI}     &\xmark&\cmark&\xmark&\xmark&\xmark&\cmark\\
    \textbf{FMeasure}&\cmark&\cmark&\xmark&\cmark&\xmark&\xmark\\
    \textbf{BCubed}  &\cmark&\cmark&\xmark&\cmark&\cmark&\xmark\\
    \textbf{AMI}  &\cmark&\cmark&\xmark&\xmark&\cmark&\cmark
    \end{tabular}
    \end{small}
\end{minipage}
\begin{minipage}{.57\linewidth}
    \centering
        \caption{Requirements for pair-counting indices\protect\footnotemark}
        \label{tab:PairCountingRequirements}
        \begin{small}
    \begin{tabular}{rllllllllll}
                                 & \rot{\textbf{Max. agreement}} & \rot{\textbf{Min. agreement}} & \rot{\textbf{Symmetry}} & \rot{\textbf{Distance}} & \rot{\textbf{Lin. complexity}} & \rot{\textbf{Monotonicity}}
                                 & \rot{\textbf{Strong monotonicity}}  
                                 & \rot{\textbf{Const.~baseline}} &
                                 \rot{\textbf{As.~const.~baseline}} &
                          \rot{\textbf{Type of bias}} \\
    \textbf{R}    &\cmark&\cmark&\cmark&\cmark&\cmark&\cmark&\cmark&\xmark&\xmark&\doublebias\\
    \textbf{AR}   &\cmark&\xmark&\cmark&\xmark&\cmark&\cmark&\xmark&\cmark&\cmark&\\
    \textbf{J}    &\cmark&\xmark&\cmark&\cmark&\cmark&\cmark&\xmark&\xmark&\xmark&\decbias\\
    \textbf{W}    &\xmark&\xmark&\xmark&\xmark&\cmark&\xmark&\xmark&\xmark&\xmark&\decbias\\
    \textbf{D}    &\cmark&\xmark&\cmark&\xmark&\cmark&\cmark&\xmark&\xmark&\xmark&\decbias\\
    \textbf{CC}   &\cmark&\cmark&\cmark&\xmark&\cmark&\cmark&\cmark&\cmark&\cmark&\\
    \textbf{S\&S}&\cmark&\cmark&\cmark&\xmark&\cmark&\cmark&\cmark&\cmark&\cmark&\\
    \textbf{CD}   &\cmark&\cmark&\cmark&\cmark&\cmark&\cmark&\cmark&\xmark&\cmark&
    \end{tabular}
    \end{small}
\end{minipage}
\end{table*}

\textbf{Correlation Distance \,\,}
Among all the considered indices, there are two pair-counting ones having all the properties except for being a distance: Sokal\&Sneath and Correlation Coefficient. However, the correlation coefficient can be transformed to a distance metric via a non-linear transformation. 
We define Correlation Distance (CD) as
$\mathrm{CD}(A,B):=\frac{1}{\pi}\arccos{\mathrm{CC}(A,B)}$,
where $\mathrm{CC}$ is the Pearson
correlation coefficient and the factor $\nicefrac{1}{\pi}$ scales the index to $[0,1]$.
To the best of our knowledge, this Correlation Distance has never before been used as a similarity index for comparing clusterings throughout the literature.

\begin{theorem}\label{thm:CdDistance}
The Correlation Distance is indeed a distance.
\end{theorem}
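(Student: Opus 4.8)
The plan is to exploit the geometric meaning of the Pearson correlation coefficient: for the binary incidence vectors $\vec{A},\vec{B}\in\{0,1\}^N$, after centering (subtracting the coordinate-wise mean), $\mathrm{CC}(A,B)$ is exactly the cosine of the angle between the centered vectors $\vec{A}-\bar a\vone$ and $\vec{B}-\bar b\vone$ in $\R^N$, where $\bar a,\bar b$ are the fractions of intra-cluster pairs. Hence $\pi\cdot\mathrm{CD}(A,B)=\arccos\mathrm{CC}(A,B)$ is precisely the angle $\angle(\tilde A,\tilde B)$ between these centered vectors, i.e.\ the geodesic distance on the unit sphere between their normalizations. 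Once this identification is made, symmetry and positive-definiteness are immediate from properties already established (symmetry of CC, and maximal agreement of CC being equivalent to positive-definiteness), and the triangle inequality for CD reduces to the triangle inequality for the angular (great-circle) metric on the sphere, which is standard.

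Concretely, the steps I would carry out are: (1) Write $\mathrm{CC}(A,B)=\frac{\langle\tilde A,\tilde B\rangle}{\|\tilde A\|\,\|\tilde B\|}$ with $\tilde A=\vec A-(\mathbf 1^\top\vec A/N)\vone$, and note $\tilde A\neq\vzero$ precisely because a nontrivial partition has at least one intra- and one inter-cluster pair (the degenerate all-singletons and one-cluster cases should be noted and, if necessary, excluded or handled as the paper does elsewhere). (2) Define the unit vectors $u_A=\tilde A/\|\tilde A\|$, similarly $u_B,u_C$, so that $\pi\,\mathrm{CD}(A,B)=\arccos\langle u_A,u_B\rangle=:\theta_{AB}\in[0,\pi]$. (3) Verify the first two distance axioms: symmetry is clear; $\mathrm{CD}(A,B)=0\iff\mathrm{CC}(A,B)=1\iff u_A=u_B$, and argue this forces $A=B$ (this is exactly the maximal-agreement/positive-definiteness equivalence invoked in Section~\ref{sec:maxAgreement} and Property~4). (4) Prove the triangle inequality $\theta_{AC}\le\theta_{AB}+\theta_{BC}$: this is the spherical law of cosines / the statement that angular distance on $S^{N-1}$ is a metric. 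One clean way is to use that $\arccos$ composed with the inner product is the geodesic distance on the sphere; alternatively, expand $\cos(\theta_{AB}+\theta_{BC})$ via the angle-addition formula and the Cauchy--Schwarz inequality applied to the components of $u_A,u_C$ orthogonal to $u_B$, then use monotonicity of $\arccos$. Finally, multiply through by $1/\pi$ and note all axioms are scale-invariant (as the paper already observes), and that the range is $[0,1]$ since $\theta\in[0,\pi]$.

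The main obstacle is step (4): making the triangle inequality for the angular metric fully rigorous without hand-waving "it's the sphere metric." The cleanest self-contained argument decomposes $u_A=\langle u_A,u_B\rangle u_B+s_A$ and $u_C=\langle u_C,u_B\rangle u_B+s_C$ with $s_A\perp u_B$, $s_C\perp u_B$, writes $\langle u_A,u_C\rangle=\cos\theta_{AB}\cos\theta_{BC}+\langle s_A,s_C\rangle$, bounds $\langle s_A,s_C\rangle\ge-\|s_A\|\,\|s_C\|=-\sin\theta_{AB}\sin\theta_{BC}$ by Cauchy--Schwarz, concludes $\langle u_A,u_C\rangle\ge\cos(\theta_{AB}+\theta_{BC})$, and then applies the (order-reversing) monotonicity of $\arccos$ on $[-1,1]$ — with a small case check when $\theta_{AB}+\theta_{BC}>\pi$, where the bound is vacuous. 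A secondary, more bookkeeping-type obstacle is being careful about which partitions are admissible so that $\tilde A\neq\vzero$ and $\mathrm{CC}$ is well defined; I would align this with whatever convention the paper uses (Appendices~\ref{apx:GeneralIndicesDef},~\ref{apx:PairCounting}) and state it explicitly at the start of the proof.
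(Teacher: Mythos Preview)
Your proposal is correct and follows essentially the same route as the paper: map each partition to a unit vector on $S^{N-1}$ by centering and normalizing the binary incidence vector, identify $\mathrm{CC}$ with the inner product of these unit vectors so that $\pi\cdot\mathrm{CD}$ is the angular (great-circle) distance, and conclude that $\mathrm{CD}$ is a metric. The paper's write-up is terser---it simply asserts that the angle between unit vectors is the distance along the unit hypersphere---whereas you spell out the Cauchy--Schwarz decomposition for the triangle inequality; and the paper handles the degenerate partitions $k_A=1$ and $k_A=n$ explicitly by defining $\vec u(A)=\pm N^{-1/2}\mathbf{1}$ in those cases, which is the bookkeeping you flagged.
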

\begin{proof}
A proof of this is given in \cite{vandongen2012metric}. We give an alternative proof that allows for a geometric interpretation.
First, we map each partition $A$ to an $N$-dimensional vector on the unit sphere by
\[
\vec{u}(A):=\begin{cases}
\frac{1}{\sqrt{N}}\mathbf{1}&\text{ if }k_A=1,\\
\frac{\vec{A}-\tfrac{m_A}{N}\mathbf{1}}{\left\|\vec{A}-\tfrac{m_A}{N}\mathbf{1}\right\|}&\text{ if }1<k_A<n,\\
-\frac{1}{\sqrt{N}}\mathbf{1}&\text{ if }k_A=n,\\
\end{cases}
\]
where $\mathbf{1}$ is the $N$-dimensional all-one vector, $\vec{A}$ is the binary vector representation of a partition introduced in 
Section~\ref{sec:Background}, and $m_A=N_{11}+N_{10}$ is the  number of intra-community pairs of $A$.
Straightforward computation gives 
$
\|\vec{A}-\tfrac{m_A}{N}\mathbf{1}\|=\sqrt{m_A(N-m_A)/N},
$
and standard inner product 
\begin{align*}
\langle\vec{A}-\tfrac{m_A}{N}\mathbf{1},\vec{B}-\tfrac{m_B}{N}\mathbf{1}\rangle&=N_{11}-\frac{m_Am_B}{N}\\
&=\frac{N_{11}N_{00}-N_{10}N_{01}}{N},
\end{align*}
so that the inner product indeed corresponds to CC:
\begin{align*}
\langle\vec{u}(A),\vec{u}(B)\rangle&=
\frac{N_{11}N_{00}-N_{10}N_{01}}{\sqrt{m_A(N-m_A)m_B(N-m_B)}}\\
&=\textrm{CC}(A,B).
\end{align*}
It is a well-known fact that the inner product of two vectors of unit length corresponds to the cosine of their angle. Hence, taking the arccosine gives us the angle.
The angle between unit vectors corresponds to the distance along the unit hypersphere. 
As $\vec{u}$ is an injection from the set of partitions to points on the unit sphere, we may conclude that this index is indeed a distance on the set of partitions.   
\end{proof}

In Section~\ref{sec:constant_baseline}, we show that the distance property of Correlation Distance is achieved at the cost of not having the exact constant baseline, though it is still satisfied asymptotically.

\subsection{Property 5: Monotonicity}\label{sec:monotonicity}

When one clustering is changed such that it resembles the other clustering more, the similarity score ought to improve. Hence, we require an index to be monotone w.r.t.~changes that increase the similarity. This can be formalized via the following definition.

\begin{definition}
For clusterings $A$ and $B$, we say that $B'$ is an \emph{$A$-consistent improvement of $B$} iff $B\neq B'$ and all pairs of elements agreeing in $A$ and $B$ also agree in $A$ and~$B'$.
\end{definition}
This leads to the following monotonicity property.
\begin{definition}\label{def:monotonicity}
An index $V$ satisfies the \emph{monotonicity property} if for every two clusterings $A,B$ with $1<k_A<n$ and any $B'$ that is an $A$-consistent improvement of $B$, it holds that $V(A,B')>V(A,B)$ and $V(B',A)>V(B,A)$.
\end{definition}
The trivial cases $k_A=1$ and $k_A=n$ were excluded to avoid inconsistencies with the constant baseline property defined in Section~\ref{sec:constant_baseline}.
To look at monotonicity from a different perspective, we define the following operations:

\begin{itemize}[noitemsep, nolistsep]
    \item \textbf{Perfect split}: $B'$ is a perfect split of $B$ (w.r.t.~$A$) if $B'$ is obtained from $B$ by splitting a single cluster $B_1$ into two clusters $B'_1,B'_2$ such that no two elements of the same cluster of $A$ are in different parts of this split, i.e., for all $i$, $A_i\cap B_1$ is a subset of either $B'_1$ or $B'_2$.
    \item \textbf{Perfect merge}: We say that $B'$ is a perfect merge of $B$ (w.r.t.~$A$) if there exists some $A_i$ and $B_1,B_2\subset A_i$ such that $B'$ is obtained by merging $B_1,B_2$ into $B'_1$.
\end{itemize}
The following theorem gives an alternative definition of monotonicity  and is proven in Appendix~\ref{apx:ProofMonotonicity}.

\begin{theorem}\label{thm:equivMonotonicity}
$B'$ is an $A$-consistent improvement of $B$ iff $B'$ can be obtained from $B$ by a sequence of perfect splits and perfect merges.
\end{theorem}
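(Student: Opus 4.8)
The plan is to prove both directions of the equivalence. The easy direction is that a perfect split or a perfect merge is always an $A$-consistent improvement (and nonidentical), and that "$A$-consistent improvement" is transitive, so any finite sequence of such moves yields an $A$-consistent improvement. For transitivity: if every pair agreeing in $(A,B)$ agrees in $(A,B')$, and every pair agreeing in $(A,B')$ agrees in $(A,B'')$, then every pair agreeing in $(A,B)$ agrees in $(A,B'')$; nonidentity along a nonempty sequence follows because each single move strictly changes the partition and (as I'll note below) each move only ever adds agreements with $A$, so no move can be undone by a later one. For the single-move claim, I would check: a perfect split of $B$ w.r.t.\ $A$ only converts some intra-cluster pairs of $B$ into inter-cluster pairs, and by the definition of perfect split none of those pairs was an agreeing intra-cluster pair of $(A,B)$ (they lie in different parts $B'_1,B'_2$, hence in different $A_i$), so all $(A,B)$-agreements are preserved; symmetrically a perfect merge (of two subsets of a single $A_i$) only creates new intra-cluster pairs of $B'$, all of which are intra-cluster pairs of $A$, so again no agreement is lost and at least one is gained.

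The substantive direction is the converse: given that $B'$ is an $A$-consistent improvement of $B$, exhibit a sequence of perfect splits and merges from $B$ to $B'$. The key device is to pass to the common refinement $C := A \wedge B$, whose clusters are the nonempty sets $A_i \cap B_j$. I claim $C$ can be reached from $B$ by a sequence of perfect splits: repeatedly take a cluster $B_j$ that is not entirely inside one cluster of $A$, and split it into $B_j \cap A_i$ for the various $i$ (or, to keep each move a binary split as in the definition, peel off one $A_i \cap B_j$ at a time); each such split is perfect w.r.t.\ $A$ by construction, and the process terminates since the number of clusters strictly increases and is bounded by $n$. The same argument shows $C = A \wedge B'$ is reachable from $B'$ by perfect splits. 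So it remains to show $B'$ is reachable from $C$ by perfect \emph{merges}. Here I would argue that because $B'$ is an $A$-consistent improvement of $B$, every cluster $B'_\ell$ is a union of clusters of $C$: indeed $B'_\ell$ is a union of clusters $A_i \cap B'_j$ of $A \wedge B' = C$ by definition of the refinement, so it suffices that each block $A_i\cap B'_j$ appearing is actually a block of $C=A\wedge B$ — equivalently that within $A_i$, the partition induced by $B'$ is coarser than that induced by $B$; this is exactly the $A$-consistency hypothesis restricted to pairs inside $A_i$ (any two elements of $A_i$ in the same $B$-cluster agree in $(A,B)$, hence agree in $(A,B')$, hence lie in the same $B'$-cluster). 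Thus each $B'_\ell$ is a union of $C$-blocks, and moreover all those $C$-blocks lie in a common $A_i$? — not quite, $B'_\ell$ may straddle several $A_i$; but a perfect merge only requires the \emph{two} blocks being merged to lie in a common $A_i$, so I build $B'$ from $C$ by, for each $\ell$, first merging the $C$-blocks of $B'_\ell$ that share an $A$-cluster (perfect merges), and then observing no further merging across $A$-clusters is needed because \dots\ wait — if $B'_\ell$ genuinely straddles two $A$-clusters, those cross-$A_i$ merges are \emph{not} perfect merges, and indeed such a $B'$ could fail to be an $A$-consistent improvement only if it destroyed an agreement; merging two elements from different $A_i$ that were in different $B$-clusters destroys an inter/inter agreement, so it is forbidden. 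Hence in fact each $B'_\ell$ lies within a single $A_i$ whenever the corresponding $B$-portion did, and the only subtlety is clusters of $B'$ that were already unions of $B$-clusters sitting in one $A_i$ — these are handled by perfect merges directly. Assembling: $B \to C$ by perfect splits, then $C \to B'$ by perfect merges.

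The main obstacle, and the point I would be most careful about in the writeup, is precisely the cross-$A$-cluster bookkeeping in the $C \to B'$ step: one must verify cleanly that $A$-consistency forces each $B'$-cluster to respect the $A$-partition in the right way (no new inter-cluster-to-intra-cluster collisions across different $A_i$), so that every merge needed is genuinely a \emph{perfect} merge in the sense of the definition (both merged pieces inside one $A_i$). The clean way to do this is to phrase everything in terms of the agreement relation: a pair $(v,w)$ "agrees" if it is intra in both or inter in both; $A$-consistency says the agreement set can only grow from $(A,B)$ to $(A,B')$; a perfect split grows it by turning a disagreeing-intra pair into an agreeing-inter pair; a perfect merge grows it by turning a disagreeing-inter pair into an agreeing-intra pair; and the claim becomes that any agreement-set-increasing change decomposes into these atomic steps, which one proves by the refinement argument above. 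I would also remark that the excluded trivial cases $k_A=1,n$ play no role here since the statement quantifies over $B,B'$ with $A$ fixed and arbitrary.
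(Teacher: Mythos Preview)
Your easy direction (perfect splits and merges are $A$-consistent improvements, and the relation is transitive) is fine. The converse, however, has a genuine gap: your intermediate partition $C := A\wedge B$ is the wrong choice, and the step ``$C\to B'$ by perfect merges'' fails in general. You correctly worry about $B'$-clusters that straddle several $A$-clusters, and you correctly observe that if $v,w$ lie in different $A$-clusters but the same $B'$-cluster then they must already lie in the same $B$-cluster. But that observation does not rescue the route through $A\wedge B$: passing to $A\wedge B$ has already split such $v,w$ apart, and re-merging them is never a perfect merge. Concretely, take
\[
A=\{\{1,2\},\{3,4\}\},\qquad B=\{\{1,3\},\{2,4\}\},\qquad B'=\{\{1,3\},\{2\},\{4\}\}.
\]
One checks that $B'$ is an $A$-consistent improvement of $B$ (the only $(A,B)$-agreeing pairs are $(1,4)$ and $(2,3)$, both preserved). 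Here $C=A\wedge B$ is the partition into singletons, and reaching $B'$ from $C$ requires merging $\{1\}$ and $\{3\}$, which lie in different $A$-clusters; no sequence of perfect merges from $C$ reaches $B'$.

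The fix---and this is what the paper does---is to take the common refinement of $B$ and $B'$, not of $A$ and $B$. Set $D:=B\wedge B'=\{B_j\cap B'_{j'}:B_j\cap B'_{j'}\neq\emptyset\}$. Then $B\to D$ is a sequence of splits along $B'$-boundaries; each such split is perfect because any $v,w$ separated by it lie in the same $B$-cluster but different $B'$-clusters, so if they were in the same $A$-cluster they would be an $(A,B)$-agreement lost in $(A,B')$, contradicting the hypothesis. And $D\to B'$ is a sequence of merges of $D$-blocks inside a common $B'$-cluster; each is perfect because if two such blocks contained $v,w$ in different $A$-clusters, then (being in different $D$-blocks inside one $B'$-cluster) $v,w$ would be in different $B$-clusters, so $(v,w)$ is an inter--inter agreement in $(A,B)$ destroyed in $(A,B')$, again a contradiction. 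In the example above $D=B'$, and $B\to D$ is the single perfect split of $\{2,4\}$. Your own cross-$A$-cluster observation is exactly the lemma needed for the merge half---you just need to apply it to $B\wedge B'$ rather than $A\wedge B$.
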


Note that this monotonicity is a stronger form of the first two constraints defined in \citep{amigo2009comparison}: \textit{Cluster Homogeneity} is a weaker form of our monotonicity w.r.t.~perfect splits, while \textit{Cluster Equivalence} is equivalent to our monotonicity w.r.t.~perfect merges. 

Monotonicity is a critical property that should be satisfied by any sensible index. Surprisingly, not all indices satisfy this: we have found counterexamples that prove that SMI, FNMI, and Wallace do not have the monotonicity property. Furthermore, for NMI, whether monotonicity is satisfied depends on the normalization: the normalization by the average of the entropies has monotonicity, while the normalization by the maximum of the entropies does not.

\footnotetext{All known pair-counting indices excluded from this table do not satisfy either constant baseline, symmetry, or maximal agreement.}

\paragraph{Property 5$'$. Strong Monotonicity}

For pair-counting indices, we can define a stronger monotonicity property in terms of pair-counts.

\begin{definition}
A pair-counting index $V$ satisfies \emph{strong monotonicity} if it is increasing in $N_{11},N_{00}$ when $N_{10}+N_{01}>0$, and decreasing in $N_{10},N_{01}$ when $N_{11}+N_{00}>0$.
\end{definition}
Note that the conditions $N_{10}+N_{01}>0$ and $N_{11}+N_{00}>0$ are needed to avoid contradicting maximal and minimal agreement respectively.
This property is stronger than monotonicity as it additionally allows for comparing similarities across different settings: we could compare the similarity between $A_1, B_1$ on $n_1$ elements with the similarity between $A_2,B_2$ on $n_2$ elements, even when $n_1\neq n_2$. This ability to compare similarity scores across different numbers of elements is similar to the \textit{Few data points} property of SMI \citep{Romano2014} that allows its scale to have a similar interpretation across different settings.

We found several examples of indices that have Property 5 while not satisfying Property 5$'$. Jaccard and Dice indices are constant w.r.t.~$N_{00}$, so they are not strongly monotone. A more interesting example is the Adjusted Rand index, which may become strictly larger if we only increase~$N_{10}$.

\subsection{Property 6. Constant Baseline}\label{sec:constant_baseline}

This property is arguably the most significant: it is less intuitive than the other ones and may lead to unexpected consequences in practice. 
Informally, a good similarity index should not give a preference to a candidate clustering $B$ over another clustering $C$ just because $B$ has many or few clusters. This intuition can be formalized using random partitions: assume that we have some reference clustering $A$ and two random partitions $B$ and $C$. While intuitively both random guesses are equally bad approximations of $A$, it has been known throughout the literature \citep{Albatineh2006,Vinh2009a,Vinh2010a,Romano2014} that some indices tend to give higher scores for random guesses with a larger number of clusters.
Ideally, we want the similarity value of a random candidate w.r.t.~the reference partition to have a fixed expected value $c_{\textrm{base}}$ (independent of $A$ or the sizes of $B$).
However, this does require a careful formalization of random candidates.
\begin{definition}\label{def:ElementSymmetric}
We say that a distribution over clusterings $\mathcal{B}$ is \emph{element-symmetric} if for every two clusterings $B$ and $B'$ that have the same cluster-sizes, 
$\mathcal{B}$ returns $B$ and $B'$ with equal probabilities.
\end{definition}
This allows us to define the constant baseline property.

\begin{definition}\label{def:GeneralConstantBaseline}
An index $V$ satisfies the \emph{constant baseline property} if there exists a constant $c_{\textrm{base}}$ so that, 
for any clustering $A$ with $1<k_A<n$ and every element-symmetric distribution $\mathcal{B}$, it holds that $\mathbf{E}_{B\sim\mathcal{B}}[V(A,B)]=c_{\textrm{base}}$.
\end{definition}

In the definition, we have excluded the cases where $A$ is a trivial clustering consisting of either $1$ or $n$ clusters. Including them would cause contradictions with maximal agreement whenever we choose $\mathcal{B}$ as the (element-symmetric) distribution that returns $A$ with probability $1$.
In Appendix~\ref{apx:AnalysisCB}, we prove that to verify whether an index satisfies Definition~\ref{def:GeneralConstantBaseline}, it suffices to check whether it holds for distributions $\mathcal{B}$ that are uniform over clusterings with fixed cluster sizes. From this equivalence, it will also follow that Definition~\ref{def:GeneralConstantBaseline} is indeed symmetric.
Note that the formulation in terms of element-symmetric distributions allows for a wide range of clustering distributions. For example, the cluster sizes could be drawn from a power-law distribution, which is often observed in practice~\cite{arenas2004community,clauset2004finding}.

Constant baseline is extremely important in many practical applications: if an index violates this property, then its optimization may lead to undesirably biased results.
For instance, if a biased index is used to choose the best algorithm among several candidates, then it is likely that the decision will be biased towards those who produce too large or too small clusters. This problem is often attributed to NMI~\citep{Vinh2009a,Romano2014}, but we found that almost all indices suffer from it.
The only indices that satisfy the constant baseline property are Adjusted Rand index, Correlation Coefficient, SMI, and AMI with $c_{\textrm{base}}=0$ and Sokal\&Sneath with $c_{\textrm{base}}=\nicefrac{1}{2}$.
Interestingly, out of these five indices, three were \emph{specifically designed} to satisfy this property, which made them less intuitive and resulted in other important properties being violated.

The only condition under which the constant baseline property can be safely ignored is knowing in advance \emph{all cluster sizes}. In this case, bias towards particular cluster sizes would not affect decisions. However, we are not aware of any practical application where such an assumption can be made. Note that knowing only the number of clusters is insufficient. We illustrate this in Appendix~\ref{apx:cb_experiments}, where we also show that the bias of indices violating the constant baseline is easy to identify empirically.

\paragraph{Property 6$'$: Asymptotic Constant Baseline \,\,}\label{sect:PairCountingConstantBaseline}
For pair-counting indices, a deeper analysis of the constant baseline property is possible.
Let $m_A=N_{11}+N_{10}$, $m_B=N_{11}+N_{01}$ be the number of intra-cluster pairs of $A$ and $B$, respectively. 
If the distribution $\mathcal{B}$ is uniform over clusterings with given sizes, then $m_A$ and $m_B$ are both constant.
Furthermore, the pair-counts $N_{10},N_{01},N_{00}$ are functions of $N,m_A,m_B,N_{11}$.
Hence, to find the expected value of the index, we need to inspect it as a function of a single random variable $N_{11}$.
For a random pair, the probability that it is an intra-cluster pair of both clusterings is $m_Am_B/N^2$, so the expected values of the pair-counts are
\begin{align}\label{eq:ExpPairCounts}
\overline{N_{11}}&:=\frac{m_Am_B}{N}, \hspace{20pt} \overline{N_{10}}:=m_A-\overline{N_{11}},\\ 
\overline{N_{01}}&:=m_B-\overline{N_{11}}, \ \hspace{6pt} \overline{N_{00}}:=N-m_A-m_B+\overline{N_{11}}.\nonumber
\end{align}
We can use these values to define a weaker variant of constant baseline.
\begin{definition}\label{def:AsConstantBaseline}
A pair-counting index $V$ has an \emph{asymptotic constant baseline} if there exists a constant $c_{\text{base}}$ so that 
$
V\left(\overline{N_{11}}, \overline{N_{10}},\overline{N_{01}} ,\overline{N_{00}}\right)=c_{\text{base}}
$ 
for all $m_A,m_B\in(0,N)$.
\end{definition}
In contrast to Definition~\ref{def:GeneralConstantBaseline}, asymptotic constant baseline is very easy to verify: one can substitute the values from~\eqref{eq:ExpPairCounts} to the index and check whether the obtained value is constant.
Another important observation is that under mild assumptions $V\left(N_{11},N_{10},N_{01},N_{00}\right)$  converges in probability to $V\left(\overline{N_{11}}, \overline{N_{10}},\overline{N_{01}} ,\overline{N_{00}}\right)$ as $n$ grows which justifies the usage of the name \textit{asymptotic constant baseline}, see Appendix~\ref{apx:Var} for more details.

Note that the non-linear transformation of Correlation Coefficient to Correlation Distance makes the latter one violate the constant baseline property. 
CD does, however, still have the asymptotic constant baseline at $\nicefrac{1}{2}$ and we prove in Appendix~\ref{apx:BaselineCD} that the expectation in Definition~\ref{def:GeneralConstantBaseline} is very close to this value.\footnote{There is also another transformation of CC to a distance CD$'=\sqrt{2(1-\textrm{CC})}$. However, it can be shown that CD$'$ approximates a constant baseline less well than CD.}

\paragraph{Biases of Cluster Similarity Indices \,\,}
Given the fact that there are so many biased indices, one may be interested in what kind of candidates they favor. While it is unclear how to formalize this concept for general validation indices, we can do this for pair-counting ones by analyzing them in terms of a single variable: the number of \emph{inter-cluster pairs}.
This value characterizes the granularity of a clustering: it is high when the clustering consists of many small clusters while it is low if it consists of a few large clusters. 

Informally, we say that an index suffers from \textit{PairDec} bias if it may favor less inter-cluster pairs. Similarly, \textit{PairInc} bias means that an index may prefer more inter-cluster pairs. These biases can be formalized as follows.

\begin{definition}\label{def:biases}
Let $V$ be a pair-counting index and define $V^{(s)}(m_A,m_B)=V\left(\overline{N_{11}}, \overline{N_{10}},\overline{N_{01}} ,\overline{N_{00}}\right)$ for the expected pair-counts as defined in \eqref{eq:ExpPairCounts}. We say that
\begin{enumerate}[noitemsep,nolistsep]
    \item[(i)] $V$ suffers from \emph{PairDec bias} if there are $m_A,m_B\in (0,N)$ such that $\frac{d}{dm_B}V^{(s)}(m_A,m_B)>0$;
    \item[(ii)] $V$ suffers from \emph{PairInc bias} if there are $m_A,m_B\in (0,N)$ such that $\frac{d}{dm_B}V^{(s)}(m_A,m_B)<0$. 
\end{enumerate}
\end{definition}

Note that this definition does require $V^{(s)}$ to be differentiable in $m_A$ and $m_B$. However, this is the case for all pair-counting indices in this work.
Applying this definition to Jaccard
$J^{(s)}(m_A,m_B)=\frac{m_Am_B}{N(m_A+m_B)-m_Am_B}$
and Rand $R^{(s)}(m_A,m_B)=1-(m_A+m_B)/N+2m_Am_B/N^2$
immediately shows that Jaccard suffers from PairDec bias and Rand suffers from both biases.
The direction of the monotonicity for the bias of Rand is determined by the condition $2m_A>N$.
Performing the same for Wallace and Dice shows that both suffer from PairDec bias.
Note that an index satisfying the asymptotic constant baseline property will not have any of these biases as $V^{(s)}(m_A,m_B)=c_{\textrm{base}}$.

While there have been previous attempts to characterize types of biases~\citep{Lei2017}, they mostly rely on analyses based on the number of clusters.
However, our analysis shows that the number of clusters is not the correct variable for such a characterization of pair-counting indices.
While having many clusters often goes hand-in-hand with having many inter-cluster pairs, it is not always the case: if there are significant differences between the cluster sizes (e.g., one large cluster and many small clusters), then the clustering may consist of many clusters while having relatively few inter-cluster pairs. We discuss this in more detail in Appendix~\ref{apx:Lei2017}. Additionally, Experiments shown in Figures~\ref{fig:k_experiment} and~\ref{fig:s_experiment} of the Appendix show that in such cases, most indices have a similar bias as if there were few clusters, which is consistent with our characterization of such biases in terms of the number of inter-cluster pairs.

\section{Discussion and Conclusion}\label{sec:discussion}

At this point, we better understand the theoretical properties of cluster similarity indices, so it is time to answer the question: \emph{which index is the best?} Unfortunately, there is no simple answer, but we can make an informed decision. In this section, we sum up what we have learned, argue that there are indices that are strictly better alternatives than some widely used ones, and give practical advice on how to choose a suitable index for a given application.

Among all properties discussed in this paper, \emph{monotonicity} is the most crucial one. 
Violating this property is a fatal problem: such indices can prefer candidates which are strictly worse than others. 
Hence, we advise against using the well-known NMI$_{max}$, FMeasure, FNMI, and SMI indices. 

The \emph{constant baseline} property is much less trivial but is equally important: it addresses the problem of preferring some partitions only because they have small or large clusters. This property is essential unless you know \emph{all cluster sizes}. 
Since we are not aware of practical applications where all cluster sizes are known, we assume below that this is not the case.\footnote{However, in applications where such an assumption holds, it can be reasonable to use, e.g., BCubed, Variation of Information, and NMI.}
This requirement is satisfied by just a few indices, so we are only left with AMI, Adjusted Rand (AR), Correlation Coefficient (CC), and Sokal\&Sneath (S\&S). Additionally, Correlation Distance (CD) satisfies constant baseline asymptotically and deviations from the exact constant baseline are extremely small (see Appendix~\ref{apx:BaselineCD}).

Let us note that among the remaining indices, AR is strictly dominated by CC and S\&S since it does not have the minimum agreement and strong monotonicity.
Also, similarly to AMI, AR is specifically created to have a constant baseline, which made this index more complex and less intuitive than other pair-counting indices.
Hence, we are only left with four indices: AMI, S\&S, CC, and~CD. 

According to their theoretical properties, all these indices are good, and any of them can be chosen. 
Figure~\ref{fig:decision} illustrates how a final decision can be made. 
First, one can decide whether the distance property is needed. For example, suppose one wants to cluster the algorithms by comparing the partitions provided by them. If one would want to use a metric clustering algorithm for this, the index would have to be a distance. In this case, CD would be the best choice.
If the distance property is not needed, one could base the decision on computational complexity. In many large-scale applications, using clustering algorithms with higher than linear running time is infeasible. Understandably, it is undesirable if the computation of a validation score takes longer than the actual clustering algorithm.
Another example is multiple comparisons: choosing the best algorithm among many candidates (differing, e.g., by a parameter value). If fast computation is required, then AMI is not a proper choice, and one has to choose between CC and S\&S. Otherwise, all three indices are suitable according to our formal constraints.

\begin{figure}
    \centering
    \includegraphics[width = 0.45\textwidth]{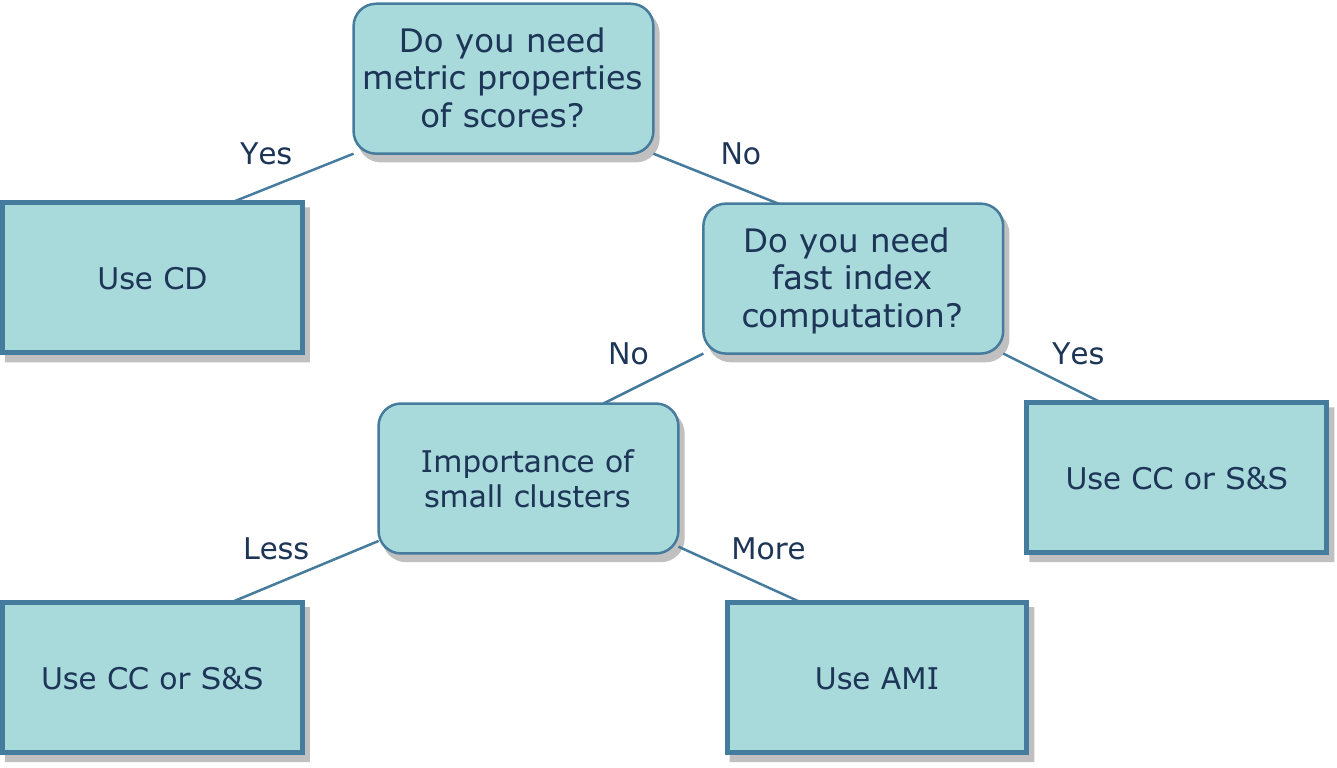}
    \caption{Example of how one can make a decision among good cluster similarity indices.}
    \label{fig:decision}
\end{figure}

Let us discuss an (informal) criterion that may help to choose between AMI and pair-counting alternatives. 
Different indices may favor a different balance between errors in small and large clusters. 
In particular, all pair-counting indices give larger weights to errors in large clusters: misclassifying one element in a cluster of size $k$ costs $k-1$ incorrect pairs.
It is known (empirically) that information-theoretic indices do not have this property and give a higher weight to small clusters~\citep{amigo2009comparison}.\footnote{This is an interesting aspect that has not received much attention in our research since we believe that the desired balance between large and small clusters may differ per application and we are not aware of a proper formalization of this ``level of balance'' in a general form.} \citet{amigo2009comparison} argue that for their particular application (text clustering), it is desirable not to give a higher weight to large clusters. In contrast, there are applications where the opposite may hold. For instance, consider a system that groups user photos based on identity and shows these clusters to a user as a ranked list. In this case, a user is likely to investigate the largest clusters consisting of known people and would rarely spot an error in a small cluster. The same applies to any system that ranks the clusters, e.g., to news aggregators. Based on what is desirable for a particular application, one can choose between AMI and pair-counting CC and S\&S.

The final decision between CC and S\&S is hard to make since they are equally good in terms of their theoretical properties. 
Interestingly, although some works \citep{Choi2009,Lei2017} list Pearson correlation as a cluster similarity index, it has not received attention that our results suggest it deserves, similarly to S\&S.
First, both indices are interpretable. 
CC is a correlation between the two incidence vectors, which is a very natural concept. 
S\&S is the average of precision, recall (for binary classification of pairs) and their inverted counterparts, which can also be intuitively understood. 
Also, CC and S\&S usually agree in practice: in Tables~\ref{tab:statistics} and~\ref{tab:statistics_full} we can see that they have the largest agreement. 
Hence, one can take any of these indices. 
Another option would be to check whether there are situations where these indices disagree and, if this happens, perform an experiment similar to what we did in Section~\ref{sec:experiment} for news aggregation. 
While some properties listed in Tables~\ref{tab:GeneralRequirements} and~\ref{tab:PairCountingRequirements} are not mentioned in the discussion above, they can be important for particular applications. For instance, maximum and minimum agreements are useful for interpretability, but they can also be essential if some operations are performed over the index values: e.g., averaging the scores of different algorithms. Symmetry can be necessary if there is no ``gold standard'' partition, but algorithms are compared only to each other.

Finally, let us remark that in an early version of this paper, we conjectured that the constant baseline and distance properties are mutually exclusive. This turns out to be true: in ongoing work, we prove an impossibility theorem: for pair-counting indices \emph{monotonicity}, \emph{distance}, and \emph{constant baseline} cannot be simultaneously satisfied.

\section*{Acknowledgements}

Most of this work was done while Martijn G\"{o}sgens was visiting Yandex and Moscow Institute of Physics and Technology (Russia). The work of Martijn G\"{o}sgens is supported by the Netherlands Organisation for Scientific Research (NWO) through the Gravitation NETWORKS grant no. 024.002.003. The work of Liudmila Prokhorenkova is supported by the Ministry of Education and Science of the Russian Federation in the framework of MegaGrant 075-15-2019-1926 and by the Russian President grant supporting leading scientific schools of the Russian Federation
NSh-2540.2020.1. Furthermore, the authors would like to thank Nelly Litvak and Remco van der Hofstad for their helpful feedback and guidance throughout this project. We also thank Borislav Kozlovskii for the help with the news aggregator experiment.

\newpage

\balance

\bibliography{references}
\bibliographystyle{icml2021}

\onecolumn

\appendix

\section{Further Related Work}\label{apx:related_work}
Several attempts to the comparative analysis of cluster similarity indices have been made in the literature, both in machine learning and complex networks communities.
In particular, the problem of indices favoring clusterings with smaller or larger clusters has been identified \citep{Albatineh2006,Vinh2009a,Vinh2010a,Lei2017}.
The most popular approach to resolving the bias of an index is to subtract its expected value and normalize the resulting quantity to obtain an index that satisfies the maximum agreement property.
This approach has let to `adjusted' indices such as AR \citep{hubert1985comparing} and AMI \citep{Vinh2009a}.
In \citet{Albatineh2006}, the family of pair-counting indices $\mathcal{L}$ is introduced for which adjusted forms can be computed easily. This family corresponds to the set of all pair-counting indices that are linear functions of $N_{11}$ for fixed $N_{11}+N_{10},N_{11}+N_{01}$.
In \cite{Romano2016}, a generalization of information-theoretic indices by the Tsallis $q$-entropy is given and this is shown to correspond to pair-counting indices for $q=2$. Formulas are provided for adjusting these generalized indices for chance.

A disadvantage of this adjustment scheme is that an index can be normalized in many ways, while it is difficult to grasp the differences between these normalizations intuitively. For example, three variants of AMI have been introduced~\citep{Vinh2009a}, and we show that normalization by the maximum entropies results in an index that fails monotonicity.
\citet{Romano2014} go one step further by standardizing mutual information, while \citet{Amelio2015} multiply NMI with a penalty factor that decreases with the difference in the number of clusters.

In summary, all these works take a popular biased index and `patch' it to get rid of this bias.
This approach has two disadvantages: firstly, these patches often introduce new problems (e.g., FNMI and SMI fail monotonicity), and secondly, the resulting index is usually less interpretable than the original.
We have taken a different approach in our work: instead of patching existing indices, we analyze previously introduced indices to see whether they satisfy more properties. Our analysis shows that AR is dominated by Pearson correlation, which was introduced more than 100 years before AR. Therefore, there was no need to construct AR from Rand in the first place.

In \citet{Lei2017}, the biases of pair-counting indices are characterized. They define these biases as a preference towards either few or many clusters. They prove that the direction of Rand's bias depends on the Havrda-Charvat entropy of the reference clustering. In the present work, we show that the number of clusters is not an adequate quantity for expressing these biases. We introduce methods to easily analyze the bias of any pair-counting index and simplify the condition for the direction of Rand's bias to $m_A<N/2$.

A paper closely related to the current research \citep{amigo2009comparison} formulates several constraints (axioms) for cluster similarity indices. Their \emph{cluster homogeneity} is a weaker analog of our monotonicity w.r.t.~perfect splits while their \emph{cluster equivalence} is equivalent to our monotonicity w.r.t.~perfect merges. The third \emph{rag bag} constraint is motivated by a subjective claim that ``introducing disorder into a disordered cluster is less harmful than introducing disorder into a clean cluster''. While this is important for their particular application (text clustering), we found no other work that deemed this constraint necessary; hence, we disregarded this constraint in the current research.
The last constraint by~\citet{amigo2009comparison} concerns the balance between making errors in large and small clusters. Though this is an interesting aspect that has not received much attention in our research, this constraint poses a particular balance while we believe that the desired balance may differ per application. Hence, this property seems to be non-binary and we are not aware of a proper formalization of this ``level of balance'' in a general form. Hence, we do not include this in our list of formal properties. The most principal difference of our work compared to~\citet{amigo2009comparison} is the constant baseline which was not analyzed in their work. We find this property extremely important while it is failed by most of the widely used indices including their BCubed. To conclude, our research gives a more comprehensive list of constraints and focuses on those that are desirable in a wide range of applications. We also cover all similarity indices often used in the literature and give formal proofs for all index-property combinations.

A property similar to our monotonicity property is also given in \citet{Meila2007}, where the similarity between clusterings $A$ and $B$ is upper-bounded by the similarity between $A$ and $A\otimes B$ (as defined in Section~\ref{apx:monotonicity}). One can show that this property is implied by our monotonicity but not vice versa, i.e., the variant proposed by~\citet{Meila2007} is weaker.
Our analysis of monotonicity generalizes and unifies previous approaches to this problem, see Theorem~\ref{thm:equivMonotonicity}, which relates consistent improvements to perfect splits and merges.

While we focus on \emph{external} cluster similarity indices that compare a candidate partition with a reference one, there are also \emph{internal} similarity measures that estimate the quality of partitions with respect to internal structure of data (e.g., Silhouette, Hubert-Gamma, Dunn, and many other indices). \citet{kleinberg2002impossibility} used an axiomatic approach for internal measures and proved an impossibility theorem: there are three simple and natural constraints such that no internal clustering measure can satisfy all of them. More work in this direction can be found in, e.g.,~\citet{ben2009measures}. In network analysis, internal measures compare a candidate partition with the underlying graph structure. They quantify how well a community structure (given by a partition) fits the graph and are often referred to as goodness or quality measures. The most well-known example is \textit{modularity}~\citep{newman2004finding}. Axioms that these measures ought to satisfy are given in \citep{ben2009measures,VanLaarhoven2014}.
Note that all pair-counting indices discussed in this paper can also be used for graph-partition similarity, as we discuss in Section~\ref{sec:pair_symmetric}.

\section{Cluster Similarity Indices}

\subsection{General Indices}\label{apx:GeneralIndicesDef}

Here we give the definitions of the indices listed in 
Table~\ref{tab:GeneralRequirements}.
We define the contingency variables as $n_{ij}=|A_i\cap B_j|$. 
We note that all indices discussed in this paper can be expressed as functions of these contingency variables.

The \emph{F-Measure} is defined as the harmonic mean of recall and precision. Recall is defined as 
\[
r(A,B)=\frac{1}{n}\sum_{i=1}^{k_A}\max_{j\in[k_B]}\{n_{ij}\},
\]
and precision is its symmetric counterpart $r(B,A)$.

In \citep{amigo2009comparison}, recall is redefined as
\[
r'(A,B)=\frac{1}{n}\sum_{i=1}^{k_A}\frac{1}{|A_i|}\sum_{j=1}^{ k_B}n_{ij}^2,
\]
and \emph{BCubed} is defined as the harmonic mean of $r'(A,B)$ and $r'(B,A)$.

The remainder of the indices are information-theoretic and require some additional definitions.
Let $p_1,\dots,p_\ell$ be a discrete distribution (i.e., all values are nonnegative and sum to $1$). The Shannon entropy is then defined as
\[
H(p_1,\dots,p_\ell):=-\sum_{i=1}^\ell p_i\log(p_i).
\]
The entropy of a clustering is defined as the entropy of the cluster-label distribution of a random item, i.e.,
\[
H(A):=H(|A_1|/n,\dots,|A_{k_A}|/n),
\]
and similarly for $H(B)$.
The joint entropy $H(A,B)$ is then defined as the entropy of the distribution with probabilities $(p_{ij})_{i\in[k_A],j\in[k_B]}$, where $p_{ij}=n_{ij}/n$.

\emph{Variation of Information} \citep{Meila2007} is defined as
\[
\mathrm{VI}(A,B)=2H(A,B)-H(A)-H(B).
\]

Mutual information is defined as 
\[
M(A,B)=H(A)+H(B)-H(A,B).
\]
The mutual information between $A$ and $B$ is upper-bounded by $H(A)$ and $H(B)$, which gives multiple possibilities to normalize the mutual information. 
In this paper, we discuss two normalizations: normalization by the average of the entropies $\frac{1}{2}(H(A)+H(B))$, and normalization by the maximum of entropies $\max\{H(A),H(B)\}$. We will refer to the corresponding indices as NMI and NMI$_{\max}$, respectively:
\begin{align*}
\mathrm{NMI}(A,B)&=\frac{M(A,B)}{(H(A)+H(B))/2},\\ \mathrm{NMI}_{\max}(A,B)&=\frac{M(A,B)}{\max\{H(A),H(B)\}}.
\end{align*}
\emph{Fair NMI} is a variant of NMI that includes a factor that penalizes large differences in the number of clusters \citep{Amelio2015}. It is given by
\[
\mathrm{FNMI}(A,B)=e^{-|k_A-k_B|/k_A}\mathrm{NMI}(A,B).
\]
In this definition, NMI may be normalized in various ways. We note that a different normalization would not result in more properties being satisfied.

\emph{Adjusted Mutual Information} addresses for the bias of NMI by subtracting the expected mutual information \citep{Vinh2009a}. It is given by
\[
\mathrm{AMI}(A,B)=\frac{M(A,B)-\mathbf{E}_{B'\sim\mathcal{C}(S(B))}[M(A,B')]}{\sqrt{H(A)\cdot H(B)}-\mathbf{E}_{B'\sim\mathcal{C}(S(B))}[M(A,B')]}.
\]
Here, a normalization by the geometric mean of the entropies is used, while other normalizations are also used \citep{Vinh2009a}.

\emph{Standardized Mutual Information} standardizes the mutual information w.r.t.~random permutations of the items \citep{Romano2014}, i.e.,
\[
\mathrm{SMI}(A,B)=\frac{
M(A,B)-\mathbf{E}_{B'\sim\mathcal{C}(S(B))}(M(A,B'))
}{
\sigma_{B'\sim\mathcal{C}(S(B))}(M(A,B'))
},
\]
where $\sigma$ denotes the standard deviation. Calculating the expected value and standard deviation of the mutual information is nontrivial and requires significantly more computation power than other indices. For this, we refer to the original paper \citep{Romano2014}.
Note that this index is symmetric since it does not matter whether we keep $A$ constant while randomly permuting $B$ or keep $B$ constant while randomly permuting $A$.

\begin{table*}[h!]
    \centering
        \caption{
    A selection of pair-counting indices. Most of these indices are taken from \cite{Lei2017}.
    }
    \label{tab:PairCountingList}
    \vskip 0.15in
    \begin{center}
    \begin{tabular}{ll}
    \toprule
\textbf{Index (Abbreviation)} &\textbf{Expression}  \\
\midrule
Rand ($R$) &  $\frac{N_{11}+N_{00}}{N_{11}+N_{10}+N_{01}+N_{00}}$\vspace{3pt}\\
Adjusted Rand ($AR$)&$\frac{N_{11}-\frac{(N_{11}+N_{10})(N_{11}+N_{01})}{N_{11}+N_{10}+N_{01}+N_{00}}}{\frac{(N_{11}+N_{10})+(N_{11}+N_{01})}{2}-\frac{(N_{11}+N_{10})(N_{11}+N_{01})}{N_{11}+N_{10}+N_{01}+N_{00}}}$ \vspace{3pt}\\ 
Jaccard ($J$)&$\frac{N_{11}}{N_{11}+N_{10}+N_{01}}$\vspace{3pt}\\
Jaccard Distance ($JD$)&$\frac{N_{10}+N_{01}}{N_{11}+N_{10}+N_{01}}$\vspace{3pt}\\
Wallace1 ($W$)&$\frac{N_{11}}{N_{11}+N_{10}}$\vspace{3pt}\\
Wallace2 &$\frac{N_{11}}{N_{11}+N_{01}}$\vspace{3pt}\\
Dice&$\frac{2N_{11}}{2N_{11}+N_{10}+N_{01}}$\vspace{3pt}\\
Correlation Coefficient ($CC$)&$\frac{N_{11}N_{00}-N_{10}N_{01}}{\sqrt{(N_{11}+N_{10})(N_{11}+N_{01})(N_{00}+N_{10})(N_{00}+N_{01})}}$\vspace{3pt}\\
Correlation Distance ($CD$)&$\frac{1}{\pi}\arccos\left(\frac{N_{11}N_{00}-N_{10}N_{01}}{\sqrt{(N_{11}+N_{10})(N_{11}+N_{01})(N_{00}+N_{10})(N_{00}+N_{01})}}\right)$\vspace{3pt}\\
Sokal\&Sneath-I ($S\&S$)&$\frac{1}{4}\left(\frac{N_{11}}{N_{11}+N_{10}}+\frac{N_{11}}{N_{11}+N_{01}}+\frac{N_{00}}{N_{00}+N_{10}}+\frac{N_{00}}{N_{00}+N_{01}}\right)$\vspace{3pt}\\
Minkowski&$\sqrt{\frac{N_{10}+N_{01}}{N_{11}+N_{10}}}$\vspace{3pt}\\
Hubert ($H$)&$\frac{N_{11}+N_{00}-N_{10}-N_{01}}{N_{11}+N_{10}+N_{01}+N_{00}}$\vspace{3pt}\\
Fowlkes\&Mallow&$\frac{N_{11}}{\sqrt{(N_{11}+N_{10})(N_{11}+N_{01})}}$\vspace{3pt}\\
Sokal\&Sneath-II&$\frac{\frac{1}{2}N_{11}}{\frac{1}{2}N_{11}+N_{10}+N_{01}}$\vspace{3pt}\\
Normalized Mirkin\footnotemark&$\frac{N_{10}+N_{01}}{N_{11}+N_{10}+N_{01}+N_{00}}$\vspace{3pt}\\
Kulczynski&$\frac{1}{2}\left(\frac{N_{11}}{N_{11}+N_{10}}+\frac{N_{11}}{N_{11}+N_{01}}\right)$\vspace{3pt} \\
McConnaughey&$\frac{N_{11}^2-N_{10}N_{01}}{(N_{11}+N_{10})(N_{11}+N_{01})}$\vspace{3pt}\\
Yule&$\frac{N_{11}N_{00}-N_{10}N_{01}}{N_{11}N_{10}+N_{01}N_{00}}$\vspace{3pt}\\
Baulieu-I&$\frac{(N_{11}+N_{10}+N_{01}+N_{00})(N_{11}+N_{00})+(N_{10}-N_{01})^2}{(N_{11}+N_{10}+N_{01}+N_{00})^2}$\vspace{3pt}\\
Russell\&Rao&$\frac{N_{11}}{N_{11}+N_{10}+N_{01}+N_{00}}$\vspace{3pt}\\
Fager\&McGowan&$\frac{N_{11}}{\sqrt{(N_{11}+N_{10})(N_{11}+N_{01})}}-\frac{1}{2\sqrt{N_{11}+N_{10}}}$\\
Peirce&$\frac{N_{11}N_{00}-N_{10}N_{01}}{(N_{11}+N_{01})(N_{00}+N_{10})}$\vspace{3pt}\\
Baulieu-II&$\frac{N_{11}N_{00}-N_{10}N_{01}}{(N_{11}+N_{10}+N_{01}+N_{00})^2}$\vspace{3pt}\\
Sokal\&Sneath-III&$\frac{N_{11}N_{00}}{\sqrt{(N_{11}+N_{10})(N_{11}+N_{01})(N_{00}+N_{10})(N_{00}+N_{01})}}$\vspace{3pt}\\
Gower\&Legendre&$\frac{N_{11}+N_{00}}{N_{11}+\frac{1}{2}(N_{10}+N_{01})+N_{00}}$\vspace{3pt}\\
Rogers\&Tanimoto&$\frac{N_{11}+N_{00}}{N_{11}+2(N_{10}+N_{01})+N_{00}}$\vspace{3pt}\\
Goodman\&Kruskal&$\frac{N_{11}N_{00}-N_{10}N_{01}}{N_{11}N_{00}+N_{10}N_{01}}$\vspace{3pt}\\
\bottomrule
    \end{tabular}
    \end{center}
\end{table*}

\subsection{Pair-counting Indices and Their Equivalences}\label{apx:PairCounting}

Pair-counting similarity indices are defined in Table~\ref{tab:PairCountingList}. Table~\ref{tab:EquivalentIndices} lists linearly equivalent indices (see 
Definition~\ref{def:equivalence}).
Note that our linear equivalence differs from the less restrictive monotonous equivalence given in \citep{Batagelj1995}. In the current work, we have to restrict to linear equivalence as the constant baseline property is not invariant to non-linear transformations.

\begin{table}
    \caption{Equivalent pair-counting indices}
    \vskip 0.15in
    \label{tab:EquivalentIndices}
    \centering
    \begin{tabular}{ll}
    \toprule
\textbf{Representative Index}&\textbf{Equivalent indices}  \\
\midrule
Rand &  Normalized Mirkin Metric, Hubert\\
Jaccard& Jaccard Distance\\
Wallace1&Wallace2\\
Kulczynski&McConnaughey\\
\bottomrule
    \end{tabular}
\end{table}

\subsection{Defining the Subclass of Pair-counting Indices}~\label{sec:pair_symmetric}
From Definition~\ref{def:pair_counting}, it follows that a pair-counting index is a function of two binary vectors $\vec{A},\vec{B}$ of length $N$.
Note that this binary-vector representation has some redundancy: whenever $u,v$ and $v,w$ form intra-cluster pairs, we know that $u,w$ must also be an intra-cluster pair. 
Hence, not every binary vector of length $N$ represents a clustering. The class of $N$-dimensional binary vectors is, however, isomorphic to the class of undirected graphs on $n$ vertices.
Therefore, pair-counting indices are also able to measure the similarity between graphs.
For example, for an undirected graph $G=(V,E)$, one can consider its incidence vector $\vec{G}=(\mathbf{1}\{\{v,w\}\in E\})_{v,w\in V}$. Hence, pair-counting indices can be used to measure the similarity between two graphs or between a graph and a clustering.
So, one may see a connection between graph and cluster similarity indices.
For example, the Mirkin metric is a pair-counting index that coincides with the Hamming distance between the edge-sets of two graphs \citep{donnat2018tracking}.
Another example is the Jaccard graph distance, which turns out to be more appropriate for comparing sparse graphs~\citep{donnat2018tracking}.
Thus, all pair-counting indices and their properties discussed in the current paper can also be applied to graph-graph and graph-partition similarities.

In this section, we show that the subclass of pair-counting similarity indices can be uniquely defined by the property of being pair-symmetric. 

\footnotetext{Throughout the literature, the Mirkin metric is defined as $2(N_{10}+N_{01})$, but we use this variant as it satisfies the scale-invariance.}

For two graphs $G_1$ and $G_2$ let $M_{G_1G_2}$ denote the $N\times2$ matrix that is obtained by concatenating their adjacency vectors.
Let us write $V_M^{(G)}(M_{G_1G_2})$ for the similarity between two graphs $G_1,G_2$ according to some graph similarity index $V^{(G)}$.
We will now characterize all pair-counting similarity indices as a subclass of the class of similarity indices between undirected graphs.
\begin{definition}
We define a graph similarity index $V_M^{(G)}(M_{G_1G_2})$ to be \emph{pair-symmetric} if interchanging two rows of $M_{G_1,G_2}$ leaves the index unchanged.
\end{definition}
We give the following result.
\begin{lemma}\label{thm:pairSymmetry}
The class of pair-symmetric graph similarity indices coincides with the class of pair-counting cluster similarity indices.
\end{lemma}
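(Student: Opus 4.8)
\textbf{Proof proposal for Lemma~\ref{thm:pairSymmetry}.}

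The plan is to prove the two inclusions separately: every pair-counting cluster similarity index, viewed as a graph similarity index on the isomorphic class of $N$-dimensional binary vectors, is pair-symmetric; and conversely every pair-symmetric graph similarity index depends only on the pair-counts $N_{11},N_{10},N_{01},N_{00}$ and hence is a pair-counting index in the sense of Definition~\ref{def:pair_counting}. The first direction is essentially immediate from the definitions: a pair-counting index is by hypothesis a function of $N_{11},N_{10},N_{01},N_{00}$, and permuting two rows of $M_{G_1G_2}$ permutes which element-pairs contribute to which count but leaves the four counts themselves unchanged; therefore the index value is unchanged, which is exactly pair-symmetry.

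For the converse, suppose $V_M^{(G)}$ is pair-symmetric. The key observation is that the group of row-permutations acts transitively on the set of $N\times 2$ binary matrices having a prescribed multiset of rows — i.e., on the set of matrices with exactly $N_{11}$ rows equal to $11$, $N_{10}$ rows equal to $10$, and so on. Since $V_M^{(G)}$ is invariant under every transposition of rows, and transpositions generate the full symmetric group $S_N$, the index is constant on each such orbit. Consequently $V_M^{(G)}(M_{G_1G_2})$ depends on $M_{G_1G_2}$ only through the tuple $(N_{11},N_{10},N_{01},N_{00})$ of row-type counts, which is precisely the statement that it can be written as a function $V(N_{11},N_{10},N_{01},N_{00})$. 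Restricting this graph similarity index to inputs that are (binary-vector representations of) genuine clusterings then yields a cluster similarity index that is a function of the pair-counts, i.e., a pair-counting index.

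One subtlety worth spelling out is the domain issue already flagged in the text: not every $N$-dimensional binary vector is the incidence vector of a clustering (transitivity of the "same-cluster" relation is required). So when we pass from the graph index to the cluster index we are restricting the domain; conversely, when we extend a pair-counting cluster index to all graphs we use its closed-form expression in the pair-counts, which makes sense on arbitrary binary vectors. I would note that the correspondence between equivalence classes is clean precisely because the pair-counts $N_{11},N_{10},N_{01},N_{00}$ are well-defined for any pair of binary vectors, clustering or not.

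The main obstacle — really the only nontrivial point — is justifying the transitivity of the $S_N$-action on matrices with fixed row-multiset, and then invoking the fact that transpositions generate $S_N$ to conclude that pair-symmetry (stated for a single transposition of rows) upgrades to full permutation-invariance. Both are standard, so the proof is short; the real content of the lemma is conceptual, namely that "depending only on the pair-counts" and "being invariant under relabelling of the pairs" are the same condition.
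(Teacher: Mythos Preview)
Your proposal is correct and follows essentially the same approach as the paper: both arguments rest on the observation that a pair-symmetric index depends on $M_{G_1G_2}$ only through the multiset of its rows, which is completely determined by the four pair-counts. You are simply more explicit than the paper in treating the trivial direction separately, invoking that transpositions generate $S_N$, and flagging the domain issue about which binary vectors arise from clusterings.
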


\begin{proof}
A matrix is an ordered list of its rows.
An unordered list is a multiset.
Hence, when we disregard the ordering of the matrix $M_{AB}$, we get a multiset of the rows.
This multiset contains at most four distinct elements with multiplicities corresponding to the four pair-counts.
Therefore, each $V_{M}^{(G)}(M_{AB})$ that is symmetric w.r.t.~interchanging rows is equivalently a function of the pair-counts of $A$ and $B$. 
\end{proof}

\section{Checking Properties for Indices}\label{apx:propertiesproofs}

In this section, we check all non-trivial properties for all indices. The properties of symmetry, maximal/minimal agreement and asymptotic constant baseline can trivially be tested by simply checking $V(B,A)=V(A,B)$, $V(A,A)=c_{\max}$, $V(0,N_{10},N_{01},0)=c_{\min}$ and $V\left(\overline{N_{11}}, \overline{N_{10}},\overline{N_{01}} ,\overline{N_{00}}\right)=c_{\text{base}}$ respectively.
For pair-counting indices, we will frequently use the notation $p_{AB}=N_{11}/N,p_A=(N_{11}+N_{10})/N,p_B=(N_{11}+N_{01})/N$ and write $V^{(p)}(p_{AB},p_A,p_B)$ instead of $V(N_{11},N_{10},N_{01},N_{00})$.

\subsection{Distance}\label{apx:Distance}
\subsubsection{Positive cases}
\paragraph{NMI and VI.} In \cite{Vinh2010a} it is proven that for max-normalization $1-\mathrm{NMI}$ is a distance, while in \cite{Meila2007} it is proven that VI is a distance.

\paragraph{Rand.} The Mirkin metric $1-R$ corresponds to a rescaled version of the size of the symmetric difference between the sets of intra-cluster pairs. The symmetric difference is known to be a distance metric. 

\paragraph{Jaccard.} In \cite{Kosub2019}, it is proven that the Jaccard distance $1-J$ is indeed a distance.

\paragraph{Correlation Distance.} 
In Theorem~\ref{thm:CdDistance}
 it is proven that Correlation Distance is indeed a distance.

\subsubsection{Negative cases}
To prove that an index that satisfies symmetry and maximal agreement is not linearly transformable to a distance metric, we only need to disprove the triangle inequality for one instance of its equivalence class that is nonnegative and equals zero for maximal agreement.

\paragraph{FNMI and Wallace.} These indices cannot be transformed to distances as they are not symmetric.

\paragraph{SMI.} SMI does not satisfy the maximal agreement property \citep{Romano2014}, so it cannot be transformed to a metric.

\paragraph{FMeasure and BCubed.} We will use a simple counter-example, where $|V|=3,k_A=1,k_B=2,k_C=3$. Let us denote the FMeasure and BCubed by $FM,BC$ respectively. We get
\begin{align*}
1-\mathrm{FM}(A,C)=1-0.5>(1-0.8)+(1-0.8)
=(1-\mathrm{FM}(A,B))+(1-\mathrm{FM}(B,C))
\end{align*}
and
\begin{align*}
1-\mathrm{BC}(A,C)=1-0.5>(1-0.71)+(1-0.8)
\approx(1-\mathrm{BC}(A,B))+(1-\mathrm{BC}(B,C)),
\end{align*}
so that both indices violate the triangle inequality in this case.

\paragraph{Adjusted Rand, Dice, Correlation Coefficient, Sokal\&Sneath and AMI.} For these indices, we use the following counter-example: Let $A=\{\{0,1\},\{2\},\{3\}\},B=\{\{0,1\},\{2,3\}\},C=\{\{0\},\{1\},\{2,3\}\}$. Then $p_{AB}=p_{BC}=1/6$ and $p_{AC}=0$ while $p_A=p_C=1/6$ and $p_B=1/3$. By substituting these variables, one can see that
\begin{align*}
1-V^{(p)}(p_{AC},p_A,p_C)
>(1-V^{(p)}(p_{AB},p_A,p_B))+(1-V^{(p)}(p_{BC},p_B,p_C)),
\end{align*}
holds for each of these indices, contradicting the triangle inequality.
The same $A,B$ and $C$ also form a counter-example for AMI.

\subsection{Linear Complexity}\label{apx:complexity}
We will frequently make use of the following lemma:
\begin{lemma}
The nonzero values of $n_{ij}$ can be computed in $O(n)$.
\end{lemma}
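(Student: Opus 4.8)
The plan is to show that all nonzero contingency values $n_{ij}=|A_i\cap B_j|$ can be computed in total time $O(n)$, by exploiting the fact that each element of $\{1,\dots,n\}$ contributes to exactly one pair $(i,j)$, so the number of nonzero $n_{ij}$ is at most $n$, and each element needs only constant work. Concretely, I would first assume the two clusterings $A$ and $B$ are given in the natural way: as arrays $a[1..n]$ and $b[1..n]$, where $a[v]$ is the index of the cluster of $A$ containing element $v$ (and similarly for $b$); converting from a list-of-subsets representation to this label representation is a single $O(n)$ pass. I would also note $k_A,k_B\le n$.

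The key steps, in order: (1) Initialize an associative structure (e.g., a hash map, or a dictionary keyed by the pair $(a[v],b[v])$) that will store the nonzero counts; this is $O(1)$. (2) Make a single pass over $v=1,\dots,n$: for each $v$, look up the key $(a[v],b[v])$ and increment its stored value by $1$ (inserting the key with value $1$ if absent). Each iteration does $O(1)$ work (using hashing with expected $O(1)$ operations, or—if a deterministic bound is wanted—using that both labels lie in $\{1,\dots,n\}$ so the pair can be encoded as the single integer $(a[v]-1)\cdot n + b[v]$ and bucketed). (3) At the end, the structure contains exactly the nonzero $n_{ij}$ together with their values, and there are at most $n$ of them; reading them off is $O(n)$. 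Summing the per-step costs gives $O(n)$ overall.

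The main obstacle — really the only subtlety — is making the $O(1)$-per-element bookkeeping rigorous without hand-waving about hash tables. The clean fix is to observe that we never need a full $k_A\times k_B$ table (which could be $\Theta(n^2)$ and would blow the budget): since at most $n$ pairs $(a[v],b[v])$ ever occur, we only store those. One can make this fully deterministic by first radix-sorting the $n$ pairs $(a[v],b[v])$ — each coordinate is an integer in $[1,n]$, so two passes of counting sort run in $O(n)$ — and then scanning the sorted list, where equal consecutive pairs are grouped and counted in one sweep. This avoids any amortized or expected-time caveat and yields a clean worst-case $O(n)$ bound, which is what the subsequent linear-complexity proofs will want to invoke.

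I expect the proof in the paper to be short and to lean on exactly this counting/pigeonhole idea: $n$ elements, each in one intersection, hence $\le n$ nonzero $n_{ij}$, each discoverable in constant time. I would present it essentially as above, flagging the sorting-based implementation as the way to get a deterministic worst-case guarantee rather than merely an expected one.
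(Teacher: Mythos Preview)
Your proposal is correct and follows essentially the same approach as the paper: iterate once over the $n$ elements, incrementing a hash-table entry keyed by the pair $(a[v],b[v])$. You are in fact more careful than the paper, which simply asserts amortized $O(1)$ hash-table operations and stops; your radix-sort alternative for a deterministic worst-case bound is a nice addition that the paper does not include.
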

\begin{proof}
We will store these nonzero values in a hash-table that maps the pairs $(i,j)$ to their value $n_{ij}$. These values are obtained by iterating through all $n$ elements and incrementing the corresponding value of $n_{ij}$.
For hash-tables, searches and insertions are known to have amortized complexity complexity $O(1)$, meaning that any sequence of $n$ such actions has worst-case running time of $O(n)$, from which the result follows.
\end{proof}
\subsubsection{Positive cases}
\paragraph{NMI, FNMI and VI.}
Given the positive values of $n_{ij}$, it is clear that the joint and marginal entropy values can be computed in $O(n)$. From these values, the indices can be computed in constant time, leading to a worst-case running time of $O(n)$.

\paragraph{FMeasure and BCubed.}
Note that in the expressions of recall and precision as defined by these indices, only the positive values of $n_{ij}$ contribute.
Furthermore, all of the variables $a_i,b_j$ and $n_{ij}$ appear at most once, so that these can indeed be computed in $O(n)$.

\paragraph{Pair-counting indices.}
Note that $N_{11}=\sum_{n_{ij}>1}{n_{ij}\choose 2}$ can obviously be computed in $O(n)$. Similarly, $m_A=\sum_{i=1}^{k_A}{a_i\choose 2}$ and $m_B$ can be computed in $O(k_A),O(k_B)$ respectively. The other pair-counts are then obtained by $N_{10}=m_A-N_{11}$, $N_{01}=m_B-N_{11}$ and $N_{00}=N-m_A-m_B+N_{11}$.

\subsubsection{Negative cases: AMI and SMI.}
Both of these require the computation of the expected mutual information. It has been known \cite{Romano2016} that this has a worst-case running time of $O(n\cdot\max\{k_A,k_B\})$ while $\max\{k_A,k_A\}$ can be $O(n)$.

\subsection{Strong Monotonicity}\label{apx:strong_monotonicity}
\subsubsection{Positive cases}

\paragraph{Correlation Coefficient.}
This index has the property that inverting one of the binary vectors results in the index flipping sign. Furthermore, the index is symmetric.
Therefore, we only need to prove that this index is increasing in $N_{11}$.
We take the derivative and omit the constant factor $((N_{00}+N_{10})(N_{00}+N_{01}))^{-\frac{1}{2}}$:
\begin{align*}
&\frac{N_{00}}{\sqrt{(N_{11}+N_{10})(N_{11}+N_{01})}}
-\frac{(N_{11}N_{00}-N_{10}N_{01})\cdot\frac{1}{2}(2N_{11}+N_{10}+N_{01})}{[(N_{11}+N_{10})(N_{11}+N_{01})]^{1.5}}\\
=&\frac{\frac{1}{2}N_{11}N_{00}(N_{10}+N_{01})+N_{00}N_{10}N_{01}}{[(N_{11}+N_{10})(N_{11}+N_{01})]^{1.5}}
+\frac{\frac{1}{2}N_{10}N_{01}(2N_{11}+N_{10}+N_{01})}{[(N_{11}+N_{10})(N_{11}+N_{01})]^{1.5}}>0.
\end{align*}

\paragraph{Correlation Distance.} The correlation distance satisfies strong monotonicity as it is a monotone transformation of the correlation coefficient, which meets the property.

\paragraph{Sokal\&Sneath.} All four fractions are nondecreasing in $N_{11},N_{00}$ and nonincreasing in $N_{10},N_{01}$ while for each of the variables there is one fraction that satisfies the monotonicity strictly so that the index is strongly monotonous.

\paragraph{Rand Index.} For the Rand index, it can be easily seen from the form of the index that it is increasing in $N_{11},N_{00}$ and decreasing in $N_{10},N_{01}$ so that it meets the property.

\subsubsection{Negative cases}
\paragraph{Jaccard, Wallace, Dice.} All these three indices are constant w.r.t.~$N_{00}$. Therefore, these indices do not satisfy strong monotonicity. 

\paragraph{Adjusted Rand.}
It holds that
\[
AR(1,2,1,0)<AR(1,3,1,0),
\]
so that the index does not meet the strong monotonicity property.

\subsection{Monotonicity}\label{apx:monotonicity}
\subsubsection{Positive cases}

\paragraph{Rand, Correlation Coefficient, Sokal\&Sneath, Correlation Distance.} Strong monotonicity implies monotonicity. Therefore, these pair-counting indices satisfy the monotonicity property.

\paragraph{Jaccard and Dice.} It can be easily seen that these indices are increasing in $N_{11}$ while decreasing in $N_{10},N_{01}$. For $N_{00}$, we note that whenever $N_{00}$ gets increased, either $N_{10}$ or $N_{01}$ must decrease, resulting in an increase of the index. Therefore, these indices satisfy monotonicity.

\paragraph{Adjusted Rand.}
Note that for $b,b+d>0$, it holds that
\begin{equation}\label{eq:ARmon1}
\frac{a+c}{b+d}>\frac{a}{b}\Leftrightarrow c>\frac{ad}{b}.
\end{equation}
We will let $a,b$ denote the numerator and denomenator of Adjusted Rand while $c,d$ will denote their change when incrementing $N_{11}$ or $N_{00}$ while decrementing $N_{10}$ or $N_{01}$.
For Adjusted Rand, we have
\begin{align*}
a=N_{11}-\frac{1}{N}(N_{11}+N_{10})(N_{11}+N_{01}),\quad b=a+\frac{1}{2}(N_{10}+N_{01}).
\end{align*}
Because of this, when we increment either $N_{11}$ or $N_{00}$ while decrementing either $N_{10}$ or $N_{01}$, we get $d=c-\frac{1}{2}$. Hence, we need to prove $c>a(c-\frac{1}{2})/b$, or, equivalently
\[
c>-\frac{a}{2(b-a)}=\frac{\frac{1}{N}(N_{11}+N_{10})(N_{11}+N_{01})-N_{11}}{N_{10}+N_{01}}.
\]
For simplicity we rewrite this to
\[
c+\frac{p_{AB}-p_Ap_B}{p_A+p_B-2p_{AB}}>0,
\]
where $p_{AB}=\frac{N_{11}}{N}$, $p_A=\frac{1}{N}(N_{11}+N_{10})$ and $p_B=\frac{1}{N}(N_{11}+N_{01})$.
If we increment $N_{00}$ while decrementing either $N_{10}$ or $N_{01}$, then $c\in\{p_A,p_B\}$.
The symmetry of AR allows us to w.l.o.g.~assume that $c=p_A$. We write
\[
    p_A+\frac{p_{AB}-p_Ap_B}{p_A+p_B-2p_{AB}}=\frac{p_A^2+(1-2p_A)p_{AB}}{p_A+p_B-2p_{AB}}.
\]
When $p_A\leq\frac{1}{2}$, then this is clearly positive. For the case $p_A>\frac{1}{2}$, we bound $p_{AB}\leq p_A$ and bound the numerator by
\[
p_A^2+(1-2p_A)p_A=(1-p_A)p_A>0.
\]
This proves the monotonicity for increasing $N_{00}$.
When incrementing $N_{11}$ while decrementing either $N_{10}$ or $N_{01}$, we get $c\in\{1-p_A,1-p_B\}$. Again, we assume w.l.o.g. that $c=1-p_A$ and write
\begin{align*}
1-p_A+\frac{p_{AB}-p_Ap_B}{p_A+p_B-2p_{AB}}
=\frac{p_A(1-p_A)+(1-2p_A)(p_B-p_{AB})}{p_A+p_B-2p_{AB}}.
\end{align*}
This is clearly positive whenever $p_A\leq\frac{1}{2}$. When $p_A>\frac{1}{2}$, we bound $p_{AB}\geq p_A+p_B-1$ and rewrite the numerator as
\[
p_A(1-p_A)+(1-2p_A)(p_A-1)=(1-p_A)(3p_A-1)>0.
\]
This proves monotonicity for increasing $N_{11}$.
Hence, the monotonicity property is met.

\paragraph{NMI and VI.} Let $B'$ be obtained by a perfect split of a cluster $B_1$ into $B_1',B_2'$. Note that this increases the entropy of the candidate while keeping the joint entropy constant. Let us denote this increase in the candidate entropy by the conditional entropy $H(B'|B)=H(B')-H(B)>0$. Now, for NMI, the numerator increases by $H(B'|B)$ while the denominator increases by at most $H(B'|B)$ (dependent on $H(A)$ and the specific normalization that is used). Therefore, NMI increases.
Similarly, VI decreases by $H(B'|B)$. Concluding, both NMI and VI are monotonous w.r.t.~perfect splits.
Now let $B''$ be obtained by a perfect merge of $B_1,B_2$ into $B_1''$. This results in a difference of the entropy of the candidate $H(B'')-H(B)=-H(B|B'')<0$. The joint entropy decreases by the same amount, so that the mutual information remains unchanged. Therefore, the numerator of NMI remains unchanged while the denominator may or may not change, depending on the normalization. For min- or max-normalization, it may remain unchanged while for any other average it increases. Hence, NMI does not satisfy monotonicity w.r.t.~perfect merges for min- and max-normalization but does satisfy this for average-normalization. 
For VI, the distance will decrease by $H(B|B'')$ so that it indeed satisfies monotonicity w.r.t.~perfect merges.

\paragraph{AMI.} Let $B'$ be obtained by splitting a cluster $B_1$ into $B_1',B_2'$. This split increases the mutual information by $H(B'|B)-H(A\otimes B'|A\otimes B)$. 
Recall the definition of the meet $A\otimes B$ from \ref{apx:monotonicity} and note that the joint entropy equals $H(A\otimes B)$.
For a perfect split we have $H(A\otimes B'|A\otimes B)=0$. The expected mutual information changes with
\[
\mathbf{E}_{A'\sim\mathcal{C}(S(A))}[M(A',B')-M(A',B)]
=H(B'|B)-\mathbf{E}_{A'\sim\mathcal{C}(S(A))}[H(A'\otimes B')-H(A'\otimes B)],
\]
where we choose to randomize $A$ instead of $B'$ and $B$ for simplicity. Note that for all $A'$,
\[H(A'\otimes B)-H(A'\otimes B')=H(A'\otimes B'|A'\otimes B)\geq 0,\]
with equality if and only if the split is a perfect split w.r.t.~$A'$. Unless $A$ consists exclusively of singleton clusters, there is a positive probability that this split is not perfect, so that the expected value is positive. Furthermore, for the normalization term, we have $\sqrt{H(A)H(B')}< \sqrt{H(A)H(B)}+H(B'|B)$.
Combining this, we get
\begin{align*}
&\mathrm{AMI}(A,B')\\
=&\frac{M(A,B)-\mathbf{E}_{A'\sim\mathcal{C}(S(A))}[M(A',B)]+\mathbf{E}_{A'\sim\mathcal{C}(S(A))}[H(A'\otimes B'|A'\otimes B)]}{\sqrt{H(A)H(B')}-H(B'|B)-\mathbf{E}_{A'\sim\mathcal{C}(S(A))}[M(A',B)]+\mathbf{E}_{A'\sim\mathcal{C}(S(A))}[H(A'\otimes B'|A'\otimes B)]}\\
>& \frac{M(A,B)-\mathbf{E}_{A'\sim\mathcal{C}(S(A))}[M(A',B)]+\mathbf{E}_{A'\sim\mathcal{C}(S(A))}[H(A'\otimes B'|A'\otimes B)]}{\sqrt{H(A)H(B)}-\mathbf{E}_{A'\sim\mathcal{C}(S(A))}[M(A',B)]+\mathbf{E}_{A'\sim\mathcal{C}(S(A))}[H(A'\otimes B'|A'\otimes B)]}\\
>&\frac{M(A,B)-\mathbf{E}_{A'\sim\mathcal{C}(S(A))}[M(A',B)]}{\sqrt{H(A)H(B)}-\mathbf{E}_{A'\sim\mathcal{C}(S(A))}[M(A',B)]}=\mathrm{AMI}(A,B).
\end{align*}
This proves that AMI satisfies monotonicity w.r.t.~perfect splits. 

Now let $B''$ be obtained by a perfect merge of $B_1,B_2$ into $B_1''$. Again, we have $H(B'')-H(B)=-H(B|B''<0)$ and $M(A,B'')=M(A,B)$.
Let $A'\sim\mathcal{C}(S(A))$ (again, randomizing $A$ instead of $B$ and $B''$ for simplicity), then $H(A'\otimes B'')\geq H(A'\otimes B)-H(B|B'')$ with equality if and only if $B''$ is a perfect merge w.r.t.~$A'$ which happens with probability strictly less than $1$ (unless $A$ consists of a single cluster).
Therefore, as long as $k_A>1$, the expected mutual information decreases. For the normalization, we have $\sqrt{H(A)H(B'')}<\sqrt{H(A)H(B)}$. Hence,
\begin{align*}
    \mathrm{AMI}(A,B'')&=\frac{M(A,B'')-\mathbf{E}_{A'\sim\mathcal{C}(S(A))}[M(A',B'')]}{\sqrt{H(A)H(B'')}-\mathbf{E}_{A'\sim\mathcal{C}(S(A))}[M(A',B'')]}\\
    &=\frac{M(A,B)-\mathbf{E}_{A'\sim\mathcal{C}(S(A))}[M(A',B'')]}{\sqrt{H(A)H(B'')}-\mathbf{E}_{A'\sim\mathcal{C}(S(A))}[M(A',B'')]}\\
    &>\frac{M(A,B)-\mathbf{E}_{A'\sim\mathcal{C}(S(A))}[M(A',B)]}{\sqrt{H(A)H(B'')}-\mathbf{E}_{A'\sim\mathcal{C}(S(A))}[M(A',B)]}\\
    &>\frac{M(A,B)-\mathbf{E}_{A'\sim\mathcal{C}(S(A))}[M(A',B)]}{\sqrt{H(A)H(B)}-\mathbf{E}_{A'\sim\mathcal{C}(S(A))}[M(A',B)]}\\
    &=\mathrm{AMI}(A,B).
\end{align*}

\paragraph{BCubed.} Note that a perfect merge increases BCubed recall while leaving BCubed precision unchanged and that a perfect split increases precision while leaving recall unchanged.
Hence, the harmonic mean increases.

\subsubsection{Negative cases}
\paragraph{FMeasure.} We give a numerical counter-example:
consider $A=\{\{0,\dots,6\}\},B=\{\{0,1,2,3\},\{4,5\},\{6\}\}$ and merge the last two clusters to obtain $B'=\{\{0,1,2,3\},\{4,5,6\}\}$. Then, the FMeasure remains unchanged and equal to $0.73$, violating monotonicity w.r.t.~perfect merges.

\paragraph{FNMI} We will give the following numerical counter-example:
Consider $A=\{\{0,1\},\{2\},\{3\}\},B=\{\{0\},\{1\},\{2,3\}\}$ and merge the first two clusters to obtain $B'=\{\{0,1\},\{2,3\}\}$.
This results in 
\[
\mathrm{FNMI}(A,B)\approx0.67>0.57\approx \mathrm{FNMI}(A,B').
\]
This non-monotonicity is caused by the penalty factor that equals $1$ for the pair $A,B$ and equals $\exp(-1/3)\approx0.72$ for $A,B'$.

\paragraph{SMI.} For this numerical counter-example we rely on the Matlab-implementation of the index by its original authors \citep{Romano2014}.
Let $A=\{\{0,\dots,4\},\{5\}\},B=\{\{0,1\},\{2,3\},\{4\},\{5\}\}$ and consider merging the two clusters resulting in $B'=\{\{0,1,2,3\},\{4\},\{5\}\}$. The index remains unchanged and equals $2$ before and after the merge.

\paragraph{Wallace.} Let $k_A=1$ and let $k_B>1$. Then any merge of $B$ is a perfect merge, but no increase occurs since $W_1(A,B)=1$. 

\subsection{Constant Baseline}\label{apx:ConstantBaseline}
\subsubsection{Positive cases}
\paragraph{AMI and SMI.} Both of these indices satisfy the constant baseline by construction since the expected mutual information is subtracted from the actual mutual information in the numerator.

\paragraph{Adjusted Rand, Correlation Coefficient and Sokal\&Sneath.} These indices all satisfy ACB while being linear in $p_{AB}$-linear for fixed $p_A,p_B$. Thus, by linearity of expectation, the expected value equals the asymptotic constant.

\subsubsection{Negative cases}

For all the following indices, we will analyse the counter-example given by $k_A=k_B=n-1$. For each index, we will compute the expected value and show that it is not constant. All of these indices satisfy the maximal agreement property and maximal agreement is achieved with probability $1/N$ (the probability that the single intra-pair of $A$ coincides with the single intra-pair of $B$). Furthermore, each case where the intra-pairs do not coincide will result in the same contingency variables and hence the same value of the index. We will refer to this value as $c_n(V)$. Therefore, the expected value will only have to be taken over two values and will be given by
\[
\mathbf{E}[V(A,B)]=\frac{1}{N}c_{\max}+\frac{N-1}{N}c_n(V).
\]
For each of these indices we will conclude that this is a non-constant function of $n$ so that the index does not satisfy the constant baseline property.

\paragraph{Jaccard and Dice.} For both these indices we have $c_{\max}=1$ and $c_n(V)=0$ (as $N_{11}=0$ whenever the intra-pairs do not coincide). Hence, $\mathbf{E}[V(A,B)]=\frac{1}{N}$, which is not constant.

\paragraph{Rand and Wallace.} As both functions are linear in $N_{11}$ for fixed $m_A=N_{11}+N_{10},m_B=N_{11}+N_{01}$, we can compute the expected value by simply substituting $N_{11}=m_Am_B/N$. This will result in expected values $1-2/N+2/N^2$ and $1/N$ for Rand and Wallace respectively, which are both non-constant.

\paragraph{Correlation distance.} Here $c_{\max}=0$ and 
\[
c_n(CD)=\frac{1}{\pi}\arccos\left(
\frac{0-1/N^2}{(N-1)/N^2}
\right),
\]
so that the expected value will be given by
\[
\mathbf{E}[CD(A,B)]=\frac{N-1}{N\pi}\arccos\left(-\frac{1}{N-1}\right).
\]
This is non-constant (it evaluates to $0.44,0.47$ for $n=3,4$ respectively). Note that this expected value converges to $\frac{1}{2}$ for $n\rightarrow\infty$, which is indeed the asymptotic baseline of the index.

\paragraph{FNMI and NMI.} Note that in this case $k_A=k_B$ so that the penalty term of FNMI will equal $1$ and FNMI will coincide with NMI. Again $c_{\max}=1$. For the case where the intra-pairs do not coincide, the joint entropy will equal $H(A,B)=\ln(n)$ while each of the marginal entropies will equal 
\begin{align*}
H(A)=H(B)=\frac{n-2}{n}\ln(n)+\frac{2}{n}\ln(n/2)=\ln(n)-\frac{2}{n}\ln(2).
\end{align*}
This results in
\begin{align*}
c_n(\textrm{NMI})=\frac{2H(A)-H(A,B)}{H(A)}=1-\frac{2\ln(n)}{n\ln(n)-2\ln(2)},
\end{align*}
and the expected value will be given by the non-constant
\[
\mathbf{E}[\textrm{NMI}(A,B)]=1-\frac{N-1}{N}\frac{2\ln(n)}{n\ln(n)-2\ln(2)}.
\]
Note that as $H(A)=H(B)$, all normalizations of MI will be equal so that this counter-example proves that none of the variants of (F)NMI satisfy the constant baseline property. 

\paragraph{Variation of Information.} In this case $c_{\max}=0$. We will use the  entropies from the NMI-computations to conclude that
\begin{align*}
\mathbf{E}[\textrm{VI}(A,B)]=\frac{N-1}{N}(2H(A,B)-H(A)-H(B))
=\frac{N-1}{N}\frac{4}{n}\ln(2),
\end{align*}
which is again non-constant.

\paragraph{F-measure.} Here $c_{\max}=1$. In the case where the intra-pairs do not coincide, all contingency variables will be either one or zero so that both recall and precision will equal $1-1/n$ so that $c_n(\textrm{FM})=1-1/n$. This results in the following non-constant expected value
\[
\mathbf{E}[\textrm{FM}(A,B)]=1-\frac{N-1}{N}\frac{1}{n}.
\]
Note that because recall equals precision in both cases, this counter-example also works for other averages than the harmonic average.

\paragraph{BCubed.} Again $c_{\max}=1$. In the other case, the recall and precision will again be equal. Because for BCubed, the contribution of cluster $i$ is given by $\frac{1}{n}\max\{n_{ij}^2\}/|A_i|$, the contributions of the one- and two-clusters will be given by $\frac{1}{n},\frac{1}{2n}$ respectively. Hence, $c_n(\textrm{BC})=\frac{n-2}{n}+\frac{1}{2n}=1-\frac{3}{2n}$ and we get the non-constant
\[
\mathbf{E}[BC(A,B)]=1-\frac{N-1}{N}\cdot \frac{3}{2n}.
\]
We note that again, this counter-example can be extended to non-harmonic averages of the BCubed recall and precision.

\section{Further Analysis of Constant Baseline Property}

\subsection{Analysis of Exact Constant Baseline Property}\label{apx:AnalysisCB}
In this section we will prove equivalence between Definition~\ref{def:GeneralConstantBaseline} and another formulation. 
Let $S(B)$ denote the specification of the cluster sizes of the clustering $B$, i.e., $S(B):=[|B_1|,\dots,|B_{k_B}|]$, where $[\dots]$ denotes a multiset.
For a cluster sizes specification $s$, let $\mathcal{C}(s)$ be the uniform distribution over clusterings $B$ with $S(B)=s$.
We prove the following result:

\begin{lemma}
An index $V$ has a constant baseline if and only if there exists a constant $c_{\text{base}}$ so that, for any clustering $A$ with $1<k_A<n$ and cluster sizes specification $s$, it holds that $\mathbf{E}_{B\sim\mathcal{C}(s)}[V(A,B)]=c_{\text{base}}$.
\end{lemma}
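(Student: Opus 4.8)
The plan is to prove the two directions of the equivalence. One direction is immediate: if $V$ has a constant baseline (Definition~\ref{def:GeneralConstantBaseline}), then since each $\mathcal{C}(s)$ is an element-symmetric distribution (it assigns equal probability to any two clusterings with the same cluster sizes — indeed, by construction it is uniform over exactly the clusterings with cluster-size specification $s$), the expectation condition $\mathbf{E}_{B\sim\mathcal{C}(s)}[V(A,B)]=c_{\text{base}}$ follows directly. So the substantive content is the converse: assuming the expectation is $c_{\text{base}}$ for every $\mathcal{C}(s)$, deduce that $\mathbf{E}_{B\sim\mathcal{B}}[V(A,B)]=c_{\text{base}}$ for an \emph{arbitrary} element-symmetric distribution $\mathcal{B}$.

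**The key step.**

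The idea is that an element-symmetric distribution $\mathcal{B}$ can be decomposed as a mixture of the distributions $\mathcal{C}(s)$, indexed by the cluster-size specification. Concretely, for a clustering $B$ drawn from $\mathcal{B}$, let $S(B)$ be its cluster-size specification; this induces a distribution on specifications, say $q(s):=\Pr_{B\sim\mathcal{B}}[S(B)=s]$. The claim is that, conditioned on $S(B)=s$, the distribution of $B$ is exactly $\mathcal{C}(s)$: this is precisely the element-symmetry hypothesis, which says $\mathcal{B}$ assigns equal probability to any two clusterings sharing the same cluster sizes, so the conditional distribution given a fixed specification is uniform over that fiber. Hence
\begin{align*}
\mathbf{E}_{B\sim\mathcal{B}}[V(A,B)]
&=\sum_{s}q(s)\,\mathbf{E}_{B\sim\mathcal{B}}[V(A,B)\mid S(B)=s]\\
&=\sum_{s}q(s)\,\mathbf{E}_{B\sim\mathcal{C}(s)}[V(A,B)]
=\sum_{s}q(s)\,c_{\text{base}}=c_{\text{base}},
\end{align*}
where the sum ranges over specifications $s$ with $q(s)>0$ (all of which satisfy $1\le k<n$ automatically if $\mathcal{B}$ is supported on such clusterings, but one should note the edge case below). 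This also makes the symmetry remark following Definition~\ref{def:GeneralConstantBaseline} transparent: the reformulated condition is stated purely in terms of $\mathbf{E}_{B\sim\mathcal{C}(s)}[V(A,B)]$ being independent of $A$, which can equally be checked with the roles of $A$ and $B$ swapped once one knows that uniform-over-fixed-sizes distributions suffice.

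**The main obstacle.**

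The conceptual content is light; the care needed is in the bookkeeping around trivial clusterings. In Definition~\ref{def:GeneralConstantBaseline} the reference $A$ is restricted to $1<k_A<n$, but there is no such restriction on the support of $\mathcal{B}$, so one must check that including candidate clusterings $B$ with $k_B=1$ or $k_B=n$ causes no problem — and indeed it does not, because the hypothesis $\mathbf{E}_{B\sim\mathcal{C}(s)}[V(A,B)]=c_{\text{base}}$ is assumed for \emph{all} specifications $s$ (the restriction $1<k_A<n$ is only on the first argument). I would also make explicit that the decomposition above requires nothing more than the law of total expectation together with the identification of the conditional law; no measure-theoretic subtlety arises since the space of clusterings of $\{1,\dots,n\}$ is finite. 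The one genuinely load-bearing line is the assertion ``conditioned on $S(B)=s$, $B\sim\mathcal{C}(s)$,'' and I would spell out that this is a restatement of Definition~\ref{def:ElementSymmetric}: for any two $B,B'$ in the fiber over $s$, $\Pr_{\mathcal{B}}[B]=\Pr_{\mathcal{B}}[B']$, hence each has conditional probability $1/|\{B'':S(B'')=s\}|$, which is the definition of $\mathcal{C}(s)$.
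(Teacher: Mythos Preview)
Your proposal is correct and follows essentially the same approach as the paper: the easy direction uses that each $\mathcal{C}(s)$ is element-symmetric, and the converse decomposes $\mathbf{E}_{B\sim\mathcal{B}}[V(A,B)]$ via the law of total expectation over $S(B)$, identifying the conditional law with $\mathcal{C}(s)$ by element-symmetry. Your treatment is in fact more explicit than the paper's, which leaves the justification of the conditional-law identification and the edge cases implicit.
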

\begin{proof}
One direction follows readily from the fact that $\mathcal{C}(s)$ is an element-symmetric distribution for every $s$. For the other direction, we write
\begin{align*}
\mathbf{E}_{B\sim\mathcal{B}}[V(A,B)]
=&\sum_{s}\mathbf{P}_{B\sim\mathcal{B}}(S(B)=s)\,\mathbf{E}_{B\sim\mathcal{B}}[V(A,B)|S(B)=s]\\
=&\sum_{s}\mathbf{P}_{B\sim\mathcal{B}}(S(B)=s)\,\mathbf{E}_{B\sim\mathcal{C}(s)}[V(A,B)]\\
=&\sum_{s}\mathbf{P}_{B\sim\mathcal{B}}(S(B)=s)\,c_{\text{base}}
=c_{\text{base}},
\end{align*}
where the sum ranges over cluster-sizes of $n$ elements.
\end{proof}

\paragraph{Symmetry of constant baseline}
Note that drawing $B'\sim\mathcal{C}(S(B))$ is equivalent to obtaining $B'$ by randomly permuting the cluster-assignments of $B$.
Note that for the expectation $\mathbf{E}_{B'\sim\mathcal{C}(S(B))}[V(A,B')]$, it does not matter whether we randomly permute the labels of $B$ or $A$, i.e.
\[
\mathbf{E}_{B'\sim\mathcal{C}(S(B))}[V(A,B')]=\mathbf{E}_{A'\sim\mathcal{C}(S(A))}[V(A',B)].
\]
This shows that the definition of constant baseline is indeed symmetric.

\subsection{Analysis of Asymptotic Constant Baseline Property}\label{apx:Var}

\begin{definition}\label{def:scaleInvariant}
An index $V$ is said to be \emph{scale-invariant}, if it can be expressed as a continuous function of the three variables $p_A:=m_A/N,p_B:=m_B/N$ and $p_{AB}:=N_{11}/N$.
\end{definition}

All indices in Table~\ref{tab:PairCountingRequirements} are scale-invariant.
For such indices, we will write $V^{(p)}(p_{AB},p_A,p_B)$.
Note that when $B\sim\mathcal{C}(s)$ for some $s$, the values $p_A,p_B$ are constants while $p_{AB}$ is a random variable. Therefore, we further write $P_{AB}$ to stress that this is a random variable.

\begin{theorem}\label{thm:asympCB}
Let $V$ be a scale-invariant pair-counting index, and consider a sequence of clusterings $A^{(n)}$ and cluster-size specifications $s^{(n)}$. 
Let $N_{11}^{(n)},N_{10}^{(n)},N_{01}^{(n)},N_{00}^{(n)}$ be the corresponding pair-counts.
Then, for any $\varepsilon>0$, as $n\rightarrow\infty$,
\begin{equation*}
\mathbf{P}\left(
\left| V\left(N_{11}^{(n)},N_{10}^{(n)},N_{01}^{(n)},N_{00}^{(n)}\right) - 
V\left(\overline{N_{11}^{(n)}}, \overline{N_{10}^{(n)}},\overline{N_{01}^{(n)}} ,\overline{N_{00}^{(n)}}\right)
\right|
>\varepsilon\right)\rightarrow0.
\end{equation*}
\end{theorem}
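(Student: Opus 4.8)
The plan is to isolate the single source of randomness, show that it concentrates, and then transfer the conclusion through the continuity of the index. Since $V$ is scale-invariant we may write $V(N_{11},N_{10},N_{01},N_{00})=V^{(p)}(P_{AB}^{(n)},p_A^{(n)},p_B^{(n)})$ with $P_{AB}^{(n)}=N_{11}^{(n)}/N^{(n)}$, $p_A^{(n)}=m_A^{(n)}/N^{(n)}$, $p_B^{(n)}=m_B^{(n)}/N^{(n)}$, where $B^{(n)}\sim\mathcal{C}(s^{(n)})$ and $A^{(n)}$ is fixed. Then $p_A^{(n)},p_B^{(n)}$ are deterministic and $P_{AB}^{(n)}$ is the only random quantity; moreover $N_{10}^{(n)},N_{01}^{(n)},N_{00}^{(n)}$ (resp.\ $\overline{N_{10}^{(n)}},\overline{N_{01}^{(n)}},\overline{N_{00}^{(n)}}$) are obtained from $N_{11}^{(n)}$ (resp.\ from $\overline{N_{11}^{(n)}}$) by the same affine relations, so that $V(\overline{N_{11}^{(n)}},\overline{N_{10}^{(n)}},\overline{N_{01}^{(n)}},\overline{N_{00}^{(n)}})=V^{(p)}(p_A^{(n)}p_B^{(n)},p_A^{(n)},p_B^{(n)})$ by \eqref{eq:ExpPairCounts}. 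Thus everything reduces to proving (a) $P_{AB}^{(n)}-p_A^{(n)}p_B^{(n)}\to0$ in probability, and (b) that $V^{(p)}$ is uniformly continuous on the set of parameter triples actually visited.

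For (a) I would first record $\mathbf{E}[P_{AB}^{(n)}]=\overline{N_{11}^{(n)}}/N^{(n)}=p_A^{(n)}p_B^{(n)}$, which follows from \eqref{eq:ExpPairCounts} (equivalently: $N_{11}^{(n)}=\sum_{e\in\mathcal{E}_A}\mathbf{1}\{e\text{ intra-cluster in }B\}$ where $\mathcal{E}_A$ is the fixed set of $m_A^{(n)}$ intra-cluster pairs of $A^{(n)}$, and by exchangeability each pair is intra-cluster in $B$ with probability $m_B^{(n)}/N^{(n)}$). Next I would bound $\Var(N_{11}^{(n)})$. One route is a direct second-moment computation, expanding $\mathbf{E}[(N_{11}^{(n)})^2]=\sum_{e,e'\in\mathcal{E}_A}\mathbf{P}(e,e'\text{ both intra-cluster in }B)$ and splitting according to whether $e=e'$, $e$ and $e'$ share one element, or $e$ and $e'$ are disjoint, using that $B$ is a uniformly random permutation of a fixed reference clustering. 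A cleaner route uses a McDiarmid-type bounded-difference inequality for functions of random permutations: relocating a single element changes $N_{11}^{(n)}$ by at most $n-1$ (only pairs through that element that are intra-cluster in $A$ can flip), hence a transposition of the defining permutation changes it by at most $2(n-1)$, giving $\Var(N_{11}^{(n)})=O(n^3)$. Either way $\Var(P_{AB}^{(n)})=O(n^{-1})\to0$, and Chebyshev's inequality yields $\mathbf{P}(|P_{AB}^{(n)}-p_A^{(n)}p_B^{(n)}|>\delta)\to0$ for every $\delta>0$.

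For (b), if $V^{(p)}$ is continuous on the entire compact region $\{(p_{AB},p_A,p_B):\max(0,p_A+p_B-1)\le p_{AB}\le\min(p_A,p_B)\}$, it is automatically uniformly continuous there. If $V^{(p)}$ has singularities confined to the boundary (as for the Correlation Coefficient, whose normalizing denominator vanishes when $p_A\in\{0,1\}$ or $p_B\in\{0,1\}$), this is exactly where the ``mild assumptions'' enter: one requires $(p_A^{(n)},p_B^{(n)})$ to remain in a fixed compact subset $K$ of the open domain of $V^{(p)}$, whence $V^{(p)}$ is uniformly continuous on the compact set of triples with $(p_A,p_B)\in K$. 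In either case, given $\varepsilon>0$ choose $\delta>0$ so that $|p_{AB}-p_{AB}'|<\delta$ forces $|V^{(p)}(p_{AB},p_A,p_B)-V^{(p)}(p_{AB}',p_A,p_B)|<\varepsilon$ uniformly over the admissible $(p_A,p_B)$. Combining (a) and (b),
\[
\mathbf{P}\!\left(\left|V(N_{11}^{(n)},N_{10}^{(n)},N_{01}^{(n)},N_{00}^{(n)})-V(\overline{N_{11}^{(n)}},\overline{N_{10}^{(n)}},\overline{N_{01}^{(n)}},\overline{N_{00}^{(n)}})\right|>\varepsilon\right)\le\mathbf{P}\!\left(\bigl|P_{AB}^{(n)}-p_A^{(n)}p_B^{(n)}\bigr|\ge\delta\right)\to0,
\]
which is the assertion. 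I expect the main obstacle to be the variance estimate for $N_{11}^{(n)}$: it is elementary but delicate because $B$ is not a product measure, so one must either carry out the pair-of-pairs second-moment bookkeeping carefully (tracking the $O(1/n)$ sampling-without-replacement corrections) or set up the permutation concentration inequality precisely; a secondary, more conceptual point is pinning down exactly which assumptions on $(A^{(n)},s^{(n)})$ keep $p_A^{(n)},p_B^{(n)}$ away from the values at which the particular index ceases to be continuous.
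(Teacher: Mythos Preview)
Your proposal is correct and follows the same skeleton as the paper: reduce via scale-invariance to the single random variable $P_{AB}^{(n)}$, show $P_{AB}^{(n)}-p_A^{(n)}p_B^{(n)}\stackrel{P}{\to}0$ by Chebyshev, and transfer through continuity of $V^{(p)}$. The paper carries out exactly your first route for the variance---the pair-of-pairs second-moment expansion split by $|e\cap e'|\in\{0,1,2\}$---and only obtains $\Var(N_{11})=o(n^4)$ after a somewhat laborious computation; your bounded-difference alternative is both cleaner and sharper ($O(n^3)$), so if you want to avoid the bookkeeping that is the better choice. On the continuity side you are actually more careful than the paper, which simply invokes the continuous mapping theorem; since $p_A^{(n)},p_B^{(n)}$ vary with $n$ and need not converge, your uniform-continuity argument (or, equivalently, a subsequence-plus-compactness patch) is what makes the step rigorous, and your remark about needing $(p_A^{(n)},p_B^{(n)})$ to stay in a compact subset of the domain of $V^{(p)}$ is precisely the hidden assumption the paper glosses over.
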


\begin{proof}

We prove the equivalent statement
\[
V^{(p)}\left(P_{AB}^{(n)},p_A^{(n)},p_B^{(n)}\right)-V^{(p)}\left(p_A^{(n)}p_B^{(n)},p_A^{(n)},p_B^{(n)}\right)\stackrel{P}{\rightarrow} 0\,.
\]
We first prove that $P_{AB}^{(n)}-p_A^{(n)}p_B^{(n)}\stackrel{P}{\rightarrow}0$ so that the above follows from the continuous mapping theorem.
Chebychev's inequality gives
\[
\mathbf{P}\left(\big|P_{AB}^{(n)}-p_A^{(n)}p_B^{(n)}\big|>\varepsilon\right)\leq \frac{1}{{n\choose 2}^2\varepsilon^2}\text{Var}\left(N_{11}^{(n)}\right)\rightarrow0.
\]
The last step follows from the fact that $\text{Var}(N_{11})=o(n^4)$, as we will prove in the remainder of this section.
Even though in the definition, $A$ is fixed while $B$ is randomly permuted, it is convenient to equivalently consider both clusterings are randomly permuted for this proof.

We will show that Var$(N_{11})=o(n^4)$.
To compute the variance, we first inspect the second moment.
Let $A(S)$ denote the indicator function of the event that all elements of $S\subset \{1,\dots,n\}$ are in the same cluster in $A$.  Define $B(S)$ similarly and let $AB(S)=A(S)B(S)$. Let $e,e_1,e_2$ range over subsets of $\{1,\dots,n\}$ of size $2$. We write
\begin{align*}
N_{11}^2=&\left(\sum_e AB(e)\right)^2\\
=&\sum_{e_1,e_2} AB(e_1)AB(e_2)\\
=&\sum_{|e_1\cap e_2|=2}AB(e_1)AB(e_2)
+\sum_{|e_1\cap e_2|=1} AB(e_1)AB(e_2)
+\sum_{|e_1\cap e_2|=0} AB(e_1)AB(e_2)\\
=&N_{11}+\sum_{|e_1\cap e_2|=1} AB(e_1\cup e_2)
+\sum_{e_1\cap e_2=\emptyset} AB(e_1)AB(e_2).
\end{align*}
We take the expectation
\begin{align*}
\mathbf{E}[N_{11}^2]=\mathbf{E}[N_{11}]+6{n\choose3}\mathbf{E}[AB(\{v_1,v_2,v_3\})]
&+{n\choose2}{n-2\choose 2}\mathbf{E}[AB(e_1)AB(e_2)],
\end{align*}
where $v_1,v_2,v_3\in V$ distinct and $e_1\cap e_2=\emptyset$. The first two terms are obviously $o(n^4)$. We inspect the last term
\begin{equation}\label{eq:var1}
{n\choose 2}{n-2\choose 2}\mathbf{E}[AB(e_1)AB(e_2)]
={n\choose 2}\sum_{i,j}\mathbf{P}(e_1\subset A_i\cap B_j)
\times{n-2\choose2}\mathbf{E}[AB(e_2)|e_1\subset A_i\cap B_j]\,.
\end{equation}
Now we rewrite $\mathbf{E}[N_{11}]^2$ to
\[
\mathbf{E}[N_{11}]^2={n\choose 2}\sum_{i,j}\mathbf{P}(e_1\subset A_i\cap B_j){n\choose2}\mathbf{E}[AB(e_2)].
\]
Note that ${n\choose2}\mathbf{E}[AB(e_2)]>{n-2\choose2}\mathbf{E}[AB(e_2)]$ so that the difference between \eqref{eq:var1} and $\mathbf{E}[N_{11}]^2$ can be bounded by 
\begin{align*}
    {n\choose 2}{n-2\choose2}\sum_{i,j}\mathbf{P}(e_1\subset A_i\cap B_j)
    \cdot(\mathbf{E}[AB(e_2)|e_1\subset A_i\cap B_j]-\mathbf{E}[AB(e_2)]).
\end{align*}
As ${n\choose 2}{n-2\choose2}=O(n^4)$, what remains to be proven is
\begin{align*}
\sum_{i,j}\mathbf{P}(e_1\subset A_i\cap B_j)
\cdot\left(\mathbf{E}[AB(e_2)|e_1\subset A_i\cap B_j]-\mathbf{E}[AB(e_2)]\right)
=o(1).
\end{align*}
Note that it is sufficient to prove that
\[
\mathbf{E}[AB(e_2)|e_1\subset A_i\cap B_j]-\mathbf{E}[AB(e_2)]=o(1),
\]
for all $i,j$.
Note that $\mathbf{E}[AB(e_2)]=m_Am_B/N^2$, while
\begin{align*}
\mathbf{E}[AB(e_2)|e_1\subset A_i\cap B_j]
=\frac{(m_A-(2a_i-3))(m_B-(2b_j-3))}{(N-(2n-3))^2}.
\end{align*}
Hence, the difference will be given by
\begin{align*}
&\frac{(m_A-(2a_i-3))(m_B-(2b_j-3))}{(N-(2n-3))^2}-\frac{m_Am_B}{N^2}\\
=&\frac{N^2(m_A-(2a_i-3))(m_B-(2b_j-3))}{N^2(N-(2n-3))^2}
-\frac{(N-(2n-3))^2m_Am_B}{N^2(N-(2n-3))^2}\\
=&\frac{N^2((2a_i-3)(2b_j-3)-m_A(2b_j-3)-m_B(2a_i-3))}{N^2(N-(2n-3))^2}
+\frac{m_Am_B(2N(2n-3)-(2n-3)^2)}{N^2(N-(2n-3))^2}\\
=&\frac{((2a_i-3)(2b_j-3)-m_A(2b_j-3)-m_B(2a_i-3))}{(N-(2n-3))^2}
+\frac{m_Am_B}{N^2}\frac{(2N(2n-3)-(2n-3)^2)}{(N-(2n-3))^2}\\
=&\frac{O(n^3)}{(N-(2n-3))^2}+\frac{m_Am_B}{N^2}\frac{O(n^3)}{N^2(N-(2n-3))^2}\\
=&o(1),
\end{align*}
as required.

\end{proof}

\subsection{Statistical Tests for Constant Baseline}\label{app:stat_tests}
In this section, we provide two statistical tests: one test to check whether an index $V$ satisfies the constant baseline property and another to check whether $V$ has a selection bias towards certain cluster sizes.

\paragraph{Checking constant baseline.}
Given a reference clustering $A$ and a number of cluster sizes specifications $s_1,\dots,s_k$, we test the null hypothesis that
\[
\mathbf{E}_{B\sim\mathcal{C}(s_i)}[V(A,B)]
\]
is constant in $i=1,\dots,k$.
We do so by using one-way Analysis Of Variance (ANOVA).
For each cluster sizes specification, we generate $r$ clusterings.
Although ANOVA assumes the data to be normally distributed, it is known to be robust for sufficiently large groups (i.e., large $r$).

\paragraph{Checking selection bias.}
In \citep{Romano2014} it is observed that some indices with a constant baseline do have a \emph{selection bias}; when we have a pool of random clusterings of various sizes and select the one that has the highest score w.r.t.~a reference clustering, there is a bias of selecting certain cluster sizes.
We test this bias in the following way:
given a reference clustering $A$ and cluster sizes specifications $s_1,\dots,s_k$, we repeatedly generate $B_1\sim\mathcal{C}(s_1),\dots,B_k\sim\mathcal{C}(s_k)$.
The null-hypothesis will be that each of these clusterings $B_i$ has an equal chance of maximizing $V(A,B_i)$.
We test this hypothesis by generating $r$ pools and using the Chi-squared test.

We emphasize that these statistical tests cannot prove whether an index satisfies the property or has a bias. Both will return a confidence level $p$ with which the null hypothesis can be rejected.
Furthermore, for an index to not have these biases, the null hypothesis should be true for all choices of $A,s_1,\dots,s_k$, which is impossible to verify statistically.

The statistical tests have been implemented in Python and 
the code is available at \url{https://github.com/MartijnGosgens/validation_indices}.
We applied the tests to the indices of Tables~\ref{tab:GeneralRequirements} and~\ref{tab:PairCountingRequirements}. 
We chose $n=50,100,150,\dots,1000$ and $r=500$. For the cluster sizes, we define the \textit{balanced cluster sizes} $BS(n,k)$ to be the cluster-size specification for $k$ clusters 
of which $n-k*\lfloor n/k\rfloor$ clusters have size $\lceil n/k \rceil$ while the remainder have size $\lfloor n/k \rfloor$.
Then we choose $A^{(n)}$ to be a clustering with sizes $BS(n,\lfloor n^{0.5} \rfloor)$ and consider candidates with sizes $s_1^{(n)}=BS(n,\lfloor n^{0.25} \rfloor),s_2^{(n)}=BS(n,\lfloor n^{0.5} \rfloor),s_3^{(n)}=BS(n,\lfloor n^{0.75} \rfloor)$.
For each $n$, the statistical test returns a $p$-value. We use Fisher's method to combine these $p$-values into one single $p$-value and then reject the constant baseline if $p<0.05$.
The obtained results agree with
Tables~\ref{tab:GeneralRequirements} and~\ref{tab:PairCountingRequirements}
except for Correlation Distance, which is so close to having a constant baseline that the tests are unable to detect it.

\subsection{Illustrating Significance of Constant Baseline}\label{apx:cb_experiments}

In this section, we conduct two experiments illustrating the biases of various indices. We perform two experiments that allow us to identify the direction of the bias in different situations. Our reference clustering corresponds to the expert-annotated clustering of the production experiment described in Section~\ref{sec:experiment} and Appendix~\ref{apx:experimentProduction}, where $n=924$ items are grouped into $k_A=431$ clusters (305 of them consist of a single element).

In the first experiment, we randomly cluster the items into $k$ approximately equally sized clusters for various $k$. Figure~\ref{fig:k_experiment} shows the averages and $90\%$ confidence bands for each index. It can be seen that some indices (e.g., NMI and Rand) have a clear increasing baseline while others (e.g., Jaccard and VI) have a decreasing baseline. In contrast, all unbiased indices 
have a constant baseline. 

In Section~\ref{sec:constant_baseline} we argued that these biases could not be described in terms of the number of clusters alone. Our second experiment illustrates that the bias also heavily depends on the sizes of the clusters. In this case, items are randomly clustered into $32$ clusters, $31$ of which are ``small'' clusters of size $s$ while one cluster has size $n-31\cdot s$, where $s$ is varied between $1$ and $28$. In Figure~\ref{fig:s_experiment}, that the biases are clearly visible. This shows that, even when fixing the number of clusters, biased indices may heavily distort an experiment's outcome.

Finally, recall that we have proven that the baseline of CD is only asymptotically constant. Figures~\ref{fig:k_experiment} and~\ref{fig:s_experiment} show that for practical purposes its baseline can be considered constant.

\begin{figure}
    \centering
    \includegraphics[width=0.8\textwidth]{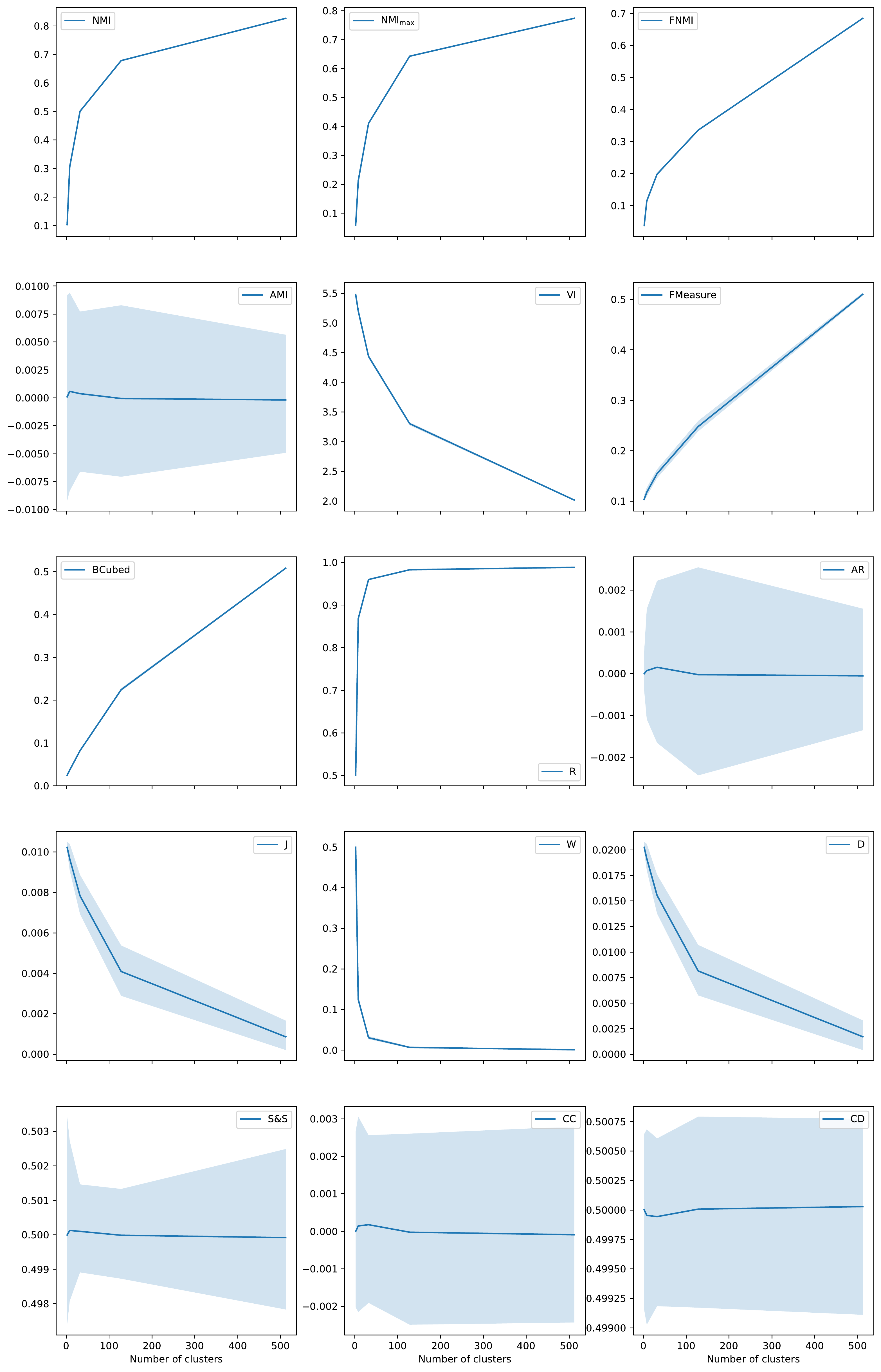}
    \caption{The reference clustering of Appendix~\ref{apx:experimentProduction} ($n=924$ and $k_A=431$) is compared to random clusterings. Each clustering consists of $k$ approximately equally-sized clusters, where $k$ is varied between $2$ and $512$. For each $k$, 200 random clusterings are generated. For each index, we plot the average score, along with a $90\%$ confidence band.}
    \label{fig:k_experiment}
\end{figure}

\begin{figure}
    \centering
    \includegraphics[width=0.8\textwidth]{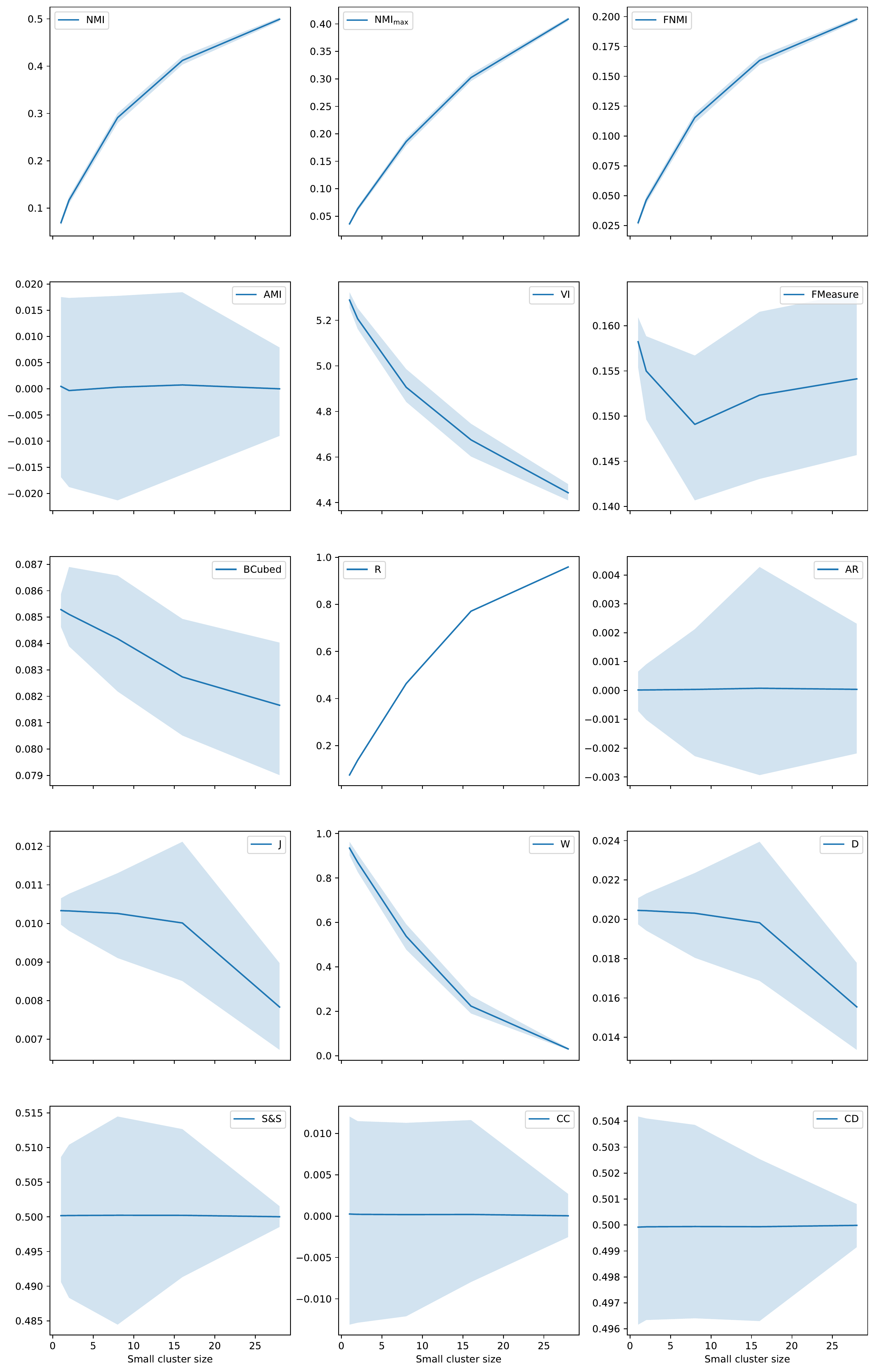}
    \caption{The reference clustering of Appendix~\ref{apx:experimentProduction} ($n=924$ and $k_A=431$) is compared to random clusterings. Each clustering consists of 31 ``small'' clusters of size $s$ while the last cluster has size $924-31\cdot s$, where $s$ is varied between $1$ and $28$. For each $s$, 200 random clusterings are generated. For each index, we plot the average score, along with a $90\%$ confidence band.}
    \label{fig:s_experiment}
\end{figure}

\section{Additional Results}

\subsection{Proof of Theorem~\ref{thm:equivMonotonicity}}\label{apx:ProofMonotonicity}

Let $B'$ be an $A$-consistent improvement of $B$. We define
\[
B\otimes B'=\{B_{j}\cap B'_{j'}|B_j\in B,B'_{j'}\in B',B_{j}\cap B'_{j'}\neq\emptyset\}
\]
and show that $B\otimes B'$ can be obtained from $B$ by a sequence of perfect splits, while $B'$ can be obtained from $B\otimes B'$ by a sequence of perfect merges.
Indeed, the assumption that $B'$ does not introduce new disagreeing pairs guarantees that any $B_j\in B$ can be split into $B_j\cap B'_{1},\dots,B_j\cap B'_{k_{B'}}$ without splitting over any intra-cluster pairs of $A$.
Let us prove that $B'$ can be obtained from $B\otimes B'$ by perfect merges.
Suppose there are two $B''_1,B''_2\in B\otimes B'$ such that both are subsets of some $B'_{j'}$.
Assume that this merge is not perfect, then there must be $v\in B''_1,w\in B''_2$ such that $v,w$ are in different clusters of $A$.
As $v,w$ are in the same cluster of $B'$, it follows from the definition of $B\otimes B'$ that $v,w$ must be in different clusters of $B$.
Hence, $v,w$ is an inter-cluster pair in both $A$ and $B$, while it is an intra-cluster pair of $B'$, contradicting the assumption that $B'$ is an $A$-consistent improvement of $B$.
This concludes the proof.

\subsection{Deviation of CD from Constant Baseline}\label{apx:BaselineCD}

\begin{theorem*}\label{thm:CdBaseline}
Given ground truth $A$ with a number of clusters $1<k_A<n$, a cluster-size specification $s$ and a random partition $B\sim\mathcal{C}(s)$, the expected difference between Correlation Distance and its baseline is given by
\begin{align*}
&\mathbf{E}_{B\sim\mathcal{C}(s)}[\mathrm{CD}(A,B)]-\frac{1}{2}
=-\frac{1}{\pi}\sum_{k=1}^\infty\frac{(2k)!}{2^{2k}(k!)^2}\frac{\mathbf{E}_{B\sim\mathcal{C}(s)}[\mathrm{CC}(A,B)^{2k+1}]}{2k+1}.
\end{align*}
\end{theorem*}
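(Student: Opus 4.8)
The plan is to expand the arccosine in a Taylor series around $\mathrm{CC}=0$ and integrate term by term against the distribution $\mathcal{C}(s)$. The starting point is the classical power series
\[
\arccos(x)=\frac{\pi}{2}-\sum_{k=0}^\infty\frac{(2k)!}{2^{2k}(k!)^2}\frac{x^{2k+1}}{2k+1},
\]
valid for $|x|\le 1$. Dividing by $\pi$ and using the definition $\mathrm{CD}(A,B)=\tfrac{1}{\pi}\arccos\mathrm{CC}(A,B)$, we get pointwise, for every partition $B$,
\[
\mathrm{CD}(A,B)-\frac12=-\frac{1}{\pi}\sum_{k=0}^\infty\frac{(2k)!}{2^{2k}(k!)^2}\frac{\mathrm{CC}(A,B)^{2k+1}}{2k+1}.
\]
Taking expectations over $B\sim\mathcal{C}(s)$ and interchanging sum and expectation would give exactly the claimed identity (after observing that the $k=0$ term contributes $\mathbf{E}[\mathrm{CC}(A,B)]$, which equals $0$ by the asymptotic/exact constant-baseline linearity argument for $\mathrm{CC}$ — $\mathrm{CC}$ is linear in $p_{AB}$ for fixed $p_A,p_B$ and vanishes at $\overline{N_{11}}$, so $\mathbf{E}_{B\sim\mathcal{C}(s)}[\mathrm{CC}(A,B)]=0$; hence the $k=0$ term drops and the sum may be started at $k=1$).

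The one nontrivial point is justifying the interchange of the infinite sum and the expectation. I would argue this by dominated convergence (or, equivalently, absolute convergence of the double sum): since $|\mathrm{CC}(A,B)|\le 1$ always, the partial sums of $\sum_k \frac{(2k)!}{2^{2k}(k!)^2}\frac{\mathrm{CC}(A,B)^{2k+1}}{2k+1}$ are uniformly bounded in $B$ by $\sum_k \frac{(2k)!}{2^{2k}(k!)^2}\frac{1}{2k+1}=\arccos(-1)=\pi$ (the series converges at the endpoint $x=1$, being an alternating-free positive series summing to $\pi/2$ for $x=1$; at $x=-1$ it sums to $-\pi/2$). Since $\mathcal{C}(s)$ is a finite (discrete) probability measure — there are only finitely many partitions of $n$ elements — the interchange is in fact elementary: one may simply swap a finite expectation with the limit of partial sums, and the uniform bound guarantees the limit passes through. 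So dominated convergence over a finite probability space settles it with no measure-theoretic subtlety.

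I do not anticipate a genuine obstacle here; the result is essentially a restatement of the arccosine Taylor series integrated against the randomness. The only things to be careful about are: (i) the convergence of the arccosine series up to and including the endpoints $x=\pm1$, which must be invoked since $\mathrm{CC}$ can in principle attain $\pm1$ (when $B=A$ or $B$ equals the complement), though this happens with small probability; and (ii) correctly identifying and discarding the $k=0$ term using the known vanishing of $\mathbf{E}_{B\sim\mathcal{C}(s)}[\mathrm{CC}(A,B)]$. After these two observations, the stated formula follows directly, and one would typically follow it up (in subsequent results or remarks) by bounding the tail of the series — e.g., via $\mathbf{E}[\mathrm{CC}^{2k+1}]\le\mathbf{E}[\mathrm{CC}^2]^{(2k+1)/2}$ or by a variance estimate for $\mathrm{CC}$ analogous to Theorem~\ref{thm:asympCB} — to conclude that the deviation from $\tfrac12$ is numerically negligible, which is the practical punchline referenced in Section~\ref{sec:constant_baseline}.
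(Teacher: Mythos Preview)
Your proposal is correct and follows essentially the same approach as the paper: expand $\arccos$ in its Taylor series around $0$, take expectations, and drop the $k=0$ term using $\mathbf{E}_{B\sim\mathcal{C}(s)}[\mathrm{CC}(A,B)]=0$. Your version is actually more careful than the paper's, which simply takes the expectation of both sides without justifying the interchange of sum and expectation; your observation that $\mathcal{C}(s)$ is a finite probability space together with the uniform bound $|\mathrm{CC}|\le 1$ makes this step rigorous.
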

\begin{proof}
We take the Taylor expansion of the arccosine around $\mathrm{CC}(A,B)=0$ and get
\begin{align*}
    \mathrm{CD}(A,B)=\frac{1}{2}-\frac{1}{\pi}\sum_{k=0}^\infty \frac{(2k)!}{2^{2k}(k!)^2}\frac{\mathrm{CC}(A,B)^{2k+1}}{2k+1}.
\end{align*}
We take the expectation of both sides and note that the first moment of CC equals zero, so the starting index is $k=1$. 
\end{proof}
	For $B\sim\mathcal{C}(s)$ and large $n$, the value $\textrm{CC}(A,B)$ will be concentrated around $0$.
This explains that in practice, the mean tends to be very close to the asymptotic baseline.

\subsection{Comparison with~\citet{Lei2017}}\label{apx:Lei2017}

\citet{Lei2017} describe the following biases for cluster similarity indices: \textit{NCinc}~--- the average value for a random guess increases monotonically with the Number of Clusters (NC) of the candidate; \textit{NCdec}~--- the average value for a random guess decreases monotonically with the number of clusters, and \textit{GTbias}~--- the direction of the monotonicity depends on the specific Ground Truth (GT), i.e., on the reference partition.
In particular, the authors conclude from numerical experiments that Jaccard suffers from NCdec and analytically prove that Rand suffers from GTbias, where the direction of the bias depends on the quadratic entropy of the ground truth clustering.
Here we argue that these biases are not well defined, suggest replacing them by well-defined analogs, and show how our analysis allows to easily test indices on these biases.

We argue that the quantity of interest should not be the \emph{number of clusters}, but the \emph{number of inter-cluster pairs} of the candidate.
Theorem~\ref{thm:asympCB} shows that the asymptotic value of the index depends on the number of intra-cluster pairs of both clusterings (or equivalently, the number of inter-cluster pairs).
The key insight is that more clusters do not necessarily imply more inter-cluster pairs.
For example, let $s$ denote a cluster-sizes specification for $3$ clusters each of size $\ell>2$. 
Now let $s'$ be the cluster-sizes specification for one cluster of size $2\ell$ and $\ell$ clusters of size $1$. 
Then, any $B\sim\mathcal{C}(s)$ will have 3 clusters and $N-3{\ell\choose 2}$ inter-cluster pairs while any $B'\sim\mathcal{C}(s')$ will have $\ell+1>3$ clusters and $N-{2\ell\choose2}<N-3{\ell\choose2}$ intra-cluster pairs.
For any ground truth $A$ with cluster-sizes $s$, we have $\mathbf{E}[J(A,B')]>\mathbf{E}[J(A,B)]$ because of a smaller amount of inter-cluster pairs
In contrast, \citet{Lei2017} classifies Jaccard as an NCdec index, so that we would expect the inequality to be the other way around, contradicting the definition of NCdec.
The PairInc and PairDec biases that are defined in Definition \ref{def:biases} are sound versions of these NCinc and NCdec biases because they depend on the expected number of agreeing pairs. This allows to analytically determine which bias a given pair-counting index has.

\section{Experiment}\label{apx:experiment}

\subsection{Synthetic Experiment}\label{apx:experimentSynth}

\begin{figure}
    \newcounter{rule}
    \setcounter{rule}{1}
    \newcounter{side}
    
    \renewcommand\thesubfigure{(\therule\alph{side})}
    
        \centering
    \setcounter{side}{1}
    \subfigure[FNMI, Rand, AdjRand, Jaccard, Dice, Wallace, FMeasure, BCubed]{\includegraphics[width = 0.25\textwidth]{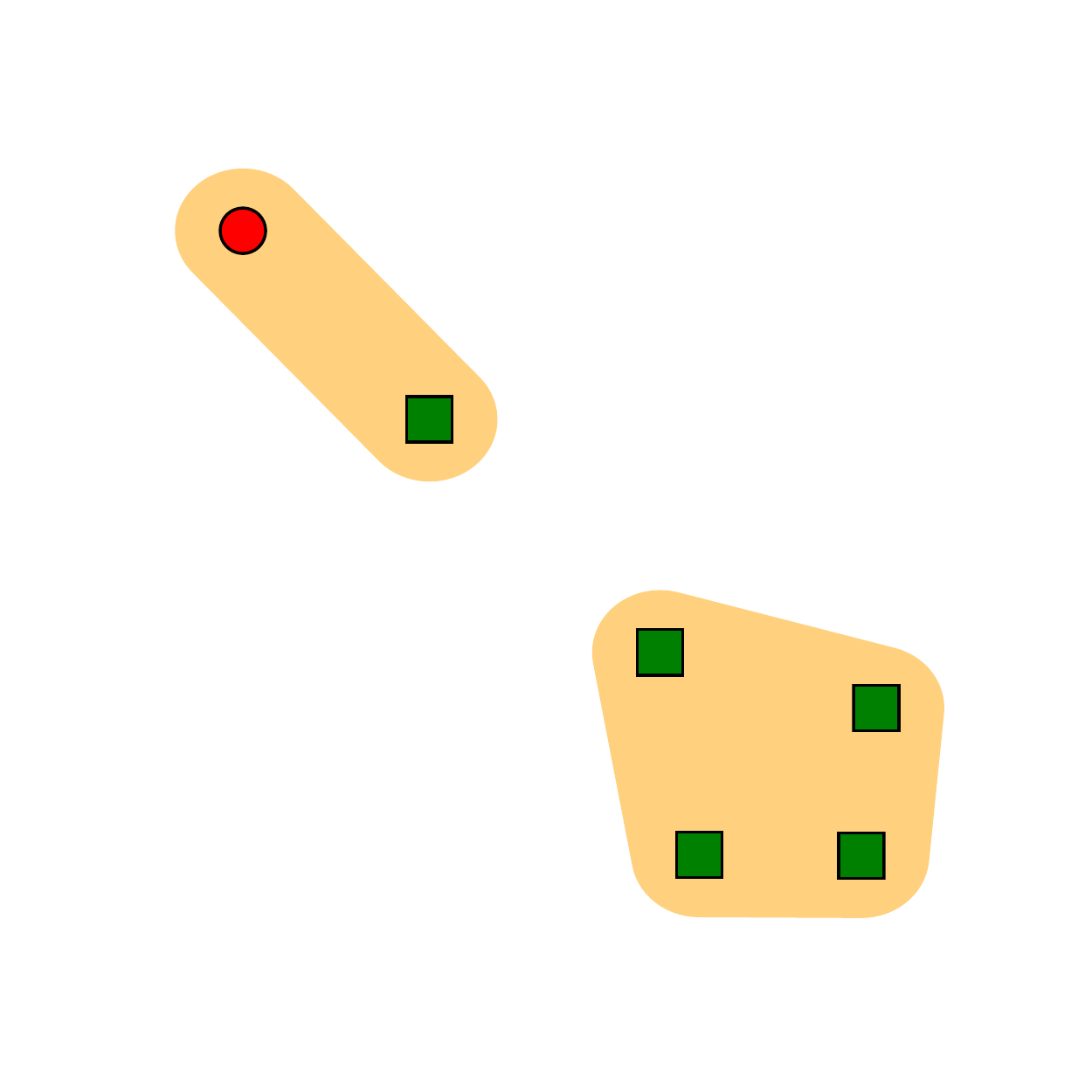}\label{fig:4L}}
    \hspace{40pt}\addtocounter{side}{1}
    \subfigure[NMI, NMI$_\text{max}$, VI, AMI, S\&S, CC, CD]{\includegraphics[width = 0.25\textwidth]{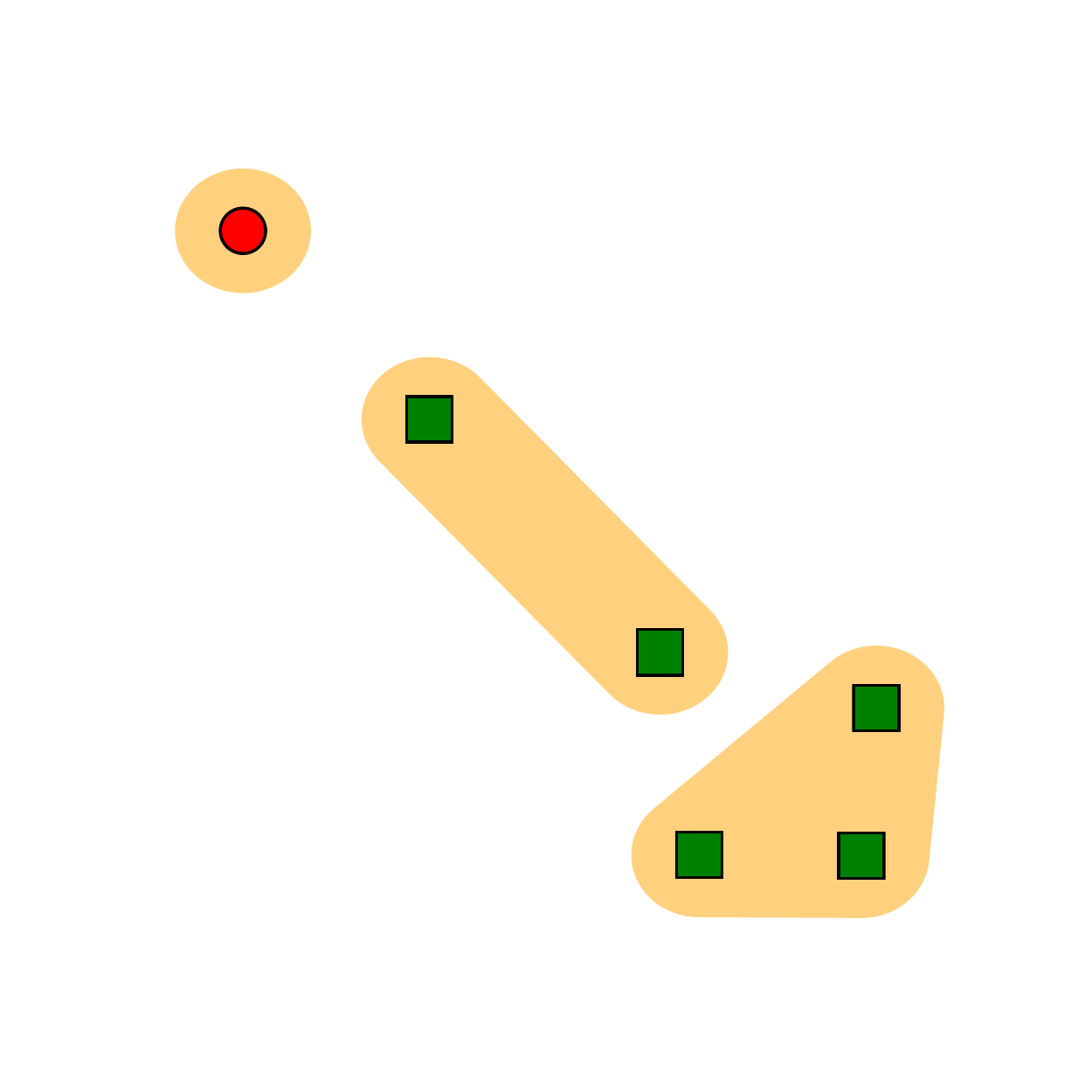}\label{fig:4R}}
    \addtocounter{rule}{1}
    
    \setcounter{side}{1}
    \subfigure[NMI, NMI$_\text{max}$, FNMI, Rand, FMeasure, BCubed]{\includegraphics[width = 0.25\textwidth]{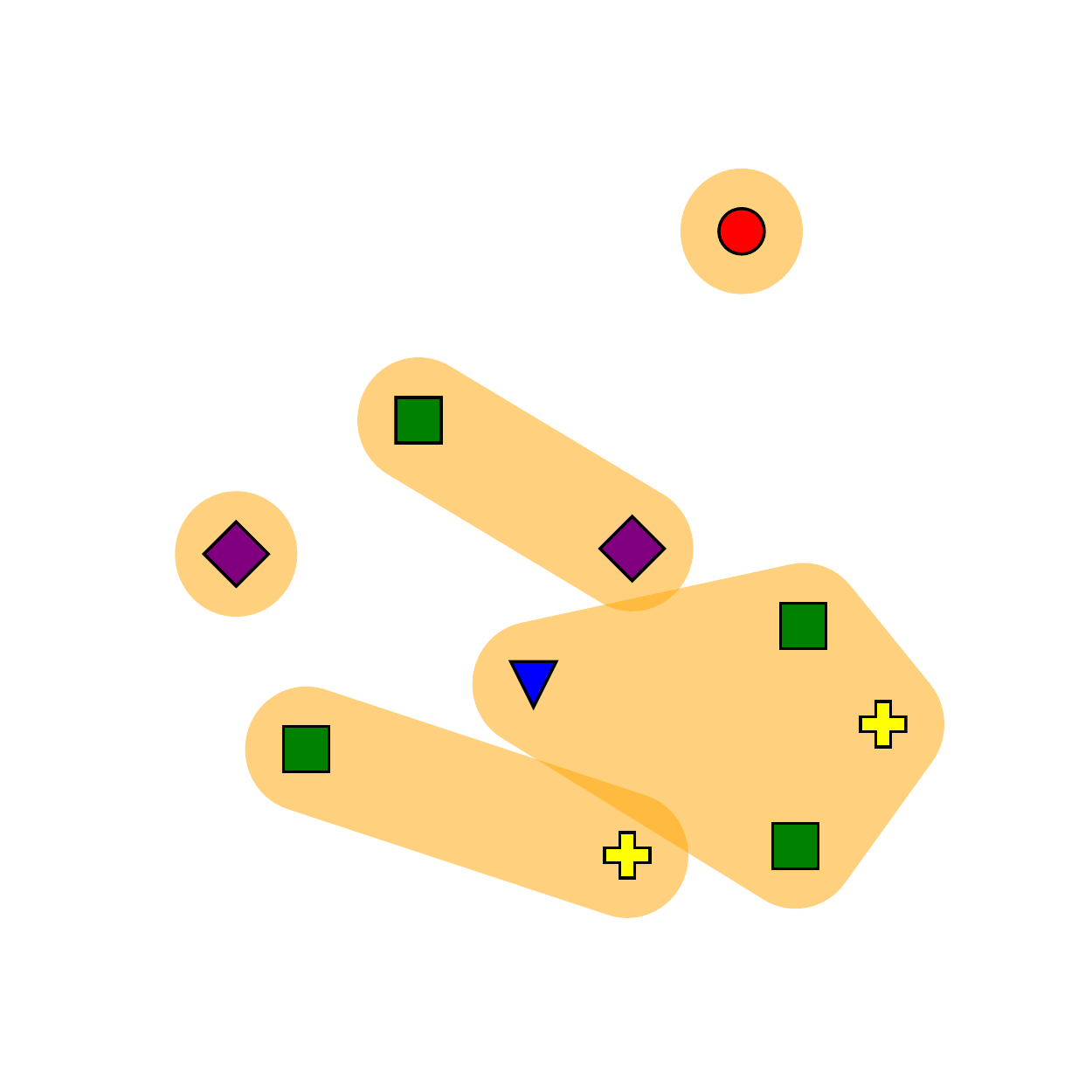}\label{fig:1L}}
    \hspace{40pt}\addtocounter{side}{1}
    \subfigure[VI, AMI, AdjRand, Jaccard, Dice, Wallace, S\&S, CC, CD]{\includegraphics[width = 0.25\textwidth]{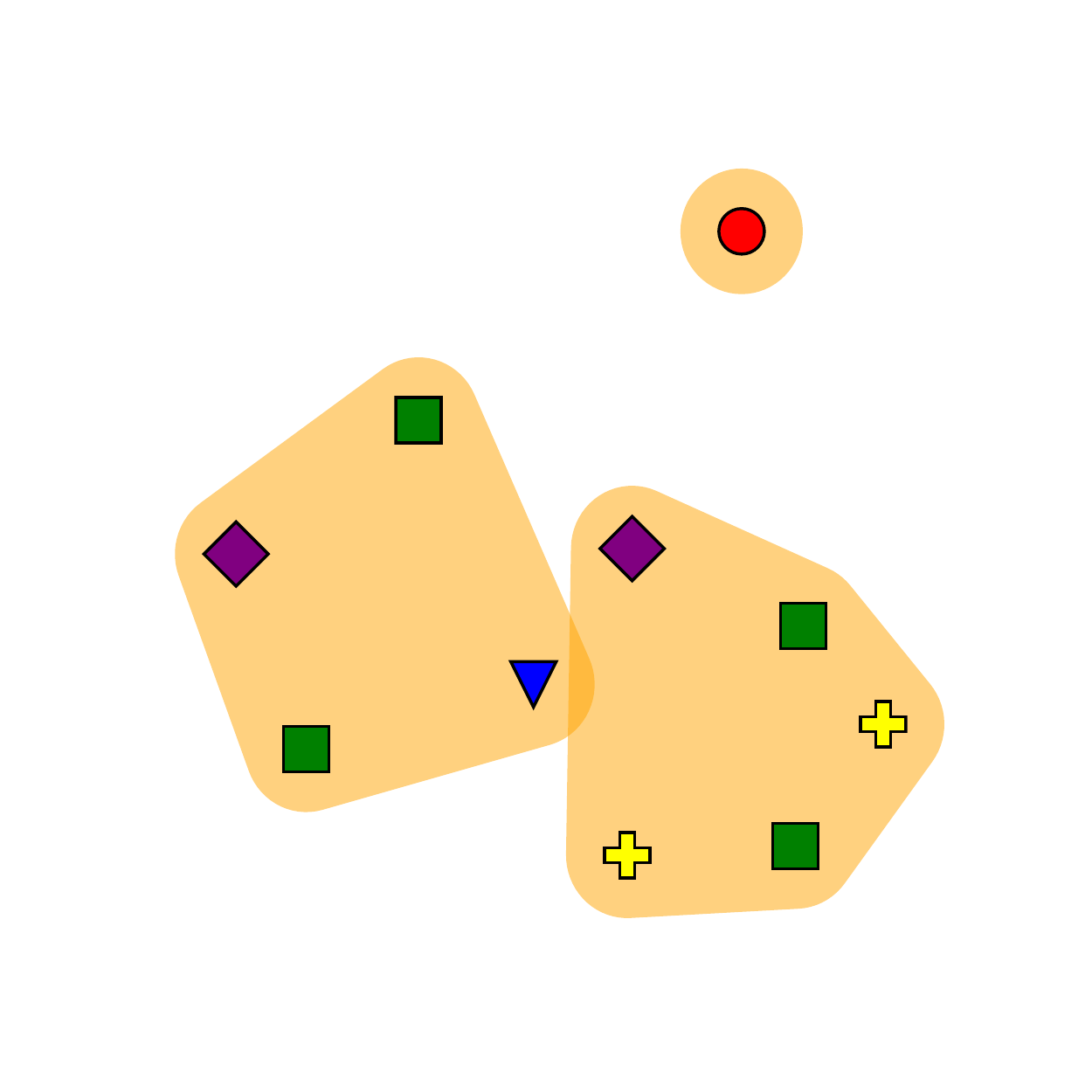}\label{fig:1R}}
    \addtocounter{rule}{1}
    
    \setcounter{side}{1}
    \subfigure[NMI$_\text{max}$, Rand, AdjRand, Jaccard, Dice, S\&S, CC, CD, FMeasure]{\includegraphics[width = 0.25\textwidth]{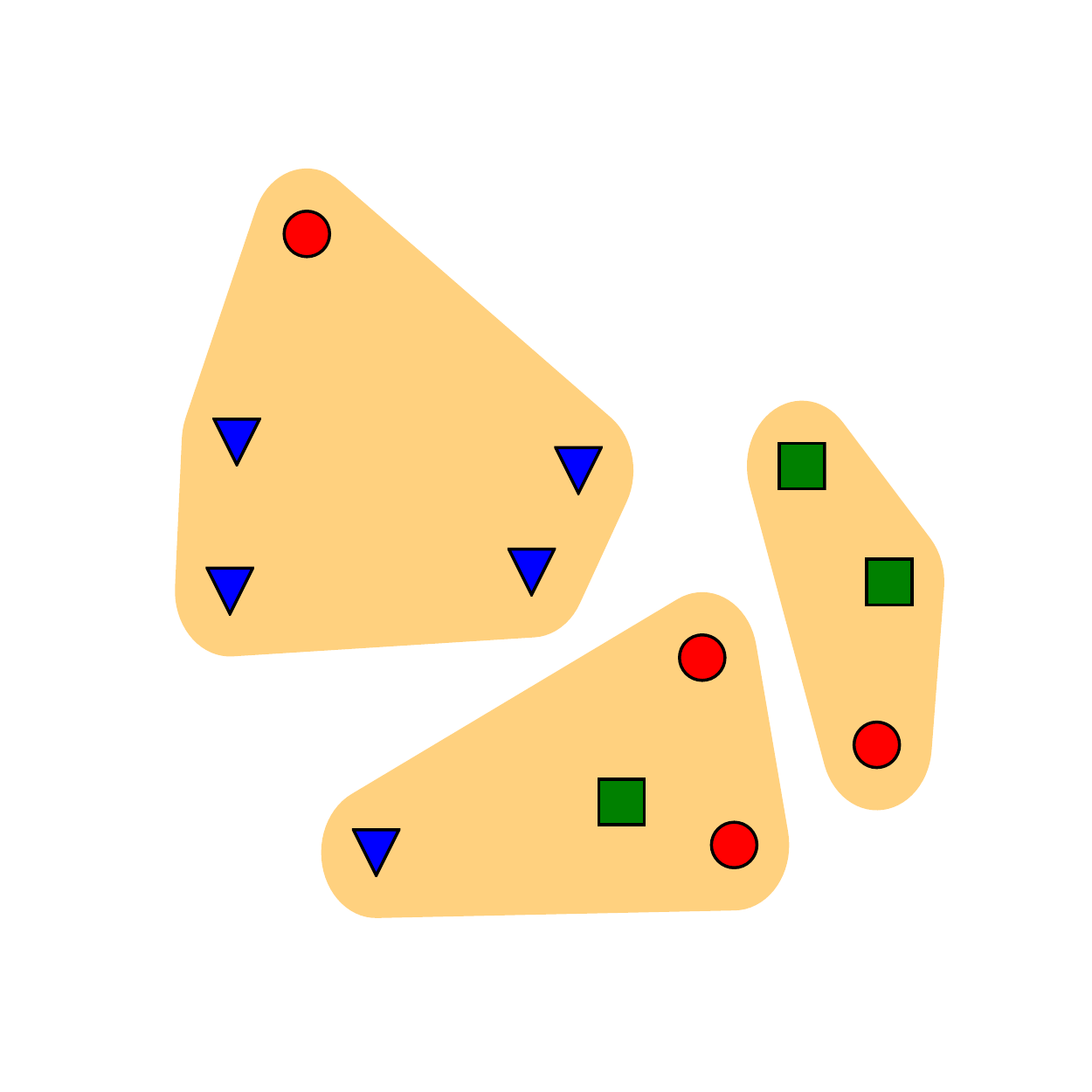}\label{fig:2L}}
    \hspace{40pt}\addtocounter{side}{1}
    \subfigure[NMI, VI, FNMI, AMI, Wallace, BCubed]{\includegraphics[width = 0.25\textwidth]{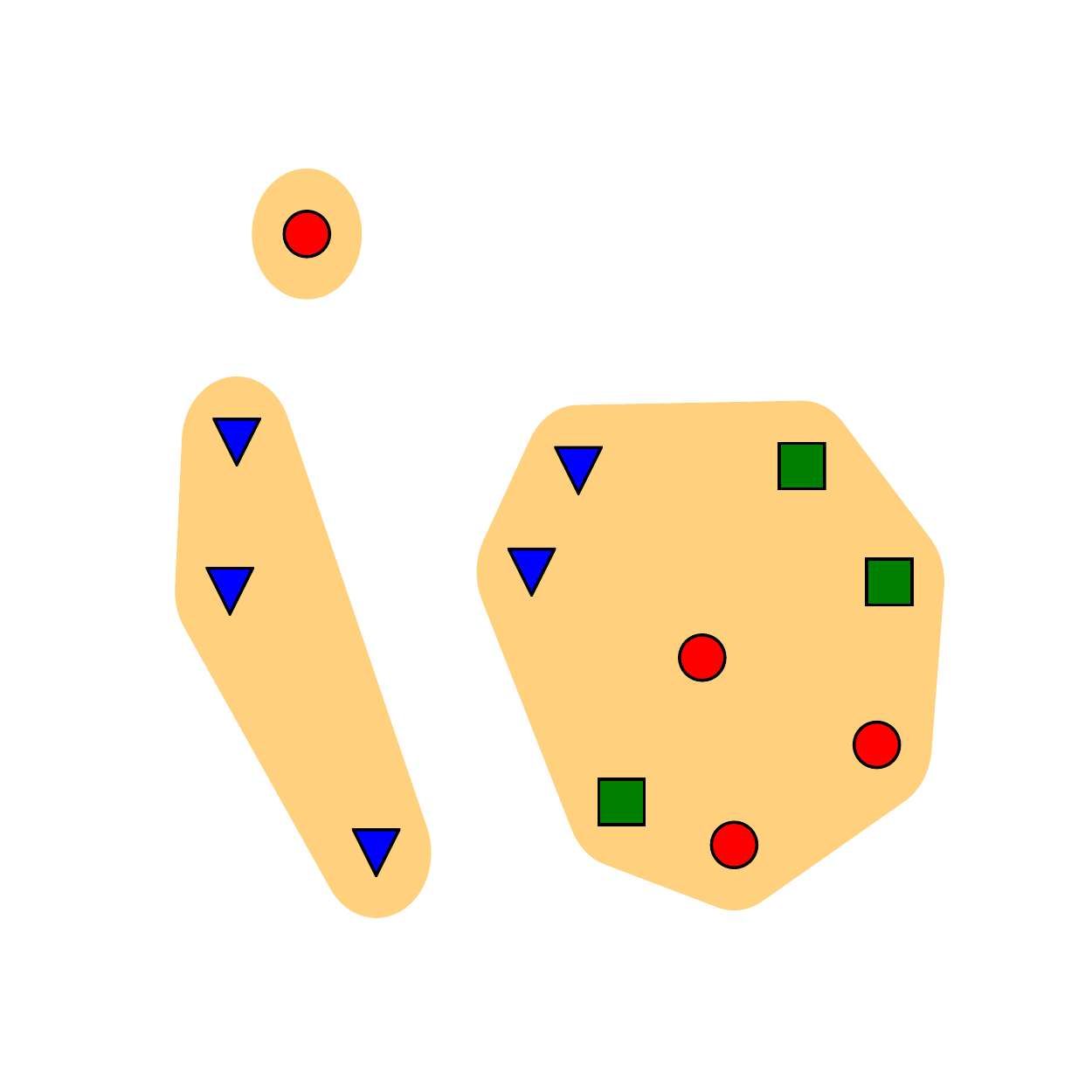}\label{fig:2R}}
    \addtocounter{rule}{1}
    
    \setcounter{side}{1}
    \subfigure[NMI, NMI$_\text{max}$, FNMI, AMI, Rand, AdjRand, CC, CD]{\includegraphics[width = 0.25\textwidth]{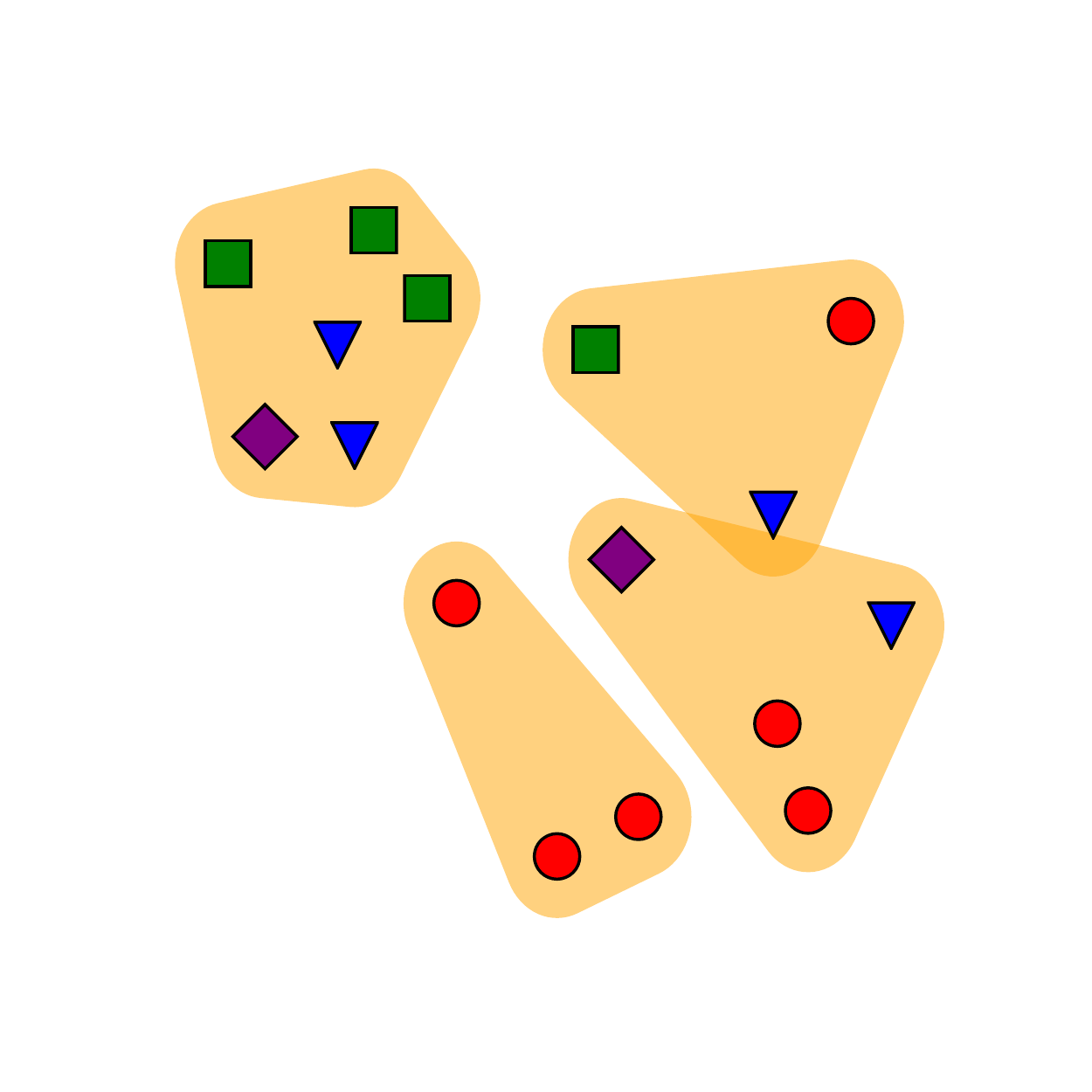}\label{fig:3L}}
    \hspace{40pt}\addtocounter{side}{1}
    \subfigure[VI, Jaccard, Dice, Wallace, S\&S, FMeasure, BCubed]{\includegraphics[width = 0.25\textwidth]{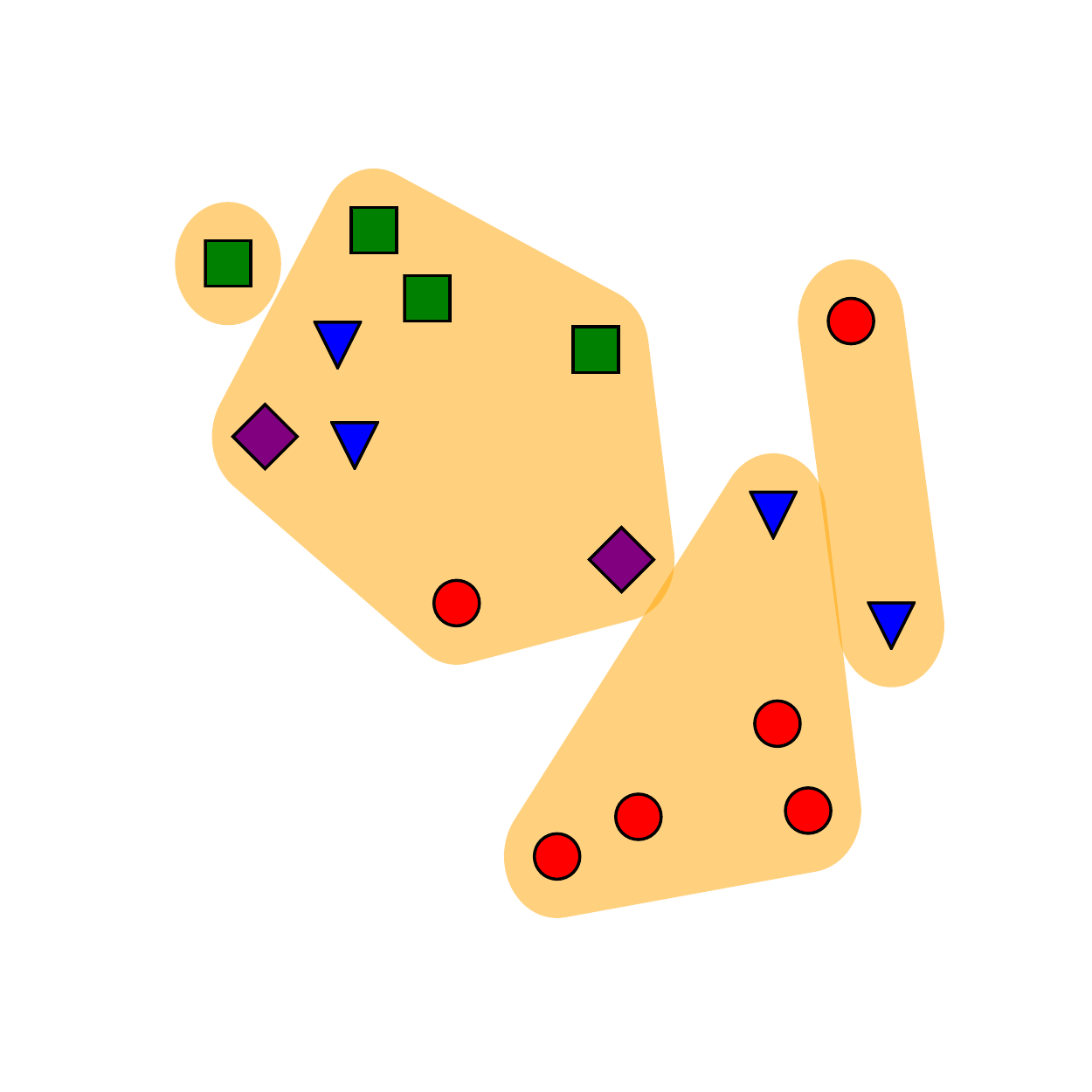}\label{fig:3R}}
    \addtocounter{rule}{1}
    \caption{Inconsistency of indices: each row corresponds to a triplet of partitions, shapes denote the reference partitions, the captions indicate which indices favor the corresponding candidate.}
    \label{fig:sythetic}
\end{figure}

In this experiment, we construct several simple examples to illustrate the inconsistency among the indices. 
Recall that two indices $V_1$ and $V_2$ are inconsistent for a triplet of partitions $(A,B_1,B_2)$ if $V_1(A,B_1) > V_1(A,B_2)$ but $V_2(A,B_1) < V_2(A,B_2)$. 

We take all indices from 
Tables~\ref{tab:GeneralRequirements} and~\ref{tab:PairCountingRequirements}
and construct several triplets of partitions to distinguish them all. 
Let us note that the pairs Dice vs Jaccard and CC vs CD cannot be inconsistent since they are monotonically transformable to each other. Also, we do not compare with SMI since it is much more computationally complex than all other indices. 
Thus, we end up with 13 indices and are looking for simple inconsistency examples. 

The theoretical minimum of examples needed to find inconsistency for all pairs of 13 indices is 4. We were able to find such four examples, see Figure~\ref{fig:sythetic}. In this figure, we show four inconsistency triplets. For each triplet, the shapes (triangle, square, etc.) denote the reference partition $A$. Left and right figures show candidate partitions $B_1$ and $B_2$. In the caption, we specify which similarity indices favor this candidate partition over the other one. 

It is easy to see that for each pair of indices, there is a simple example where they disagree. For example, NMI and NMI$_{\text{max}}$ are inconsistent for triplets 3. Also, we know that Jaccard in general favors larger clusters, while Rand and NMI often prefer smaller ones. Hence, they often disagree in this way (see the triplets 2 and 4).

\subsection{Experiments on Real Datasets}\label{apx:experimentReal}

In this section, we test whether the inconsistency affects conclusions obtained in experiments on real data. 

For that, we used the following 16 UCI datasets~\cite{Dua2019}: Arrhythmia, Balance Scale, Ecoli, Heart Statlog, Letter, Segment, Vehicle, WDBC, Wine, Wisc, Cpu, Iono, Iris, Sonar, Thy, Zoo (see~\citet{clustering_benchmark} for datasets and references).
The values of the ``target class'' field were used as a reference partition.

On these datasets, we ran 8 well-known clustering algorithms~\citep{clustering_algorithms}: KMeans, AffinityPropagation, MeanShift, AgglomerativeClustering, DBSCAN, OPTICS, Birch, GaussianMixture.
For AgglomerativeClustering, we used 4 different linkage types (`ward', `average', `complete', `single'). For GaussianMixture, we used 4 different covariance types (`spherical', `diag', `tied', `full'). For methods requiring the number of clusters as a parameter (KMeans, Birch, AgglomerativeClustering, GaussianMixture), we took up to 4 different values (less than 4 if some of them are equal): 2, ref-clusters, max(2,ref-clusters$/2$), min(items, $2\cdot$ref-clusters),
where ref-clusters is the number of clusters in the reference partition and items is the number of elements in the dataset.
For MeanShift, we used the option $cluster\_all=True$. All other settings were default or taken from examples in the sklearn manual.\footnotetext{
The code is available at \url{https://github.com/MartijnGosgens/validation_indices}.}

For all datasets, we calculated all the partitions for all methods described above.  
We removed all partitions having only one cluster or which raised any calculation error. Then, we considered all possible triplets $A, B_1, B_2$, where $A$ is a reference partition and $B_1$ and $B_2$ are candidates obtained with two different algorithms. We have 8688 such triplets in total. For each triplet, we check whether the indices are consistent. The inconsistency frequency is shown in Table~\ref{tab:statistics_full}. 
Note that Wallace is highly asymmetrical and does not satisfy most of the properties, so it is not surprising that it is in general very inconsistent with others. However, the inconsistency rates are significant even for widely used pairs of indices such as, e.g., Variation of Information vs NMI (40.3\%, which is an extremely high disagreement). Interestingly, the best agreeing indices are S\&S and CC which satisfy most of our properties. This means that conclusions made with these indices are likely to be similar.

Actually, one can show that all indices are inconsistent using only one dataset. This holds for 11 out of 16 datasets: heart-statlog, iris, segment, thy, arrhythmia, vehicle, zoo, ecoli, balance-scale, letter, wine. We do not present statistics for individual datasets since we found the aggregated Table~\ref{tab:statistics_full} to be more useful.

\begin{table}
\caption{Inconsistency of indices on real-world clustering datasets, \%}
\label{tab:statistics_full}
\vspace{3pt}
\centering
\begin{tabular}{l|rrrrrrrrrrrrr}
&NMI&NMI$_\text{max}$&VI&FNMI&AMI&R&AR&J&W&S\&S&CC&FMeas&BCub\\
\midrule
NMI&--&$5.4$&$40.3$&$17.3$&$9.2$&$13.4$&$15.7$&$35.2$&$68.4$&$20.1$&$18.5$&$31.7$&$32.0$\\
NMI$_\text{max}$&&--&$41.1$&$16.5$&$13.2$&$12.5$&$14.1$&$34.3$&$68.8$&$21.1$&$18.9$&$30.3$&$32.4$\\
VI&&&--&$34.7$&$41.8$&$45.2$&$37.6$&$17.1$&$28.8$&$36.0$&$37.2$&$18.1$&$13.6$\\
FNMI&&&&--&$23.3$&$24.0$&$19.0$&$29.9$&$57.0$&$26.7$&$23.8$&$27.5$&$26.7$\\
AMI&&&&&--&$21.1$&$17.3$&$33.3$&$61.3$&$15.1$&$13.6$&$35.0$&$34.4$\\
R&&&&&&--&$15.5$&$35.6$&$71.5$&$21.1$&$20.7$&$32.5$&$35.8$\\
AR&&&&&&&--&$23.5$&$59.4$&$11.7$&$8.3$&$25.3$&$28.1$\\
J&&&&&&&&--&$35.9$&$23.1$&$23.8$&$10.7$&$9.7$\\
W&&&&&&&&&--&$53.5$&$54.8$&$40.7$&$37.4$\\
S\&S&&&&&&&&&&--&$3.6$&$26.2$&$27.8$\\
CC&&&&&&&&&&&--&$27.0$&$28.8$\\
FMeas&&&&&&&&&&&&--&$7.7$\\
BCub&&&&&&&&&&&&&--\\
\end{tabular}
\end{table}

Finally, to illustrate the biases of indices, we compare two KMeans algorithms with $k=2$ and $k = 2\cdot$ref-clusters. The comparison is performed on 10 datasets (where both algorithms are successfully completed). The results are shown in Table~\ref{tab:biases}. In this table, biases and inconsistency are clearly seen. 
 We see that NMI and NMI$_\text{max}$ almost always prefer the larger number of clusters. In contrast, Variation of Information and Rand usually prefer $k=2$ (Rand prefers $k = 2$ in all cases). 
\begin{table}
\caption{Algorithms preferred by different indices}
\label{tab:biases}
\vspace{3pt}
\centering
\begin{tabular}{l|ccccccccccccc}
&NMI&NMI$_\text{max}$&VI&FNMI&AMI&R&AR&J&W&S\&S&CC&FMeas&BCub\\
\midrule
$k=2$&2&1&9&4&2&0&4&6&10&3&3&7&7\\
$k=2\cdot\text{ref}$&8&9&1&6&8&10&6&4&0&7&7&3&3\\
\end{tabular}
\end{table}

\subsection{Production Experiment}\label{apx:experimentProduction}

To show that the choice of similarity index may have an effect on the final quality of a production algorithm, we conducted an experiment within a major news aggregator system. The system aggregates all news articles to \textit{events} and shows the list of most important events to users. For grouping, a clustering algorithm is used and the quality of this algorithm affects the user experience: 
merging different clusters may lead to not showing an important event, while too much splitting may cause the presence of duplicate events. 

There is an algorithm $\mathcal{A}_{prod}$ currently used in production and two alternative algorithms $\mathcal{A}_1$ and $\mathcal{A}_2$.
To decide which alternative is better for the system, we need to compare them. For that, it is possible to either perform an online experiment or make an offline comparison, which is much cheaper and allows us to compare more alternatives. For the offline comparison, 
we manually grouped 1K news articles about volleyball, collected during a period of three days, into events.  
Then, we compared the obtained reference partition with partitions $A_{prod}$, $A_1$, and $A_2$ obtained by $\mathcal{A}_{prod}$, $\mathcal{A}_1$, and $\mathcal{A}_2$, respectively (see Table~\ref{tab:experiment}). 
According to most of the indices, $A_2$ is closer to the reference partition than $A_1$, and $A_1$ is closer than $A_{prod}$. 
However, according to some indices, including the well-known NMI$_{\mathrm{max}}$, NMI, and Rand, $A_1$ better corresponds to the reference partition than $A_2$.
As a result, we see that in practical application \textit{different similarity indices may differently rank the algorithms}.

To further see which algorithm better agrees with user preferences, we launched the following online experiment.
During one week we compared $\mathcal{A}_{prod}$ and $\mathcal{A}_1$ and during another~--- $\mathcal{A}_{prod}$ and $\mathcal{A}_2$ (it is not technically possible to compare $\mathcal{A}_1$ and $\mathcal{A}_2$ simultaneously). 
In the first experiment, $\mathcal{A}_1$ gave $+ 0.75\%$ clicks on events shown to users; in the second, $\mathcal{A}_2$ gave $+ 2.7\%$, which clearly confirms that these algorithms have different effects on user experience and $\mathcal{A}_2$ is a better alternative than $\mathcal{A}_1$.
Most similarity indices having nice properties, including CC, CD, and S\&S, are in agreement with user preferences. 
In contrast, AMI ranks $\mathcal{A}_1$ higher than $\mathcal{A}_2$.
This can be explained by the fact that AMI gives more weight to small clusters compared to pair-counting indices, which can be undesirable for this particular application, as we discuss in Section~\ref{sec:discussion}.

\begin{table}
    \caption{Similarity of candidate partitions to the reference one. In bold are the inconsistently ranked pairs of partitions. For some indices, we flipped the sign of the index, so that larger values correspond to better agreement.}
    \vskip 0.15in
    \label{tab:experiment}
    \centering
    \begin{tabular}{lrrr}

 & $A_{prod}$ & $A_1$ & $A_2$ \\	
 \midrule
\textbf{NMI} & 0.9326 & 0.9479 & 0.9482 \\
\textbf{NMI$_{\max}$} & 0.8928 &	\textbf{0.9457} & \textbf{0.9298} \\
\textbf{FNMI} & 0.7551 & \textbf{0.9304} & \textbf{0.8722} \\
\textbf{AMI} & 0.6710 & \textbf{0.7815} & \textbf{0.7533}
 \\
\textbf{VI} & -0.6996 & -0.5662 & -0.5503 \\
\textbf{FMeasure} & 0.8675 & 0.8782 &	0.8852 \\
\textbf{BCubed} & 0.8302 & 0.8431 & 	0.8543 \\
\textbf{R} & 0.9827 & \textbf{0.9915} & \textbf{0.9901} \\
\textbf{AR} & 0.4911 & 0.5999 & 0.6213 \\
\textbf{J}  & 0.3320 & 0.4329 & 0.4556 \\
\textbf{W} &	\textbf{0.8323} & \textbf{0.6287} & 0.8010 \\
\textbf{D} & 0.4985 & 0.6042 & 0.6260 \\
\textbf{S\&S} & 0.7926 & 0.8004 & 0.8262 \\
\textbf{CC} & 0.5376 & 0.6004 &	0.6371 \\
\textbf{CD} & -0.3193 & -0.2950 & -0.2802 \\
\bottomrule
    \end{tabular}
\end{table}

\end{document}